\newtheorem{theorem}{Theorem}
\newtheorem{lemma}[theorem]{Lemma}
\newtheorem{proposition}[theorem]{Proposition}
\newtheorem{corollary}[theorem]{Corollary}
\newtheorem{remark}[theorem]{Remark}
\newtheorem{assumption}[theorem]{Assumption}
\newcommand{\R}{\mathbb{R}}
\newcommand{\N}{\mathbb{N}}
\newcommand{\bP}{\mathbb{P}}
\newcommand{\bS}{\mathbb{S}}
\newcommand{\dps}{\displaystyle}
\newcommand{\ii}{\infty}
\newcommand\1{{\ensuremath {\mathds 1} }}
\renewcommand\phi{\varphi}
\newcommand{\gS}{\mathfrak{S}}
\newcommand{\cS}{\mathcal{S}}
\newcommand{\cC}{\mathcal{C}}
\newcommand{\cU}{\mathcal{U}}
\newcommand{\cF}{\mathcal{F}}
\newcommand{\norm}[1]{ \left\| #1 \right\|}
\renewcommand{\geq}{\geqslant}
\renewcommand{\leq}{\leqslant}
\renewcommand{\tilde}{\widetilde}
\newcommand{\eps}{\varepsilon}
\newcommand{\nn}{\nonumber}
\newcommand{\rd}{\mathrm{d}}
\newcommand{\dx}{\rd x}
\newcommand{\dy}{\rd y}
\crefname{figure}{Figure}{Figures}
\DeclareMathOperator{\supp}{supp}
\DeclarePairedDelimiter\myp()
\DeclarePairedDelimiter\myt\{\}
\DeclarePairedDelimiter\myb[]
\DeclarePairedDelimiter\abs\lvert\rvert
\newcommand{\bQ}{\mathbb{Q}}
\newcommand{\id}{\, \mathrm{d}}
\providecommand\given{}
\newcommand\SetSymbol[1][]{%
	\nonscript\:#1\vert
	\allowbreak
	\nonscript\:
	\mathopen{}}
\DeclarePairedDelimiterX\Set[1]\{\}{%
	\renewcommand\given{\SetSymbol[\delimsize]}
	#1
}
\NewDocumentCommand\normt{ s o m m }{
	\IfBooleanTF{#1}{
		\norm*{#3}
	}{
		\IfNoValueTF{#2}{
			\norm{#3}
		}{
			\norm[#2]{#3}
		}
	}
	_{#4}
}
\title[Classical DFT: Universal Bounds]{Classical Density Functional Theory: Representability and Universal Bounds}
\author[M. Jex]{Michal Jex}
\address{CNRS \& CEREMADE, Universit\'e Paris-Dauphine, PSL University, 75016 Paris, France and Department of Physics, Faculty of Nuclear Sciences and Physical Engineering, Czech Technical University in Prague, B\v rehov\'a 7, 11519 Prague, Czech Republic }
\email{michal.jex@fjfi.cvut.cz}
\author[M. Lewin]{Mathieu Lewin}
\address{CNRS \& CEREMADE, Universit\'e Paris-Dauphine, PSL University, 75016 Paris, France}
\email{mathieu.lewin@math.cnrs.fr}
\author[P. S. Madsen]{Peter S. Madsen}
\address{CNRS \& CEREMADE, Universit\'e Paris-Dauphine, PSL University, 75016 Paris, France}
\email{madsen@ceremade.dauphine.fr}
\date{\today.}
\begin{document}

\begin{abstract}
We provide upper and lower bounds on the lowest free energy of a classical system at given one-particle density $\rho(x)$. We study both the canonical and grand-canonical cases, assuming the particles interact with a pair potential which decays fast enough at infinity.
\end{abstract}

\maketitle

\tableofcontents

\section{Introduction}

Density functional theory (DFT) is a powerful tool used in quantum physics  and chemistry to model quantum electrons in atoms, molecules and solids~\cite{Lieb-83b,DreGro-90,ParYan-94,EngDre-11,LewLieSei-19_ppt,CanLinLiuFri-22_ppt}. However, DFT is based on a rather general mathematical scheme and it can be applied to many other situations. This work is devoted to the rigorous study of \emph{classical DFT}, which is used for finite or infinite systems of interacting classical particles.

Classical DFT is widely employed in materials science, biophysics, chemical engineering and civil engineering~\cite{Wu-06}. It has a much lower computational cost than the more precise molecular dynamics simulations, which are limited to small systems and short times~\cite{GelGubRadSli-99,Roth-10,LanGorNei-13}. Classical DFT is typically used at interfaces between liquid-gas, liquid-liquid (in fluid mixtures), crystal-liquid and crystal-gas phases at bulk coexistence. The density is then non constant in space and varies in the interfacial region between the two phases.

The physical theory of inhomogenous fluids goes essentially back to the 60s~\cite{MorHir-61,DeDominicis-62,DomMar-64,StiBuf-62,LebPer-63}. Functional methods and their applications to the theory of the structure of bulk fluids were described in \cite{Percus-64,Stell-64}. The realization that methods developed in the quantum case by Hohenberg-Kohn-Sham~\cite{HohKoh-64,KohSha-65} could be transferred to classical fluids arose in the middle of the 70s, in particular in the works of Ebner-Saam~\cite{EbnSaaStr-76,EbnPun-76,SaaEbn-77} and Yang \textit{et al}~\cite{YanFleGib-76}. Several authors then developed approximate free-energy functionals to calculate the density profile and surface tension of the liquid-gas interface. The square-gradient approximation could later be derived rather systematically, following the important works of Hohenberg-Kohn-Sham on the gradient expansion of the uniform (quantum) electron gas. Deriving efficient functionals for the solid-liquid transition was harder and took longer~\cite{RamYus-77,YamYus-79,HayOxt-81}. Well-known references on classical DFT are the two reviews by Evans~\cite{Evans-79,Evans-92}. Other important physical references on the subject include~\cite{HanMcDon-90,PliBer-06,Baus-87,BauLut-91,Singh-91,Lowen-02,EvaOetRotKah-16,HanMcD-13}.

Rigorous works on classical DFT are rather scarse. Most of the mathematical works are about proving that one can find an external potential $V$ whose interacting equilibrium Gibbs measure has any desired given density $\rho$. This is called the \emph{inverse} or \emph{dual problem} and justifies the use of density functional methods. In quantum DFT, $V$ is called the Kohn-Sham potential and its existence is unclear in most situations. However, in the classical case, $V$ is usually well defined.

The grand-canonical 1D hard-core gas was solved exactly in a celebrated work by Percus~\cite{Percus-76}, who provided an exact expression of the external potential $V$. This was used and extended in later works~\cite{VanDavPer-89,Percus-82,Percus-82b,Percus-97}. In two famous works~\cite{ChaChaLie-84,ChaCha-84}, Chayes, Chayes and Lieb proved in a quite general setting (in particular any space dimension $d$) the existence and uniqueness of the dual potential $V$ at any positive temperature $T>0$. At $T=0$, the canonical model can be reformulated as a multi-marginal optimal transport problem~\cite{CotFriKlu-13,CotFriPas-15,Pass-15,MarGerNen-17,SeiMarGerNenGieGor-17}, where $V$ is usually called the \emph{Kantorovich potential}. Its existence and properties are known in many cases~\cite{Kellerer-84,Pascale-15,ButChaPas-18} but uniqueness usually does not hold. The grand-canonical case was studied in the recent article~\cite{MarLewNen-22_ppt}. Most of these works are based on compactness arguments and do not furnish any quantitative information on the shape of the potential $V$ in terms of the given density $\rho$. In the recent paper~\cite{JanKunTsa-22}, a novel Banach inversion theorem was used to provide an explicit formula for $V$ in terms of $\rho$ in the form of a convergent series, under the assumption that $\rho$ is small in $L^\ii(\R^d)$. This is the equivalent of the Virial expansion for uniform systems.\footnote{Recall that one can express the constant density $\rho$ of an infinite gas as a convergent series in terms of the activity $z=e^{\beta\mu}$, in the regime $z\ll1$~\cite{Ruelle}. This corresponds to placing the system in the constant external potential $V(x)=-\mu$. Since $\rho\sim_{z\to0} z$, the series is invertible and any small uniform density is therefore representable by such a uniform potential, with $\mu\sim_{\rho\to0}\beta^{-1}\log\rho$.}

In this work and the companion paper~\cite{JexLewMad-23b_ppt} we do not discuss the dual potential $V$ and instead focus on more quantitative properties of the model depending on the shape of the density $\rho$. The case of the three dimensional Coulomb interaction $w(x)=|x|^{-1}$ or, more generally long range Riesz interactions $|x|^{-s}$ with $s<d$ has been the object of several recent works~\cite{LewLieSei-19_ppt,Lewin-22}. Here we always assume that the interaction potential $w$ decays fast enough at infinity and do not discuss more complicated long range potentials such as Coulomb.

Our main goal in this paper is to show \emph{universal local bounds} on the free energy $F_T[\rho]$ at given density $\rho\in L^1(\R^d)$. By local we mean that we only use terms in the form
$$\int_{\R^d}\rho(x)^p\,\dx,\qquad \int_{\R^d}\rho(x)^q\log\rho(x)\,\dx.$$
The admissible values of $p$ and $q$ will depend on the temperature $T$ as well as on the singularity of the interaction potential $w$ at the origin, that is, how strong the particles repel each other when they get close. Such universal bounds are important in DFT. They can help finding the natural form of approximate functionals to be used for practical computations.\footnote{As an example, in the quantum case the Lieb-Oxford inequality~\cite{LieOxf-80} was used to calibrate some famous functionals such as PBE and SCAN~\cite{Perdew-91,LevPer-93,PerBurErn-96,TaoPerStaScu-03,SunPerRuz-15,SunRuzPer-15,Perdew_etal-16,PerSun-22}.} In addition, these bounds will be useful in our next work~\cite{JexLewMad-23b_ppt} where we study the local density approximation.

Deriving simple \emph{lower bounds} is usually easy, under reasonable stability assumptions on the interaction potential $w$. Obtaining \emph{upper bounds} can be much more difficult. They require constructing a good trial state, but the constraint that the density is given and must be exactly reproduced can generate important mathematical complications.

The simplest trial state is obtained by taking i.i.d.~particles, that is, a factorized $N$-particle probability $(\rho/N)^{\otimes N}$ where $N=\int_{\R^d}\rho\in\N$. Doing so provides an upper bound on the free energy in terms of mean-field theory, often called in this context the Kirkwood-Monroe functional~\cite{KirMon-41}:
\begin{equation}
\frac12\iint_{\R^d\times\R^d}w(x-y)\rho(x)\rho(y)\,\dx\,dy+T\int_{\R^d}\rho(x)\log\rho(x)\,\dx.
 \label{eq:KirkwoodMonroe}
\end{equation}
This only makes sense when the pair interaction potential $w$ is locally integrable. If $w$ is globally integrable, one can use Young's inequality and estimate the first integral by the local energy $(\int_{\R^d}w_+/2)\int_{\R^d}\rho(x)^2\,\dx$, where $w_+:=\max(w,0)$ denotes the positive part. The simplest models of classical DFT use~\eqref{eq:KirkwoodMonroe} as a basis.

In classical statistical mechanics, it is often convenient to consider potentials $w$ diverging fast enough at the origin, which helps to stabilize the system~\cite{Ruelle,Dobrushin-64,DobMin-67,Georgii-11,Rebenko-98,PetReb-07}. This divergence implies that the particles can never get too close to each other, and this requires that the trial state contains rather strong correlations. A factorized state is not appropriate and~\eqref{eq:KirkwoodMonroe} is infinite. The simplest singular interaction is of course the hard-core potential $w(x)=(+\ii)\1(|x|< r_0)$, which is simply infinite over a ball and vanishes outside.

In this paper we provide two different constructions of a correlated trial state, which give reasonable upper bounds on the classical free energy at given density, for singular interaction potentials at the origin. Our first method uses some ideas from harmonic analysis in the form of a Besicovitch-type covering lemma~\cite{Guzman}. We cover space with cubes whose size is adapted to the local value of the density, and put essentially one particle per cube, with the constraint that the cubes are far enough from each other. This method works very well in the grand-canonical setting where the number of particles is allowed to fluctuate. In order to handle the canonical ensemble, a different construction is needed. We instead use techniques from optimal transport theory developed in~\cite{ColMarStr-19}, which give a rather good bound at zero temperature, $T=0$. For $T>0$ we couple this to the Besicovitch-type covering lemma and obtain an upper bound which is not as good as the grand-canonical one.

In~\cite{JexLewMad-23b_ppt} we will study the behavior of $F_T[\rho]$ in some particular regimes and the upper universal bounds derived here will be useful. Namely we will consider the thermodynamic limit where $\rho$ is essentially constant over a large domain as well as the local density approximation when $\rho$ varies slowly over big regions. Such regimes have been recently considered for the three dimensional Coulomb potential $w(x)=|x|^{-1}$ in~\cite{LewLieSei-18,LewLieSei-20}, for more general Riesz potentials in~\cite{CotPet-19,CotPet-19b} and for a special class of positive-type interactions in~\cite{Mietzsch-20}. The methods used in these works all rely on the assumption that the potential is positive-definite, and new ideas are necessary in the general (short-range) case.

\bigskip

\noindent{\textbf{Acknowledgement.}} ML thanks Rupert L. Frank for providing him with a preliminary version of the book~\cite{FraLapWei-LT}. This project has received funding from the European Research Council (ERC) under the European Union's Horizon 2020 research and innovation programme (grant agreement MDFT No 725528 of ML). MJ also received financial support from the Ministry of Education, Youth and Sport of the Czech Republic under the Grant No.\ RVO 14000.

\section{Main results}

\subsection{Free energies at given density}
%
This subsection is mainly devoted to precisely introducing models and notation used in the paper. Our main results are stated in the next subsections.

\subsubsection{The interaction potential $w$}
For convenience, we work in $\R^d$ with a general dimension $d\geq1$. The physical cases of interest are of course $d\in\{1,2,3\}$ but the proofs are the same for all $d$, except sometimes for $d=1$. We consider systems of indistinguishable classical particles  interacting through a short-range pair potential $ w $.
Throughout the paper, we work with an interaction satisfying the following properties.

\begin{assumption}[on the short-range potential $w$]
\label{de:shortrangenew}
	Let $ w :\R^d\to\R\cup\{+\ii\} $ be an even lower semi-continuous function satisfying the following properties, for some constant $\kappa>0$:
	\begin{enumerate}
		\item $ w $ is \emph{stable}, that is,
		\begin{equation}
		\label{eq:stabilitynew}
			\sum\limits_{1 \leq j < k \leq N} w \myp{x_j - x_k} \geq - \kappa N
		\end{equation}
		for all $ N \in \N $ and $ x_1, \dotsc, x_N \in \R^d $;

		\smallskip

		\item $w$ is \emph{upper regular}, that is, there exist $ r_0 \geq 0 $, $ 0 \leq \alpha \leq \infty $ and $s>d$ such that
		\begin{equation}
			w(x) \leq \kappa\left(\1(|x|<r_0)\left(\frac{r_0}{|x|}\right)^\alpha+ \frac{1}{1 + \abs{x}^s}\right).
			\label{eq:assumption_w}
		\end{equation}
	\end{enumerate}
\end{assumption}

The lower semi-continuity of $w$ will be used later to ensure that the energy is lower semi-continuous as a function of the one-particle density (see Remark~\ref{rmk:lsc} below). In statistical mechanics, the stability condition~\eqref{eq:stabilitynew} is used to ensure the existence of the thermodynamic limit~\cite{Ruelle}. On the other hand, upper bounds of the form~\eqref{eq:assumption_w} are sometimes used to get more information on the equilibrium states~\cite{Ruelle-70}. At infinity, we assume that our potential $w$ is bounded above by $|x|^{-s}$, which is integrable since $s>d$. It could of course decay faster. On the other hand, the parameter $ \alpha $ determines the allowed repulsive strength of the interaction at the origin. If $ \alpha = 0 $, then $ w $ is everywhere bounded above, and if $ 0 < \alpha < d $, then $ w $ has at most an integrable singularity at the origin. In particular, the positive part $ w_+$ is integrable over the whole of $\R^d$ (since we are interested in upper bounds, the negative part $w_-$ will not play a role in this paper). In the case where $ \alpha \geq d $, $ w $ can have a non-integrable singularity at the origin. If $ \alpha = \infty $, then $ w $ can have a hard-core. Our convention is that $(r_0/|x|)^\alpha=(+\ii)\1(|x|<r_0)$ for $\alpha=+\ii$. When $\alpha<\ii$ we can always assume that $r_0=1$, possibly after increasing $\kappa$.

Most short range potentials of physical interest are covered by Assumption~\ref{de:shortrangenew}, including for instance the simple hard-core and the Lennard-Jones potential $w(x)=a|x|^{-12}-b|x|^{-6}$.

\subsubsection{The canonical free energy}

In this subsection we define the canonical free energy $F_T[\rho]$ at given density $\rho$.

Suppose that we have $ N $ particles in $ \R^d $, distributed according to some Borel probability measure $ \bP $ on $ \R^{dN} $. Since the particles are indistinguishable, we demand that the measure $ \bP $ is symmetric, that is,
\begin{equation*}
	\bP(A_{\sigma \myp{1}}\times\cdots\times A_{\sigma \myp{N}})
	= \bP(A_1\times\cdots\times A_N)
\end{equation*}
for any permutation $ \sigma $ of $ \Set{1, \dotsc, N} $, and any Borel sets $A_1,...,A_N\subset\R^d$. The one-body density of such a symmetric probability $ \bP $ equals $N$ times the first marginal of $\bP$, that is,
\begin{equation*}
	\rho_{\bP} = N\int_{\R^{d \myp{N-1}}} \id \bP \myp{\cdot,x_2, \dotsc, x_N},
\end{equation*}
where the integration is over $ x_2,\dotsc,x_N $. Equivalently, $\rho_\bP(A)=N\bP(A\times(\R^d)^{N-1})$ for every Borel set $A$. Note the normalization convention $ \rho_{\bP}(\R^d) = N $. For a non-symmetric probability $\bP$ we define $\rho_\bP$ as the sum of the $N$ marginals.

Notice that any positive measure $\rho$ on $\R^d$ with $\rho(\R^d)=N\in\N$ arises from at least one $N$--particle probability measure $\bP$. One can take for instance $\bP=(\rho/N)^{\otimes N}$ for independent and identically distributed particles.

The pairwise average interaction energy of the particles is given by
\begin{equation*}
	\cU_{N} \myp{\bP}
	= \int_{\R^{dN}} \sum\limits_{1 \leq j < k \leq N} w \myp{x_j - x_k} \id \bP \myp{x_1, \dotsc, x_N}.
\end{equation*}
It could in principle be equal to $+\ii$, but it always satisfies $\cU_{N} \myp{\bP}\geq -\kappa N$ due to the stability condition on $w$ in Assumption~\ref{de:shortrangenew}.
When considering systems at positive temperature $T>0$, it is necessary to also take the entropy of the system into account.
It is given by
\begin{equation}
\label{eq:canentropy}
	\cS_N \myp{\bP} := - \int_{\R^{dN}} \bP \myp{x} \log \big(N! \, \bP \myp{x}\big) \id x.
\end{equation}
If $ \bP $ is not absolutely continuous with respect to the Lebesgue measure on $ \R^{dN} $, we use the convention that $\cS_N \myp{\bP}=-\ii$. The factor $N!$ appears because the particles are indistinguishable. In fact, we should think that $N!\,\bP$ defines a probability measure over $(\R^d)^N/\gS_N$ where $\gS_N$ is the permutation group. We need to make sure that $\cS_N(\bP)<+\ii$, which follows if we assume for instance that $\rho_\bP$ is absolutely continuous with $\int_{\R^d}\rho_\bP|\log\rho_\bP|<\ii$. This is due to the well-known inequality (see, e.g.,~\cite[Lemma~6.1]{MarLewNen-22_ppt})
\begin{equation}
 \cS_N \myp{\bP}\leq -\int_{\R^d}\rho_\bP(x)\log\rho_\bP(x)\,\dx+N.
 \label{eq:upper_bd_S_canonical}
\end{equation}
The latter follows immediately from writing the relative entropy of $\bP$ with respect to $(\rho/N)^{\otimes N}$, which is non-negative, and using $(N/e)^N\leq N!$.

The total free energy of the system in the state $ \bP $ at temperature $ T \geq 0 $ equals
\begin{equation}
\label{eq:FFcan}
	\mathcal{F}_T \myp{\bP}
	:={} \cU_{N} \myp{\bP} - T \cS_N \myp{\bP}
	= \int_{\R^{dN}} \sum\limits_{j < k} w \myp{x_j - x_k} \id \bP \myp{x} + T \int_{\R^{dN}} \bP \log \myp{N! \, \bP }.
\end{equation}
It can be equal to $+\ii$ but never to $-\ii$ due to the stability of $w$ and thanks to the inequality~\eqref{eq:upper_bd_S_canonical} if $T>0$ and $\int_{\R^d}\rho_\bP|\log\rho_\bP|<\ii$.

Throughout the paper, we will only consider systems with a given one-body density $\rho$, which is absolutely continuous with respect to the Lebesgue measure. At $T>0$ we also assume that $\int_{\R^d}\rho|\log\rho|<\ii$. This allows us to consider the minimal energy of $ N $-particle classical systems with density $ \rho $, given by
\begin{equation}
\label{eq:Fcan}
\boxed{
	F_T [\rho]
	:= \inf_{\rho_{\bP} = \rho} \mathcal{F}_T \myp{\bP}
}
\end{equation}
where the infimum is taken over $ N $-particle states $ \bP $ on $ \R^{dN} $ with one-particle density $ \rho_{\bP} $ equal to $ \rho $. At $T=0$, the entropy term disappears and we obtain
\begin{equation}
\label{eq:Fcan_T0}
	F_0 [\rho]
	:= \inf_{\rho_{\bP} = \rho}\int \sum\limits_{1\leq j < k\leq N} w \myp{x_j - x_k} \id \bP \myp{x}.
\end{equation}
This is a multi-marginal optimal transport problem with symmetric cost $\sum_{j<k}w(x_j-x_k)$ and with all the marginals of $\bP$ equal to $\rho/N$~\cite{CotFriKlu-13,CotFriPas-15,Pass-15,MarGerNen-17,SeiMarGerNenGieGor-17}.
From the stability assumption on $w$ and~\eqref{eq:upper_bd_S_canonical}, we have
\begin{equation}
 F_T [\rho]\geq -(\kappa +T)N+T\int_{\R^d}\rho(x)\log\rho(x)\,\dx.
 \label{eq:lower_bd_F_T_simple}
\end{equation}
One of our goals will be to find simple conditions ensuring that $F_T[\rho]<\ii$. Before we turn to this question, we first introduce the grand-canonical problem.

\begin{remark}[Symmetry]\label{rmk:symmetry}
In the definition~\eqref{eq:Fcan} we can freely remove the constraint that $\bP$ is symmetric. Since the interaction is a symmetric function and the entropy $\cS_N$ is concave, the minimum is the same as for symmetric $\bP$'s. Recall that for a non-symmetric $\bP$, $\rho_\bP$ is by definition the sum of the $N$ marginals.
\end{remark}

\begin{remark}[Lower semi-continuity]\label{rmk:lsc}
The function $\rho\mapsto F_T[\rho]$ is lower semi-continuous for the strong topology. That is, we have
\begin{equation}
F_T[\rho]\leq \liminf_{n\to\ii} F_T[\rho_n]\quad\text{if $\int|\rho_n-\rho|\to0$ and $T\int\rho_n|\log\rho_n|\leq C$}
\end{equation}
At $T>0$ this is valid under the sole condition that $w$ is measurable (since the limiting probability $\bP$ is necessary absolutely continuous) but at $T=0$, this uses the lower semi-continuity of $w$. The details of the argument are provided later in the proof of Theorem~\ref{thm:non_compact}, for the convenience of the reader.
\end{remark}

\begin{remark}[Convexity and duality]\label{rmk:duality}
Using the concavity of the entropy $\cS_N$, one can verify that $\rho\mapsto F_T[\rho]$ is convex. This can be used to derive the \emph{dual formulation} of $F_T[\rho]$ in terms of external potentials
\begin{align}
F_T[\rho]={}& T\int_{\R^d}\rho\log\rho+\sup_{\tilde V}\bigg\{-\int_{\R^d}\rho(x) \tilde V(x)\,\rd x \nn \\
&-T\log\int_{\R^{dN}}\exp \myp[\bigg]{ -\frac{1}{T}\sum_{1\leq j<k\leq N}w(x_j-x_k)-\frac{1}{T}\sum_{j=1}^N\tilde V(x_j) } \rd\rho^{\otimes N}
\bigg\},
\label{eq:duality}
\end{align}
see~\cite{ChaChaLie-84}. Our notation $\tilde V$ is because the final physical dual potential is, rather, $V:=\tilde V-T\log\rho$. The existence of a maximizer $\tilde V$ realizing the above supremum is proved in~\cite{ChaChaLie-84}. It
is the unique potential (up to an additive constant) so that the corresponding Gibbs state has density $\rho$, that is,
$$\rho_\bP=\rho,\quad \bP=\frac{1}{Z}\exp\bigg(-\frac{1}{T}\sum_{1\leq j<k\leq N}w(x_j-x_k)-\frac{1}{T}\sum_{j=1}^N\tilde V(x_j)\bigg)\rho^{\otimes N}$$
with $Z$ a normalization constant.
At $T=0$, we have the similar formula
$$F_0[\rho]=\sup_{\tilde V}\bigg\{E_N[V]-\int_{\R^d}\rho(x) V(x)\,\rd x\bigg\}$$
where
$$E_N[V]=\inf_{x_1,...,x_N\in\R^d}\left\{\sum_{1\leq j<k\leq N}w(x_j-x_k)+\sum_{j=1}^NV(x_j)\right\},$$
is the ground state energy in the potential $V$~\cite{Kellerer-84}. Although there usually exist dual potentials at $T=0$, those are often not unique.
\end{remark}

\subsubsection{The grand-canonical free energy}
In the grand-canonical picture, where the exact particle number of the system is not fixed, a state $ \bP $ is a family of symmetric $ n $-particle positive measures $ \bP_n $ on $(\R^d)^n$, so that
\begin{equation*}
	 \sum_{n \geq 0} \bP_n\big((\R^d)^n\big)=1.
\end{equation*}
Here $\bP_0$ is just a number, interpreted as the probability that there is no particle at all in the system. After replacing $\bP_n$ by $\bP_n/\bP_n(\R^{dn})$, we can equivalently think that $\bP$ is a convex combination of canonical states.
The entropy of $ \bP $ is defined by
\begin{equation}
\label{eq:gcentropy}
	\cS \myp{\bP}
	:= \sum\limits_{n \geq 0} \cS_n \myp{\bP_n}
	= -\bP_0\log(\bP_0)- \sum\limits_{n \geq 1} \int_{\R^{dn}} \bP_n \log \myp{n! \, \bP_n },
\end{equation}
and the single particle density of the state $ \bP $ is
\begin{equation*}
	\rho_{\bP} = \sum\limits_{n \geq 1} \rho_{\bP_n}=\sum_{n\geq1}n\int_{(\R^d)^n}\rd\bP_n(\cdot,x_2,\dotsc,x_n).
\end{equation*}
The grand-canonical free energy of the state $ \bP $ at temperature $ T \geq 0 $ is
\begin{equation}
\label{eq:FFgc}
	\mathcal{G}_T \myp{\bP}
	:= \cU \myp{\bP} - T \cS \myp{\bP},
\end{equation}
where $ \cU \myp{\bP} $ denotes the interaction energy in the state $ \bP $,
\begin{equation}
\label{eq:gcinteraction}
	\cU \myp{\bP}
	:= \sum\limits_{n \geq 2} \cU_{n} \myp{\bP_n}
	= \sum\limits_{n \geq 2} \int_{\R^{dn}} \sum\limits_{j<k}^n w \myp{x_j - x_k} \id \bP_n \myp{x_1,...,x_N}.
\end{equation}
From the stability of $w$ we have
$$\cU_{n} \myp{\bP_n}\geq -\kappa n\,\bP_n(\R^{dn})$$
so that, after summing over $n$,
$$	\cU \myp{\bP}\geq -\kappa \int_{\R^d}\rho_\bP(x)\,\dx.$$
By~\cite[Lemma 6.1]{MarLewNen-22_ppt} we have the universal entropy bound
	\begin{equation}
	\label{eq:estim_entropy_1body}
		\cS \myp{\bP}\leq -\int_{\R^d}\rho_\bP\big(\log\rho_\bP-1).
	\end{equation}
This is because the entropy at fixed density $ \rho $ is maximized by the grand-canonical Poisson state
	\begin{equation}
	\label{eq:poisson_state}
		\bQ
		:= \left(\frac{e^{- \int_{\R^d} \rho}}{n!} \rho^{\otimes n}\right)_{n \geq 0}
	\end{equation}
whose entropy is the right side of~\eqref{eq:estim_entropy_1body}.

When keeping the one-particle density $ \rho = \rho_{\bP} \in L^1 \myp{\mathbb{R}^d} $ fixed, we denote the minimal grand-canonical free energy by
\begin{equation}
\label{eq:gcenergydensity}
\boxed{
	G_T [\rho]
	:= \inf_{\rho_{\bP} = \rho} \mathcal{G}_T \myp{\bP}.
}
\end{equation}
Using~\eqref{eq:estim_entropy_1body}, we obtain
	\begin{equation}
		G_T [\rho]
		\geq - \myp[\big]{\kappa  + T} \int_{\R^d} \rho + T \int_{\R^d} \rho \log \rho,
		\label{eq:lower_bd_G_T_simple}
	\end{equation}
where $ \kappa $ is the stability constant of $ w $ in Assumption~\ref{de:shortrangenew}.

\begin{remark}[Comparing $F_T$ and $G_T$]
	\label{rem:gc_can_connection}
	Since a canonical trial state is automatically also admissible for the grand-canonical minimisation problem \eqref{eq:gcenergydensity}, we have the bound
	\begin{equation*}
		G_T [\rho]
		\leq F_T [\rho]
	\end{equation*}
	for any density $ 0 \leq \rho \in L^1 \myp{\R^d} $ with integer mass. Hence, any universal lower energy bound for the grand-canonical ensemble is also a lower bound for the canonical ensemble. A natural question to ask is under which condition we have $F_T[\rho] =G_T[\rho]$ for a density $\rho$ of integer mass. In general this is a difficult problem. See~\cite{MarLewNen-22_ppt} for results and comments in this direction at $T=0$.

	If $\int_{\R^d}\rho=N+t$ with $t\in(0,1)$ and $N\in\N$, we can write $\rho=(1-t)\frac{N}{N+t}\rho+t\frac{N+1}{N+t}\rho$ and obtain after using the concavity of the entropy
	\begin{equation}
	  G_T [\rho]\leq (1-t)\,F_T \left[\frac{N}{N+t}\rho\right]+t\,F_T \left[\frac{N+1}{N+t}\rho\right].
	  \label{eq:upper_G_T_F_T}
	\end{equation}
	This can be used to deduce an upper bound on $G_T[\rho]$, once an upper bound has been established in the canonical case. We will see, however, that it is usually much easier to directly prove upper bounds on $G_T[\rho]$ than on $F_T[\rho]$.
\end{remark}

\begin{remark}[Weak lower semi-continuity]\label{rmk:lsc_GC}
The functional $\rho\mapsto G_T[\rho]$ is \emph{weakly} lower semi-continuous and, in fact, a kind of lower continuous envelope of $F_T[\rho]$ (see~\cite{LewLieSei-19_ppt,MarLewNen-22_ppt}). At $T=0$ this uses the lower semi-continuity of $w$.
\end{remark}

\begin{remark}[Duality II]\label{rmk:duality_GC}
Like in the canonical case, we have the dual formulation
\begin{multline}
G_T[\rho]=T\int_{\R^d}\rho\log\rho+\sup_{\tilde V}\bigg\{-\int_{\R^d}\rho(x) \tilde V(x)\,\rd x\\
-T\log\bigg[\sum_{n\geq0}\int_{\R^{dn}}\exp\bigg(-\frac{1}{T}\sum_{1\leq j<k\leq n}w(x_j-x_k)-\frac{1}{T}\sum_{j=1}^n\tilde V(x_j)\bigg)\rd\rho^{\otimes N}\bigg]
\bigg\},
\label{eq:duality_GC}
\end{multline}
see~\cite{ChaChaLie-84,ChaCha-84} and the more recent work~\cite[Sec.~4 \& 6]{MarLewNen-22_ppt}.
\end{remark}

\subsection{Representability}

Next we turn to the problem of representability. Namely, we are asking what kind of densities $\rho$ can arise from $N$--particle probabilities with finite free energy. This depends on the shape of the interaction potential $w$. We only address this question for $\rho\in L^1(\R^d)$ and do not look at general measures. The main result is that all densities are representable at zero temperature in the non-hard-core case ($\alpha<\ii$). At positive temperature, it is sufficient to assume in addition that $\int_{\R^d}\rho|\log\rho|<\ii$.

\begin{theorem}[Representability in the canonical case]\label{thm:representability}
Let $\rho\in L^1(\R^d)$ with $\int_{\R^d}\rho(x) \id x \in \N$. There exists a symmetric probability measure $\bP$ on $(\R^d)^N$ of density $\rho$ so that $|x_j-x_k|\geq \delta>0$ $\bP$--almost everywhere, for some $\delta>0$.

If $w$ satisfies Assumption~\ref{de:shortrangenew} without hard-core ($\alpha<\ii$), we obtain $F_0[\rho]<\ii$. If furthermore $\int_{\R^d}\rho|\log\rho|<\ii$, then $\bP$ can be assumed to have finite entropy and $F_T[\rho]<\ii$ for any $T>0$.
\end{theorem}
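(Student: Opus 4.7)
My plan handles the three claims of the theorem in sequence: constructing a symmetric probability $\bP$ on $(\R^d)^N$ of density $\rho$ with uniform pairwise separation $|x_j - x_k| \geq \delta > 0$, deducing $F_0[\rho] < \ii$ from this construction, and then upgrading to finite entropy under $\int \rho |\log \rho| < \ii$ so as to obtain $F_T[\rho] < \ii$ for every $T > 0$.

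For the separated state, my approach is a one-dimensional quantile construction along the first coordinate axis, lifted to $\R^d$ by disintegration. Let $\pi : \R^d \to \R$ denote the projection onto the first coordinate and $\rho^\pi(t) := \int_{\R^{d-1}} \rho(t,y)\,\mathrm{d}y$ the corresponding marginal, which is absolutely continuous on $\R$ with total mass $N$. The key observation is that by the absolute continuity of the measure $\rho^\pi\,\mathrm{d}\lambda$ applied with threshold $1/2$, there exists $\delta > 0$ such that every interval of Lebesgue length $\leq \delta$ has $\rho^\pi$-mass at most $1/2$; contrapositively, every interval carrying $\rho^\pi$-mass $\geq 1$ has length at least $\delta$. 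Letting $F(t) := \int_{-\infty}^t \rho^\pi$ and $F^{-1}$ denote its left-continuous inverse, and setting $t_j(u) := F^{-1}(j-1+u)$ for $u \in [0,1]$, this yields $t_{j+1}(u) - t_j(u) \geq \delta$ for all such $u$ and $j \in \{1,\ldots,N-1\}$. I then disintegrate $\rho(x_1,y) = \rho^\pi(x_1)\,\mu_{x_1}(y)$ and define $\bP$ as the symmetrization (over index permutations) of the law of $(X_1,\ldots,X_N)$, where $U \sim \mathrm{Unif}([0,1])$ and $X_j := (t_j(U), Y_j)$ with $Y_j \sim \mu_{t_j(U)}$ conditionally independent across $j$. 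Separation follows from $|X_i - X_j| \geq |t_i(U) - t_j(U)| \geq \delta$ a.s., and $\rho_\bP = \rho$ because the $j$-th marginal of the unsymmetrized state is $\rho|_{A_j}$, with $A_j := \pi^{-1}([F^{-1}(j-1), F^{-1}(j)])$, and the $A_j$ partition $\supp \rho$.

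From this construction the bound $F_0[\rho] < \ii$ is essentially immediate: on $\supp \bP$ we have $|x_i - x_j| \geq \delta$, so the upper bound~\eqref{eq:assumption_w} gives $w(x_i - x_j) \leq \kappa\big((r_0/\delta)^\alpha \1_{\delta < r_0} + 1\big)$ pointwise (using $\alpha < \ii$), and summing over pairs yields $\mathcal{U}_N(\bP) \leq \binom{N}{2}\kappa\big((r_0/\delta)^\alpha + 1\big) < \ii$, hence $F_0[\rho] \leq \mathcal{U}_N(\bP) < \ii$.

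The main obstacle will be the positive-temperature case. The state $\bP$ just constructed is supported on the $(1 + d(N-1))$-dimensional graph $\{X_{j,1} = t_j(U)\}$, hence is singular and has $\mathcal{S}_N(\bP) = -\ii$. My first modification for finite entropy is to replace the single uniform $U$ by $N$ independent ones $U_j \sim \mathrm{Unif}([0,1])$, yielding a state $\bP'$ that is the symmetrization of $\bigotimes_j \rho|_{A_j}$, is absolutely continuous, and (by a direct calculation exploiting the disjointness of the slabs) has $\mathcal{S}_N(\bP') = -\int \rho \log \rho$, finite under the hypothesis. The price is that separation is lost at slab boundaries, so $\mathcal{U}_N(\bP')$ need not be finite (it involves integrals of $w(x-y)$ over adjacent slab pairs, which can blow up for $\rho \in L^1 \setminus L^p$). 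My plan to restore both properties simultaneously is to further subdivide each $A_j$ into $K$ sub-slabs of $\rho$-mass $1/K$ (by iterating the quantile construction) and, via a coloring/matching argument on the resulting $NK$ sub-slabs, place the $N$ particles independently in $N$ pairwise non-adjacent sub-slabs, taking a convex combination over admissible selections so that the total marginal remains $\rho$. Combined with the upper bound~\eqref{eq:assumption_w} this should give $\mathcal{S}_N > -\ii$ and $\mathcal{U}_N < \ii$ simultaneously, hence $F_T[\rho] < \ii$. It is this subdivision/coloring step, ensuring marginal identity and separation of occupied sub-slabs at once, that I expect will require the most care.
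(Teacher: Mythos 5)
Your proposal is correct, but it takes a genuinely different route from the paper. The paper obtains the separated state from the optimal transport theorem of Colombo--Di Marino--Stra (\cite[Theorem~4.3]{ColMarStr-19}, restated as Theorem~\ref{thm:clstate}), with $\delta$ determined by the condition $\int_{B(x,\delta)}\rho<1$, and then regularizes at $T>0$ via the block approximation of Section~\ref{sec:Block}, which forces a reduction to compactly supported densities and a separate truncation/convexity argument for general $\rho$. Your one-dimensional quantile construction along the first coordinate, lifted by disintegration, is elementary and self-contained, and it handles unbounded supports with no extra work; the existence of $\delta$ follows from the same absolute-continuity/compactness argument applied to the marginal $\rho^\pi$. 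Your positive-temperature fix is, in substance, the odd/even interval decomposition that the paper itself uses in Section~\ref{sec:proof_1D} for the $d=1$ case of Theorem~\ref{thm:GC_bound}, transplanted to slabs: the ``coloring/matching'' step you flag as delicate is in fact immediate because the sub-slabs are linearly ordered. With $NK$ sub-slabs of mass $1/K$, the $K$ arithmetic-progression selections $\{r,r+K,\dots,r+(N-1)K\}$, each given weight $1/K$, cover every sub-slab with total weight exactly $1/K$ (so the marginal is $\rho$) and are pairwise non-adjacent already for $K=2$; any two occupied sub-slabs are then separated by an unoccupied one of mass $1/K$, hence by a distance $\delta_K>0$ by the same compactness argument, which gives both $\cS_N>-\ii$ and $\cU_N<\ii$. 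The trade-offs: your state only carries a uniform separation $\delta$ depending nonlocally on $\rho$, which is all representability needs, whereas the paper's optimal transport state satisfies the density-adapted bound~\eqref{eq:plansupport} that is essential for the quantitative Theorems~\ref{thm:GC_bound} and~\ref{thm:OT_block}; also your interaction bound is $O(N^2)$ rather than the $O(N)$ the paper gets from \cite[Lemma~9]{Lewin-22}, though this is irrelevant for finiteness.
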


The theorem follows from results in optimal transport theory and we quickly outline the proof here for the convenience of the reader. In this paper we will prove much more. We will in fact need some of these tools and more details will thus be provided later in the paper.

\begin{proof}
If $\int_{\R^d}\rho=1$, we must take $\bP=\rho$ and end up with $F_T[\rho]=T\int\rho\log\rho$. In the rest of the proof we assume that $\int_{\R^d}\rho\geq2$.

For $\rho\in L^1(\R^d)$, the existence of $\bP$ is proved in~\cite[Theorem~4.3]{ColMarStr-19}. The number $\delta$ must be so that $\int_{B(x,\delta)}\rho<1$ for any $x\in\R^d$, where $B(x,R)$ denotes the ball centered at $x$ and of radius $R$. Such a $\delta>0$ always exists when $\rho\in L^1(\R^d)$. See Section~\ref{sec:OT} below for more details on the results from~\cite{ColMarStr-19}.

Next we prove that $\cF_0\myp{\bP}<\ii$. Since $\alpha<\ii$ (no hard-core), we can assume $r_0=1$. We then have $w(x)\leq C_\delta |x|^{-s}$ for all $|x|\geq\delta$, with the constant $C_\delta=\kappa(1+\delta^{s-\alpha})$, due to Assumption~\ref{de:shortrangenew}. Hence, on the support of $\bP$ we have
\begin{equation*}
\sum_{1\leq j<k\leq N}w(x_j-x_k)=\frac12\sum_{j=1}^N\sum_{k\neq j}w(x_j-x_k)
\leq \frac{C_\delta}{2}N\max_{\substack{|y_j|\geq\delta\\ |y_j-y_k|\geq \delta}}\sum_{j=1}^{N-1}\frac{1}{|y_j|^s}.
\end{equation*}
The maximum is bounded by $C\delta^{-s}$ independently of $N$ due to~\cite[Lemma~9]{Lewin-22}. Integrating with respect to $\bP$ we have proved that $\cF_0\myp{\bP}\leq C_\delta\delta^{-s}N$. This bound is not very explicit but it only depends on $\delta$ and $N$. Of course, $\delta$ itself depends on $\rho$ in a rather indirect way.

The probability measure $\bP$ obtained by the optimal transport method of~\cite{ColMarStr-19} is probably a singular measure, hence with an infinite entropy. In~\cite{CarDuvPeySch-17}, it is explained how to regularize any given $\bP$ using a method called the \emph{Block approximation}. This method works well for a compactly supported density, for which it easily implies $F_T[\rho]<\ii$. We quickly describe the method here and refer to Section~\ref{sec:Block} below for details. In short, we split the space into small cubes $\{\mathcal C_j\}$ of size proportional to $\delta$ and introduce the trial probability measure
$$\tilde\bP=\sum_{j_1,...,j_N}\bP(\mathcal C_{j_1}\times\cdots\times\mathcal C_{j_N})\frac{\rho\1_{\mathcal C_{j_1}}\otimes\cdots\otimes \rho\1_{\mathcal C_{j_N}}}{\int_{\mathcal C_{j_1}}\rho\cdots\int_{\mathcal C_{j_N}}\rho}.$$
That is, we take a convex combination of independent particles over small cubes with probability $\bP(\mathcal C_{j_1}\times\cdots\times\mathcal C_{j_N})$. Choosing the cubes small enough, we can ensure that $|x_j-x_k|\geq\delta/2$ on the support of $\tilde\bP$ and $\int_{\mathcal C_j}\rho<1$. A computation gives $\rho_{\tilde\bP}=\rho_{\bP}=\rho$. The entropy can be estimated by
$$\int_{\R^{dN}}\tilde\bP\log(N! \, \tilde\bP)\leq \int_{\R^d} \rho \log \rho  -\sum\limits_{j} \myp[\bigg]{\int_{\mathcal C_j} \rho } \log \myp[\bigg]{\int_{\mathcal C_j} \rho}$$
(see Lemma~\ref{lem:block_entropy} below). Estimating the last sum is not an easy task for a general density. For a compactly supported density we can simply bound it by $1/e$ times the numbers of cubes intersecting the support of $\rho$. Since the energy of $\tilde\bP$ is finite by the previous argument, we deduce that $F_T[\rho]<\ii$ for any $\rho$ of compact support.

It thus remains to explain how to prove that $F_T[\rho]$ is finite for a density $\rho$ of unbounded support. The idea is of course to truncate it. We choose two radii $R_1<R_2$ so that
$$\int_{\R^d\setminus B_{R_2}}\rho=\int_{B_{R_2}\setminus B_{R_1}}\rho=\frac12$$
(using here $\int\rho\geq2$) and we define for shortness $\rho_1:=\rho\1_{B_{R_1}}$, $\rho_2:=\rho\1_{B_{R_2}\setminus B_{R_1}}$ and $\rho_3:=\rho\1_{\R^d\setminus B_{R_2}}$. We can write
$$\rho=\frac{\rho_1+2\rho_2}2+\frac{\rho_1+2\rho_3}2$$
where $\int_{\R^d}(\rho_1+2\rho_2)=\int_{\R^d}(\rho_1+2\rho_3)=N$.
From the convexity of $F_T$ we obtain
$$F_T[\rho]\leq \frac12F_T[\rho_1+2\rho_2]+\frac12F_T[\rho_1+2\rho_3].$$
The first density $\rho_1+2\rho_2$ has compact support hence has a finite energy, as explained above. For the second density $\rho_1+2\rho_3$ we use an uncorrelated trial state in the form $\bP_1\otimes_s (2\rho_3)$ where $\bP_1$ is also constructed as before, but with $\rho$ replaced by $\rho_1$ which has mass $N-1$. Here $\otimes_s$ means the symmetric tensor product.
A calculation shows that
\begin{align*}
F_T[\rho_1+2\rho_3]\leq{}& \cF_T\big(\bP_1\otimes_s(2\rho_3)\big)\\
={}&\cF_T(\bP_1)+2\iint_{\R^{2d}}w(x-y)\rho_1(x)\rho_3(y)\id x \id y \\
&+2T\int \rho_3\log(2\rho_3) \\
\leq{}&  \cF_T(\bP_1)+(N-1)\sup_{|x|\geq R_2-R_1}|w(x)|+2T\int_{\R^d\setminus B_{R_2}} \rho\log(2\rho).
\end{align*}
Thus the finiteness for densities of compact support implies the same for all densities. In fact, after optimizing over $\bP_1$ we have proved the bound
\begin{align*}
F_T[\rho]\leq{}& \frac{F_T[\rho_1+2\rho_2]+F_T[\rho_1]}2\\
&+\frac{N-1}{2}\sup_{|x|\geq R_2-R_1}|w(x)|+T\int_{\R^d\setminus B_{R_2}} \rho\log(2\rho).
\end{align*}
This concludes the proof of Theorem~\ref{thm:representability}.
\end{proof}

We have not considered here the hard-core potential, to which we will come back later in Section~\ref{sec:hard-core}. Representability is much more delicate in this case. From the inequality~\eqref{eq:upper_G_T_F_T}, we immediately obtain the following.

\begin{corollary}[Representability in the grand-canonical case]
Let $\rho\in L^1(\R^d)$. Then we have $G_0[\rho]<\ii$ if $w$ has no hard-core ($\alpha<\ii$). If furthermore $\int_{\R^d}\rho|\log\rho|<\ii$, then $G_T[\rho]<\ii$ for all $T>0$.
\end{corollary}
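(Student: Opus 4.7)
The plan is to reduce the grand-canonical statement to its canonical counterpart (Theorem~\ref{thm:representability}) via the convex combination bound~\eqref{eq:upper_G_T_F_T} recalled in Remark~\ref{rem:gc_can_connection}. The structure is dictated by the integrality of $\int_{\R^d}\rho$: the integer-mass case is immediate, the non-integer case is a two-term convex interpolation, and the subunit-mass case requires a tiny direct construction.

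First I would dispose of the easy situation $\int_{\R^d}\rho\in\N$: Remark~\ref{rem:gc_can_connection} gives $G_T[\rho]\leq F_T[\rho]$, and Theorem~\ref{thm:representability} supplies $F_T[\rho]<\ii$ under the stated hypotheses (no hard-core for $T=0$, plus $\int\rho|\log\rho|<\ii$ for $T>0$).

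Next, for general $\rho\in L^1(\R^d)$ with $\int_{\R^d}\rho=N+t$, $N\in\N$, $t\in(0,1)$, I would apply~\eqref{eq:upper_G_T_F_T} to obtain
$$G_T[\rho]\leq (1-t)\,F_T\!\left[\tfrac{N}{N+t}\rho\right]+t\,F_T\!\left[\tfrac{N+1}{N+t}\rho\right].$$
Both arguments on the right are positive scalar multiples $c\rho$ of $\rho$ with integer mass, so they lie in $L^1(\R^d)$, and the elementary estimate
$$\int_{\R^d}(c\rho)|\log(c\rho)|\,\dx\leq c|\log c|\int_{\R^d}\rho\,\dx+c\int_{\R^d}\rho|\log\rho|\,\dx$$
shows that the entropy assumption, when imposed, is preserved under such scaling. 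Theorem~\ref{thm:representability} then yields finiteness of both right-hand terms, hence $G_T[\rho]<\ii$.

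Finally, in the remaining regime $\int_{\R^d}\rho\in(0,1)$ (corresponding to $N=0$, where~\eqref{eq:upper_G_T_F_T} is not stated directly), I would build a grand-canonical trial state by hand: take $\bP_0:=1-\int_{\R^d}\rho$, $\bP_1:=\rho$, and $\bP_n:=0$ for $n\geq 2$. This state has density $\rho$, vanishing interaction energy, and entropy $-\bP_0\log\bP_0-\int_{\R^d}\rho\log\rho$, which is finite at $T=0$ unconditionally and at $T>0$ under the assumption $\int\rho|\log\rho|<\ii$. The case $\rho\equiv 0$ is trivial. There is no real obstacle in the proof; the corollary is essentially a repackaging of Theorem~\ref{thm:representability} through the convexity bound~\eqref{eq:upper_G_T_F_T}, with the only point of care being the $N=0$ boundary, handled by the direct Poisson-like construction above (equivalently, one may extend~\eqref{eq:upper_G_T_F_T} to $N=0$ using the convention $F_T[0]=0$).
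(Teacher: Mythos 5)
Your proof is correct and follows essentially the same route as the paper, which obtains the corollary directly from the convex-combination bound~\eqref{eq:upper_G_T_F_T} together with Theorem~\ref{thm:representability}. Your extra care with the entropy under rescaling and with the mass-in-$(0,1)$ case (where your trial state coincides with the one in~\eqref{eq:simple_less1}) just fills in details the paper leaves implicit.
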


\subsection{Local upper bounds}

Recall that we already have rather simple lower bounds in~\eqref{eq:lower_bd_F_T_simple} and~\eqref{eq:lower_bd_G_T_simple}. The proof of Theorem~\ref{thm:representability} furnishes an upper bound on $F_T[\rho]$ but it depends on the smallest distance $\delta$ between the particles in the system, which is itself a highly nonlinear and nonlocal function of $\rho$. For non compactly-supported densities, the proof also involves the two radii $R_1,R_2$ which depend on $\rho$ as well.

Our goal here is to provide simple \emph{local upper bounds} involving only integrals of the given density $\rho$. We start in the next subsection by recalling the simple integrable case at the origin $\alpha<d$, for which we can just choose i.i.d.~particles. The case $\alpha\geq d$ is much more complicated since particles cannot be allowed to get too close.

\subsubsection{Upper bound in the weakly repulsive case $\alpha<d$}
In the case where $ w_+ $ is integrable at the origin, it is easy to provide a simple upper bound.

\begin{theorem}[Weakly repulsive case $\alpha<d$]
\label{thm:weak_w_bound}
	Let $ w $ satisfy Assumption~\ref{de:shortrangenew} with $ \alpha < d $.
	Let $ 0 \leq \rho \in L^1 \myp{\R^d} \cap L^2 \myp{\R^d} $ with integer mass $ \int \rho \in \N $. Let also $T\geq0$ and assume that $\int_{\R^d}\rho|\log\rho|<\ii$ if $T>0$. Then we have
	\begin{align}
		F_T [\rho]&\leq \frac12\iint_{\R^d\times\R^d}w(x-y)\rho(x)\rho(y)\,\dx\,dy+T\int_{\R^d}\rho\log\rho\nn\\
		&\leq \frac{\normt{w_+}{L^1}}{2} \int_{\R^d} \rho^2 + T \int_{\R^d} \rho \log \rho.\label{eq:can_weak_bound}
	\end{align}
	In the grand-canonical case we have the exact same bound on $G_T[\rho]$, this time without any constraint on $\int_{\R^d} \rho $ and with $\rho\log\rho$ replaced by $\rho(\log\rho-1)$ in the last integral.
\end{theorem}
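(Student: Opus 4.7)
The plan is to construct explicit uncorrelated mean-field trial states in each ensemble and to observe that the relevant integrals are finite because $w_+ \in L^1(\R^d)$, which is where the assumption $\alpha < d$ enters decisively.

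For the \emph{grand-canonical} bound I would use the Poisson state with intensity $\rho$ from~\eqref{eq:poisson_state}, $\bQ = (e^{-\int\rho} \rho^{\otimes n}/n!)_{n\geq 0}$. A direct computation gives $\rho_\bQ = \rho$, and the entropy saturates~\eqref{eq:estim_entropy_1body}, namely $\cS(\bQ) = -\int_{\R^d}\rho(\log\rho - 1)$. For the interaction, expanding by pairs and re-summing the exponential series yields
\begin{equation*}
\cU(\bQ) = \sum_{n\geq 2}\frac{e^{-\int\rho}}{n!}\binom{n}{2}\Bigl(\int\rho\Bigr)^{n-2}\iint w(x-y)\rho(x)\rho(y)\,\dx\,\dy = \frac12\iint w(x-y)\rho(x)\rho(y)\,\dx\,\dy,
\end{equation*}
since $\sum_{n\geq 2} M^{n-2}/(n-2)! = e^M$ cancels the $e^{-M}$ factor (with $M = \int\rho$). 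Combined with $-T\cS(\bQ) = T\int\rho(\log\rho-1)$, this produces exactly the claimed bound on $G_T[\rho]$.

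For the \emph{canonical} bound with $N = \int\rho \in \N$ I would take the i.i.d.\ trial state $\bP = (\rho/N)^{\otimes N}$, for which $\rho_\bP = \rho$ holds by construction. Direct evaluation gives
\begin{equation*}
\cU_N(\bP) = \frac{N-1}{2N}\iint w(x-y)\rho(x)\rho(y)\,\dx\,\dy, \qquad \cS_N(\bP) = N\log N - \log N! - \int_{\R^d}\rho\log\rho \geq -\int_{\R^d}\rho\log\rho,
\end{equation*}
the last step using $\log N! \leq N\log N$. Since $\rho \geq 0$, one can bound $w \leq w_+$ pointwise inside the double integral, and combined with $\tfrac{N-1}{2N} \leq \tfrac12$ this yields
\begin{equation*}
F_T[\rho] \leq \tfrac12 \iint w_+(x-y)\rho(x)\rho(y)\,\dx\,\dy + T\int_{\R^d}\rho\log\rho,
\end{equation*}
which implies both inequalities displayed in the theorem (the first one being a cosmetic weakening via $\iint w\rho\rho \leq \iint w_+ \rho\rho$).

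The \emph{closing estimate} is one application of Young's convolution inequality,
\begin{equation*}
\iint w_+(x-y)\rho(x)\rho(y)\,\dx\,\dy = \int_{\R^d}\rho\,(w_+ \ast \rho) \leq \normt{w_+}{L^1}\int_{\R^d}\rho^2,
\end{equation*}
together with $\normt{w_+}{L^1} < \infty$, which is exactly the content of Assumption~\ref{de:shortrangenew} in this regime: near the origin $w_+ \leq \kappa(r_0/|x|)^\alpha \1_{|x|<r_0}$ is integrable because $\alpha < d$, and at infinity $w_+ \leq \kappa/(1 + |x|^s)$ is integrable because $s > d$. There is essentially no obstacle here: the whole point of the weakly repulsive regime $\alpha < d$ is that uncorrelated trial states already have finite energy, and the proof reduces to straightforward mean-field bookkeeping for i.i.d.\ particles and for a Poisson point process.
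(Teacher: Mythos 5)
Your grand-canonical argument (Poisson state, exact computation of $\cU$ and $\cS$) and your final estimate via Young's inequality are exactly the paper's proof, and they are correct. The canonical trial state $(\rho/N)^{\otimes N}$ and the entropy computation are also the same as in the paper.

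There is, however, a genuine gap in your canonical argument: you do not actually prove the \emph{first} displayed inequality, $F_T[\rho]\leq \frac12\iint w(x-y)\rho(x)\rho(y)+T\int\rho\log\rho$, and your stated justification for it is logically reversed. Your trial state gives $\cU_N(\bP)=\frac{N-1}{2N}\iint w\,\rho\otimes\rho$, and you pass to $\frac12\iint w_+\,\rho\otimes\rho$ by replacing $w$ with $w_+$; but since $\iint w\,\rho\otimes\rho\leq \iint w_+\,\rho\otimes\rho$, the bound with $w_+$ is \emph{weaker} than the Kirkwood--Monroe bound with $w$ itself, so it does not "imply both inequalities" — the claimed "cosmetic weakening" goes in the wrong direction. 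To drop the factor $\frac{N-1}{N}\leq 1$ while keeping $w$ (including its negative part) in the double integral, you need to know that $\iint w(x-y)\rho(x)\rho(y)\,\dx\,\dy\geq 0$. This is where the paper uses the stability condition~\eqref{eq:stabilitynew}: applying it to $\eta^{\otimes K}$ with $\eta=\rho/N$ gives $\frac{K(K-1)}{2}\iint w\,\eta\otimes\eta\geq-\kappa K$, and letting $K\to\infty$ yields $\iint w\,\eta\otimes\eta\geq0$ for every probability density $\eta$. With that observation inserted, $\frac{N-1}{2N}\iint w\,\rho\otimes\rho\leq\frac12\iint w\,\rho\otimes\rho$ and the full chain of inequalities follows; the rest of your write-up is correct as it stands.
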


As we have mentioned in the introduction, the functional appearing on the right side of the first line of~\eqref{eq:can_weak_bound} is the so-called \emph{Kirkwood-Monroe free energy}~\cite{KirMon-41}, which is the simplest approximation of $F_T[\rho]$. It only makes sense for a locally integrable potential $w$. In addition to being an exact upper bound, the Kirkwood-Monroe free energy also provides the exact behavior of $F_T[\rho]$ in some regimes. This was studied in many works, including for instance~\cite{LebPen-66,GatPen-69,Gates-72,GreKle-76,ButLeb-05} for the infinite gas at high density and~\cite{BraHep-77,Spohn-81,MesSpo-82,Kiessling-89,Kiessling-93,CagLioMarPul-92,CagLioMarPul-95,KiePer-95,Rougerie-LMU} for trapped systems in the mean-field limit.

\begin{proof}
	We denote $ N = \int_{\R^d} \rho $ and simply take as a trial state the pure tensor product $ \bP := \myp{\rho/N }^{\otimes N} $.
	The interaction energy satisfies
	\begin{align}
		\cU_{N} \myp{\bP}
		={}& \frac{N \myp{N-1}}{2} \int_{\R^{dN}} w \myp{x_1 - x_2} \myp[\Big]{\frac{\rho}{N} }^{\otimes N} \myp{x} \id x_1 \cdots \id x_N \nn\\
		={}& \frac{1-1/N}{2} \iint_{\R^d\times\R^d} w \myp{x_1 - x_2} \rho \myp{x_1} \rho \myp{x_2} \id x_1 \id x_2\label{eq:mean-field2}\\
		\leq{}& \frac{\normt{w_+}{L^1}}{2} \int_{\R^d} \rho^2.\nn
	\end{align}
From the stability condition on $w$, we know that for any $\eta\geq0$ with $\int\eta=1$,
		$$\cU_{K} \myp{\eta^{\otimes K}}=\frac{K(K-1)}{2}\iint_{\R^d\times\R^d} w \myp{x -y} \eta(x)\eta(y)\,\dx\,\dy\geq -\kappa K.$$
Letting $K\to\ii$, we find
$$\iint_{\R^d\times\R^d} w \myp{x -y} \eta(x)\eta(y)\,\dx\,\dy\geq0,\qquad \forall \eta\geq0.$$
This is how the stability is expressed in mean-field theory~\cite{LewNamRou-16c}. Since the double integral in~\eqref{eq:mean-field2} is non-negative, we can remove the $1/N$ for an upper bound. The entropy can itself be estimated by
	\begin{align*}
		-\cS_N \myp{\bP}
		={}& \int_{\R^{dN}} \myp[\Big]{\frac{\rho}{N} }^{\otimes N} \log \myp[\Big]{N! \myp[\Big]{\frac{\rho}{N} }^{\otimes N}} \\
		={}& \log \myp[\Big]{\frac{N!}{N^N}} + \int_{\R^d} \rho \log \rho
		\leq{} \int_{\R^d} \rho \log\rho,
	\end{align*}
	showing that \eqref{eq:can_weak_bound} holds.
	In the grand-canonical case we use instead the Poisson state in~\eqref{eq:poisson_state} and exactly obtain the mean-field energy on the right side of~\eqref{eq:can_weak_bound} with $\rho\log\rho$ replaced by $\rho(\log\rho-1)$ in the last integral.
\end{proof}

\subsubsection{Upper bounds in the strongly repulsive case $\alpha\geq d$}

When $\alpha\geq d$ the right side of~\eqref{eq:can_weak_bound} is infinite due to the non-integrability of $w$ at the origin. We cannot use a simple uncorrelated probability $\bP$ as a trial state and it is necessary to correlate the particles in such a way that they never get too close to each other. The difficulty is to do this at fixed density, with a reasonable energy cost. Also, we expect the typical distance between the particles to depend on the local value of $\rho$. If we imagine that there are $\rho(x)$ particles per unit volume in a neighborhood of a point $x$, then the distance should essentially be proportional to $\rho(x)^{-1/d}$. We thus expect a bound in terms of $\rho(x)^{1+\alpha/d}$ for large densities. We can only fully solve this question in the grand-canonical case. In the canonical case we can only treat $T=0$ in full. The following is our first main result.

\begin{theorem}[Strongly repulsive case $\alpha \geq d$]
\label{thm:GC_bound}
Suppose that the interaction $ w $ satisfies Assumption~\ref{de:shortrangenew} with $ d \leq \alpha < \infty $. Let $T\geq 0$ and assume that for $ T > 0 $, we have $\int_{\R^d}\rho|\log\rho|<\ii$.

\smallskip

\noindent$\bullet$ In the grand-canonical ensemble, we have for any $ 0 \leq \rho \in L^1 \myp{\R^d} $,
		\begin{align}
		G_T [\rho]&\leq  C\kappa \int_{\R^d}\rho^2+CT \int_{\R^d} \rho + T \int_{\R^d} \rho \log \rho\nn\\
			&\qquad +\begin{cases}
			\dps C\kappa r_0^\alpha \int_{\R^d} \rho^{1+\frac{\alpha}{d}}&\text{for $\alpha>d$,}\\[0.4cm]
			\dps C\kappa r_0^d \left(\int_{\R^d} \rho^2+\int_{\R^d}\rho^2\big(\log r_0^d\rho\big)_+\right)&\text{for $\alpha=d$.}
			\end{cases}\label{eq:GC_bound}
		\end{align}
Here the constant $ C $ only depends on the dimension $ d $ and the powers $\alpha,s$ from Assumption~\ref{de:shortrangenew}.

\smallskip

\noindent$\bullet$ In the canonical ensemble we have the same estimate on $F_T[\rho]$ for all $T\geq0$ in dimension $d=1$ and on $F_0[\rho]$ at $T=0$ for $d\geq2$, provided of course that $\rho$ has an integer mass.
\end{theorem}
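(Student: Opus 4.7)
The plan is to construct a correlated grand-canonical trial state whose one-body density is $\rho$ and whose typical interparticle distance is $\sim\rho^{-1/d}$ locally; this correlation is essential because for $\alpha\geq d$ the mean-field double integral $\tfrac12\iint w\,\rho\otimes\rho$ diverges.

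Step one is a Calder\'on--Zygmund partition adapted to $\rho$: starting from a dyadic grid on $\R^d$, subdivide a cube $Q$ as long as $\int_Q\rho>\eta$, for a small universal constant $\eta\in(0,1)$. The resulting cubes $\{Q_i\}$ tile $\R^d$, satisfy $p_i:=\int_{Q_i}\rho\leq\eta$ and (by maximality) $\int_{\hat Q_i}\rho>\eta$ on the parent, so the side length $\ell_i$ is locally comparable to $\rho^{-1/d}$. I would then colour the cubes with a finite number $K=K(d)$ of colours so that same-colour cubes are mutually separated by at least $\min(\ell_i,\ell_j)$ at every scale, and build the trial state
\[
\bP=\frac{1}{K}\sum_{c=1}^K\bP_c,\qquad \bP_c=\bigotimes_{i\in c}\bigl((1-Kp_i)\,\delta_\emptyset + K\rho\1_{Q_i}\bigr),
\]
where $\eta\leq 1/K$ ensures $Kp_i\leq 1$. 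A direct calculation gives $\rho_\bP=\rho$, and by concavity of the entropy combined with the Bernoulli product structure,
\[
\cS(\bP)\geq \frac{1}{K}\sum_c\cS(\bP_c)\geq -\int\rho\log\rho - C\int\rho,
\]
matching the entropy contribution of~\eqref{eq:GC_bound}.

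For the interaction, linearity of expectation and the Bernoulli form give
\[
\cU(\bP) = K\sum_{c}\sum_{\substack{i<j\\ i,j\in c}}\iint_{Q_i\times Q_j}w(x-y)\rho(x)\rho(y)\,\dx\,\dy,
\]
which I would split via Assumption~\ref{de:shortrangenew}. The long-range tail $\kappa(1+|x-y|^s)^{-1}$ is integrable (since $s>d$); summing over all pairs and using Young's convolution inequality yields $C\kappa(\int\rho+\int\rho^2)$. The delicate part is the short-range piece $\kappa(r_0/|x-y|)^\alpha\1(|x-y|<r_0)$; here the colouring is crucial, as same-colour cubes $Q_i,Q_j$ satisfy $|x-y|\gtrsim\min(\ell_i,\ell_j)$, so $w\leq C\kappa(r_0/\ell_i)^\alpha$ on the integrand. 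Since each cube has a bounded number of near-neighbour same-colour cubes (depending only on $d$), summing gives
\[
\sum_i C\kappa\, r_0^\alpha\,p_i^2\,\ell_i^{-\alpha}\sim C\kappa\, r_0^\alpha\int\rho^{1+\alpha/d},
\]
using $\ell_i^d\sim p_i/\rho_i$ locally, where $\rho_i:=|Q_i|^{-1}\int_{Q_i}\rho$. At the critical exponent $\alpha=d$ the dyadic summation of $\ell^{-d}$ over scales produces the extra $(\log(r_0^d\rho))_+$ factor in \eqref{eq:GC_bound}. The main obstacle throughout is preserving the marginal $\rho_\bP=\rho$ \emph{exactly} while engineering density-adapted separation: the coloured convex combination is the device that decouples these two constraints.

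Finally, the canonical bounds follow by variants of the same idea: in $d=1$ particles can be ordered along the line, reducing to independent gap integrals and allowing a Percus-type version of the above construction for all $T\geq 0$; for $d\geq 2$ at $T=0$ the entropy is absent, and I would instead invoke the Colombo--De~Pascale--Di~Marino--Stra optimal-transport construction from the proof of Theorem~\ref{thm:representability} to produce an $N$-particle state realizing $\rho$ with density-adapted minimum pairwise distance, after which the same short-range/long-range splitting of $w$ gives the bound on $F_0[\rho]$.
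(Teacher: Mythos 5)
Your overall architecture---density-adapted cubes, a finite colouring into well-separated subfamilies, a convex combination of Bernoulli-type ``zero or one particle per cube'' tensor products, entropy by concavity, the odd/even interval splitting in $d=1$, and the Colombo--Di~Marino--Stra optimal-transport state for the canonical $T=0$ case---is exactly the paper's. The one substantive deviation is that you build the cubes by a Calder\'on--Zygmund stopping-time subdivision (stop when $\int_Q\rho\le\eta$) instead of the Besicovitch-type covering of Lemma~\ref{lem:Besicovitch_lemma}, in which every cube $\mathcal C(x)=x+\ell(x)\mathcal C$ carries \emph{exactly} the fixed mass $3^{-d}(4^d+1)^{-1}$. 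This is not a cosmetic difference: it breaks your central estimate. Writing $p_i=\int_{Q_i}\rho$ and $\bar\rho_i=p_i\ell_i^{-d}$, one has $p_i^2\ell_i^{-\alpha}=p_i^{2-\alpha/d}\bar\rho_i^{\alpha/d}$, while Jensen's inequality gives only
\begin{equation*}
\int_{Q_i}\rho^{1+\frac{\alpha}{d}}\ \geq\ p_i^{1+\frac{\alpha}{d}}\,\ell_i^{-\alpha},
\qquad\text{hence}\qquad
p_i^2\,\ell_i^{-\alpha}\ \leq\ p_i^{\,1-\frac{\alpha}{d}}\int_{Q_i}\rho^{1+\frac{\alpha}{d}},
\end{equation*}
with equality when $\rho$ is constant on $Q_i$. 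For $\alpha>d$ the prefactor $p_i^{1-\alpha/d}$ blows up as $p_i\to0$, and your stopped cubes satisfy only $p_i\le\eta$ with no lower bound: a parent whose mass sits in one corner produces $2^d-1$ stopped children with the same small side length but arbitrarily small mass, and all coarse-scale cubes in the tail of $\rho$ are low-mass as well. So the claimed equivalence $\sum_i p_i^2\ell_i^{-\alpha}\sim\int\rho^{1+\alpha/d}$ does not follow. Passing to the parent cube (which does have mass $>\eta$, whence $\ell_i^{-\alpha}\le C_\eta\int_{\widehat Q_i}\rho^{1+\alpha/d}$) does not rescue it either, because parents along a nested stopping chain overlap with unbounded multiplicity. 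The uniform per-cube mass, guaranteed by the Besicovitch construction and exploited in \eqref{eq:estim_ell}, is precisely the ingredient that turns this step into a one-line H\"older inequality.

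A second, smaller gap: the assertion that ``each cube has a bounded number of near-neighbour same-colour cubes'' is false when neighbours have very different sizes, since a large cube can touch arbitrarily many small ones at distance comparable to \emph{their} side lengths. The paper's Lemma~\ref{lem:cubes} avoids this by ordering the cubes by size and summing $\rd(\mathcal C,Q_j)^{-\alpha}$ only over the \emph{larger} disjoint partners, where the count per distance scale is bounded and $\alpha>d$ makes the tail summable; you would need the same device. The remaining ingredients (Young's inequality for the $s>d$ tail, Jensen with $t\mapsto t^2(\log\lambda t)_+$ for the $\alpha=d$ logarithm, and---in the canonical $T=0$ case---the conversion of $\int\rho\,R^{-\alpha}$ into $\int\rho^{1+\alpha/d}$, which the paper does via the Hardy--Littlewood maximal inequality) are in the right spirit and match the paper's proof.
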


In the proof we provide an explicit value for the constant $C$ in~\eqref{eq:GC_bound} but we do not display it here since it is by no means optimal and depends on the cases. The parameters $\kappa$ and $r_0$ can be used to track the origin of the different terms in our bound~\eqref{eq:GC_bound}. The integrable part of the potential gives the $\rho^2$ term as it did in Theorem~\ref{thm:weak_w_bound}. The terms involving $r_0^\alpha$ on the second line are solely due to the divergence of $w$ at the origin. It is important that we get here the expected and optimal $\rho^{1+\alpha/d}$ due to the singularity. Finally, we have an additional term involving $T\rho$ which is an error in the entropy due to our construction. We otherwise get the optimal $T\rho\log\rho$.

In dimension $d=1$, the proof of Theorem~\ref{thm:GC_bound} is relatively easy, both in the canonical and grand-canonical cases. It is detailed for convenience in Section~\ref{sec:proof_1D}. The idea is to split the density $\rho$ into successive intervals of mass $1/2$ and then write $\rho=(2\rho_\text{odd}+2\rho_\text{even})/2$ where $\rho_\text{odd}$ is the density restricted to the odd intervals and $\rho_\text{even}$ to the even ones. We then take a trial state of the form $(\bP_\text{odd}+\bP_\text{even})/2$, where $\bP_\text{odd}$ corresponds to placing exactly one particle per odd interval at density $2\rho$ and $\bP_\text{even}$ is defined similarly. This way we have inserted some distance between the particles. It depends on the form of $\rho$ in the opposite set of intervals. The interaction between the particles can then be easily controlled in terms of $\rho^{1+\alpha/d}$, as we explain in Section~\ref{sec:proof_1D}.

In higher dimensions, there seems to be no general way of splitting $\R^d$ into disjoints sets containing a fixed mass of $\rho$, so that each set has finitely many neighbors at a given distance (except perhaps for very special densities~\cite{GigLeo-17}). We can however carry over a similar argument as in the 1D case if we allow a covering with intersections. The Besicovitch covering lemma~\cite{Guzman} allows us to work with cubes $Q_j$ intersecting with finitely many other cubes, such that $\int_{Q_j}\rho$ is any given number. We can also distribute the $Q_j$ into a finite (universal) number of subcollections so that the cubes in each family are disjoint and not too close to each other. For each collection of disjoint cubes we then use a simple tensor product similar to the 1D case. The interaction is estimated using that the length of the cubes is related to $\int_{Q_j}\rho^{1+\alpha/d}$, leading to a bound involving only $\int_{\R^d}\rho^{1+\alpha/d}$. This proof was inspired by the presentation in the recent book~\cite{FraLapWei-LT} of a proof of the Lieb-Thirring and Cwikel-Lieb-Rozenblum inequalities from~\cite{Rozenbljum-71,Rozenbljum-72,Weidl-96}, thus in a completely different context. The difficulty here is that we have no information on the number of particles in each subcollection, due to the overlaps. This is the reason why the proof works well in the grand-canonical setting, but not in the canonical case. The details are given in Section~\ref{sec:proof_GC}.

To prove the result in the canonical case at $T=0$ for $d\geq2$, we use a completely different method based on optimal transport tools from~\cite{ColMarStr-19}. As we will explain in Section~\ref{sec:proof_C}, the latter work can be used to construct a trial state $\bP$ with $\rho_\bP=\rho$ so that the distance between any two given particles on the support can be related to some average local value of the density around the particles. This is how we can obtain the bound~\eqref{eq:GC_bound} at $T=0$ in the canonical case.

The next natural step is to smear this trial measure $\bP$ and use it at $T>0$ but we could unfortunately not give an optimal bound on the entropy of the smearing. Our bound relies on the local radius $ R \myp{x} $ of a density $ \rho $, which is thoroughly studied in Section~\ref{sec:OT} and is defined as follows. Let $ 0 \leq \rho \in L^1 \myp{\mathbb{R}^d} $ with $ \int_{\R^d} \rho \myp{y} \id y > 1 $.
For each $ x \in \R^d $, we define the local radius $ R(x)$  to be the largest number satisfying
\begin{equation}
\label{eq:Rdef}
\int_{B \myp{x, R \myp{x}}} \rho \myp{y} \id y = 1.
\end{equation}
This number is always bounded below for a given $\rho\in L^1(\R^d)$ but behaves like $|x|$ at infinity. If $\rho$ has compact support, then $R(x)$ is bounded on the support of $\rho$.

\begin{theorem}[Strongly repulsive case $\alpha\geq d$ II]\label{thm:OT_block}
Suppose that the interaction $ w $ satisfies Assumption~\ref{de:shortrangenew} with $ 2\leq d \leq \alpha < \infty $. Let $T>0$ and $ 0 \leq \rho \in L^1 \myp{\R^d}$ of integer mass with $\int_{\R^d}\rho|\log\rho|<\ii$. Then we have
		\begin{align}
		F_T [\rho]&\leq  C(\kappa+T) \int_{\R^d}\rho^2+CT \int_{\R^d} \rho + T \int_{\R^d} \rho \log \rho+T \int_{\R^d} \rho \log R^d\nn\\
			&\qquad +\begin{cases}
			\dps C\kappa r_0^\alpha \int_{\R^d} \rho^{1+\frac{\alpha}{d}}&\text{for $\alpha>d$,}\\[0.4cm]
			\dps C\kappa r_0^d \left(\int_{\R^d} \rho^2+\int_{\R^d}\rho^2\big(\log r_0^d\rho\big)_+\right)&\text{for $\alpha=d$,}
			\end{cases}\label{eq:can_bound2}
		\end{align}
where the constant $ C $ only depends on the dimension $ d $ and the powers $\alpha,s$ from Assumption~\ref{de:shortrangenew}.
\end{theorem}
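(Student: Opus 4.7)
The approach is to combine the Colombo--De Marco--Stra optimal-transport construction from~\cite{ColMarStr-19} (already the engine behind the canonical $T=0$ statement of Theorem~\ref{thm:GC_bound}) with the Block approximation from the proof of Theorem~\ref{thm:representability}. The point is that the raw OT measure is generically singular, so useless at $T>0$, but its support enjoys a pairwise separation governed by the local radius $R$; smearing it cube by cube over a Whitney-type decomposition adapted to $R$ yields an absolutely continuous trial state with the correct density that retains enough separation to reuse the singular-part energy estimate.

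Concretely, I would take $\bP_0$ the $N$-particle OT measure with $\rho_{\bP_0}=\rho$, whose support satisfies $|x_j-x_k|\geq c\min(R(x_j),R(x_k))$ for a universal $c>0$ (the very property driving the $T=0$ canonical bound of Theorem~\ref{thm:GC_bound}). Using the Lipschitz regularity of $R$ from Section~\ref{sec:OT}, one builds a dyadic decomposition $\{C_j\}$ of $\R^d$ with side lengths $\ell_j\sim\eta\inf_{C_j}R$ for a small universal $\eta>0$, so that $m_j:=\int_{C_j}\rho\leq 1$ and each cube sits inside a ball of radius $\leq \sqrt d\,\ell_j\ll R$. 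I then define the block approximation
\begin{equation*}
\tilde\bP := \sum_{j_1,\dots,j_N} \bP_0\myp{C_{j_1}\times\cdots\times C_{j_N}}\bigotimes_{i=1}^N \frac{\rho\,\1_{C_{j_i}}}{m_{j_i}},
\end{equation*}
the sum being restricted to tuples with $m_{j_i}>0$. A direct computation gives $\rho_{\tilde\bP}=\rho$, and since within its own cube each particle is displaced by at most $\sqrt d\,\ell_j$, for $\eta$ small enough the support of $\tilde\bP$ still satisfies $|x_j-x_k|\geq (c/2)\min(R(x_j),R(x_k))$.

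With this geometric control, the interaction energy $\cU_N(\tilde\bP)$ is estimated by the exact same singular-potential computation used in Theorem~\ref{thm:GC_bound}, producing the $\kappa\int\rho^2$ and $\kappa r_0^\alpha\int\rho^{1+\alpha/d}$ (or, for $\alpha=d$, the logarithmic analogue) contributions on the right-hand side of~\eqref{eq:can_bound2}. For the entropy, the Block entropy lemma~\ref{lem:block_entropy} yields
\begin{equation*}
-\cS_N(\tilde\bP)\leq \int_{\R^d}\rho\log\rho - \sum_j m_j\log m_j = \int_{\R^d}\rho\log\rho - \int_{\R^d}\rho(x)\log m(x)\,\dx,
\end{equation*}
where $m(x):=m_j$ on $C_j$. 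Splitting the last integral according to whether $m(x)\geq 1/e$ or $m(x)<1/e$, and using $m_j\gtrsim \rho(x)\ell_j^d$ together with $\ell_j\sim\eta R(x)$ in the small-$m$ regime, one converts $-\log m(x)$ into $C + \log\bigl(R(x)^d\rho(x)\bigr)_+$. This produces precisely the $T\int\rho$ and $T\int\rho\log R^d$ contributions of~\eqref{eq:can_bound2}, any excess $\rho\log\rho$-type overshoot being absorbed either into the entropy term already present or into the $(\kappa+T)\int\rho^2$ contribution.

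The main obstacle is precisely this last entropy bookkeeping. The naive bound $-\sum_j m_j\log m_j\leq (\#\text{nonempty cubes})/e$ is too crude because that number depends on the full geometry of $\rho$; one must exploit the fact that the Whitney cubes track the scale $R(x)$ in order to convert the penalty into the specific $\int\rho\log R^d$ form. The competition between energy (which pushes towards small cubes to preserve separation) and entropy (which pushes towards large cubes to avoid the $\log$ penalty) pins the scale at $\ell\sim\eta R$, and making this balance rigorous requires the Lipschitz estimate on $R$ and a careful pointwise control on how $\rho$ oscillates inside each $C_j$. The restriction $d\geq 2$ is inherited from the Colombo--De Marco--Stra theorem, which furnishes the key separation property of $\bP_0$ only in that range (the $d=1$ case being already covered by Theorem~\ref{thm:GC_bound} via a different argument).
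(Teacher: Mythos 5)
Your overall architecture is the paper's: start from the optimal-transport state of Theorem~\ref{thm:clstate}, smear it by a block approximation at scale $\sim\eps R$, check via the Lipschitz continuity of $R$ that the smeared support still satisfies the separation condition~\eqref{eq:plansupport2}, and then feed the energy into Propositions~\ref{prop:clbound} and~\ref{prop:clbound3} and the entropy into Lemma~\ref{lem:block_entropy}. (The paper uses a Besicovitch cover by balls $B(y_j,\eps R(y_j))$ with bounded overlap rather than a Whitney partition into disjoint cubes, but that difference is cosmetic.) The gap is in the step you yourself identify as the crux, namely the estimate of $-\sum_j m_j\log m_j$. Your bound rests on the pointwise inequality $m_j\gtrsim\rho(x)\ell_j^d$ for $x\in C_j$, i.e.\ $\rho(x)\lesssim \ell_j^{-d}\int_{C_j}\rho$, which is false for a general $\rho\in L^1$: a density with a thin spike inside $C_j$ violates it on a set of positive measure, and nothing in the construction excludes this. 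Moreover, even granting that inequality, it yields $-\log m_j\leq C_\eta-\log\big(\rho(x)R(x)^d\big)$, whose positive part is $\big(\log\tfrac{1}{\rho R^d}\big)_+$ rather than $\big(\log(\rho R^d)\big)_+$; the surviving contribution $\int\rho\,\big(\log\tfrac{1}{\rho R^d}\big)_+$ lives on the set where $\rho R^d<1$ and is not controlled by any term on the right of~\eqref{eq:can_bound2}. So the claimed conversion of $-\log m(x)$ into the $T\int\rho$ and $T\int\rho\log R^d$ terms does not go through.

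The paper's resolution avoids any pointwise comparison of $\rho$ with its local average. Writing $-m_j\log m_j=m_j\log R(y_j)^d-m_j\log\big(R(y_j)^dm_j\big)$ and using the elementary inequality $-s\log(ts)\leq 1/(et)$, one gets $-m_j\log m_j\leq m_j\log R(y_j)^d+\tfrac{1}{e}R(y_j)^{-d}$. The first piece becomes $d\int\rho\log R+C\int\rho$ by the Lipschitz continuity of $R$, and the second is summed with Lemma~\ref{lem:Rconfigbound} (with $p=d$), which applies because the centers $y_j$ of the cover are themselves separated by $\gtrsim\eps\,(R(y_j)+R(y_k))$; this yields $C\eps^{-d}\int\rho^2$ and is precisely the origin of the extra $CT\int\rho^2$ term in the statement, a term your accounting would not produce. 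Your Whitney cube centers are also separated at scale $\eta R$, so this argument could be transplanted to your decomposition, but as written your entropy estimate is incorrect. You also do not treat non-compactly supported densities; the paper reduces to compact support and removes the restriction via Theorem~\ref{thm:non_compact}.
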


The main difference compared to~\eqref{eq:GC_bound} is the additional term $T\int\rho\log R^d$, which we conjecture should not be present. It is only affecting the bound in places where $R$ is large on the support of $\rho$, that is, where one cannot find a sufficient amount of mass at a finite distance of $x$. Another small difference is the additional term $CT\int\rho^2$ due to our way of estimating the entropy. The proof is detailed in Section~\ref{sec:proof_OT_block} below.

The upper bounds in Theorems~\ref{thm:GC_bound} and~\ref{thm:OT_block} will be very useful for our next work~\cite{JexLewMad-23b_ppt} where we study $F_T[\rho]$ and $G_T[\rho]$ for extended systems. The sub-optimal upper bound~\eqref{eq:can_bound2} in the canonical case will be sufficient in this context.

\begin{remark}[Lower bounds]\rm
Even when $w$ really behaves like $|x|^{-\alpha}$ at the origin (for instance satisfies $w(x)\geq c|x|^{-\alpha}$ for some $c>0$), a lower bound in the form~\eqref{eq:GC_bound} cannot hold in general. This is because the density can be large in regions where there is only one particle at a time, which does not create any divergence in the interaction. As an example, consider $N$ points $X_1,...,X_N\in\R^d$ and place around each point one particle in the state $\chi_r:=|B_r|^{-1}\1_{B_r}$, with $r$ small enough. The corresponding state is the (symmetrization of the) tensor product $\bP_r=\bigotimes_{j=1}^N\chi_r(\cdot-X_j)$. Assuming that $w$ is continuous, its interaction energy behaves as
$$\lim_{r\to0}\cU_{N}(\bP_r)=\sum_{1\leq j<k\leq N}w(X_j-X_k)$$
hence stays finite, whereas the entropy equals
$$\cS_{N}(\bP_r)=-N\int\chi_r\log\chi_r=N\log(|B_1|r^d)\underset{r\to0}\longrightarrow-\ii.$$
On the other hand, the right side of~\eqref{eq:GC_bound} diverges much faster, like $Nr^{-\alpha}$. This proves that a lower bound of the form~\eqref{eq:GC_bound} cannot hold for all possible densities.

Nevertheless, it is expected that the term $\int\rho^{1+\alpha/d}$ should appear when there are many particles in a small domain and is thus optimal in such situations. For instance, assuming $w\geq c|x|^{-\alpha}$ for $|x|\leq r_0$ and taking $\rho=N|B_{r_0/2}|^{-1}\1_{B_{r_0/2}}$ ($N$ particles at uniform density in the small ball), we see that
$$F_T[\rho]\geq \min_{x_1,...,x_N\in B_{r_0/2}}\left( \sum_{1\leq j<k\leq N}\frac{c}{|x_j-x_k|^\alpha}\right)+T\log(N/|B_{r_0/2}|)-TN.$$
The first minimum is known to behave like $N^{1+\alpha/d}r_0^{-\alpha}$ in the limit $N\to\ii$~\cite[Lemma~1]{Lewin-22}, which is exactly proportional to $\int\rho^{1+\alpha/d}$. Thus in this case, the lower bound holds and the power $1+\alpha/d$ is optimal.
\end{remark}

\subsection{The hard-core case}\label{sec:hard-core}
We conclude this section with a discussion of the hard-core case, which is notoriously more difficult~\cite[Sec.~9]{ChaChaLie-84}. We start with the question of representability of a given density and then turn to some upper bounds on the free energy.

\subsubsection{Representability}
Let $r_0>0$ be a positive number and consider the hard-core potential $w_{r_0}(x)=(+\ii)\1(|x|<r_0)$. Then we have for any $N$-particle probability measure $\bP$
$$\cU_{N}(\bP)=\begin{cases}
0&\text{if $|x_j-x_k|\geq r_0$ $\forall j\neq k$, $\bP$--almost surely,}\\
+\ii&\text{otherwise.}
\end{cases}$$
The set of $\bP$'s such that $\cU_{N}(\bP)=0$ is convex and its extreme points are the symmetric tensor products of Dirac deltas located at distance $\geq r_0$ from each other. It follows that the convex set of $w_{r_0}$--representable densities is the convex hull of the densities in the form
\begin{equation}
\rho=\sum_{j=1}^N \delta_{x_j},\qquad \min_{j\neq k}|x_j-x_k|\geq r_0.
\label{eq:extreme_hard_core}
\end{equation}
There is a similar result in the grand-canonical case. In spite of this simple characterization, it seems very hard, in general, to determine whether a given density belongs to this convex set or not.

In dimension $d=1$, the problem can be solved exactly. Any extreme point~\eqref{eq:extreme_hard_core} satisfies
\begin{equation}
\rho \big([x,x+r_0)\big)\leq 1,\qquad\forall x\in\R,
 \label{eq:cond_1D_hard_core}
\end{equation}
since there is always at most one Dirac delta in any interval of length $r_0$. This property pertains on the whole convex hull of $w_{r_0}$--representable densities. Conversely, any positive measure $\rho$ with $\rho(\R)=N$ satisfying~\eqref{eq:cond_1D_hard_core} can be written as a convex combination of Dirac deltas at distance $\geq r_0$. To see this, assume for simplicity $\rho\in L^1(\R)$ and define as in~\cite{ColPasMar-15} the non-decreasing function $t\mapsto x(t)$ on $(0,N)$ so that
$$\int_{-\ii}^{x(t)}\rho(s)\,\rd s=t,\qquad \forall t\in(0,N).$$
To avoid any ambiguity when the support of $\rho$ is not connected, we can choose $x(t)$ to be the largest possible real number satisfying the above condition. The function $t\mapsto x(t)$ is differentiable, except possibly on a countable set, with $x'(t)=\rho(x(t))^{-1}$. When $\rho>0$ almost surely, we have $\lim_{t\to0^+}x(t)=-\ii$ and $\lim_{t\to N^-}x(t)=+\ii$. From the definition of $x(t)$ we have
\begin{equation}
\rho=\int_0^N\delta_{x(t)}\,\rd t.
 \label{eq:decomp_rho_1D}
\end{equation}
Indeed, if we integrate the right side against some continuous function $f$ we find
$\int_0^N f(x(t))\,\rd t=\int_\R f(s)\rho(s)\,\rd s$ after changing variable $s=x(t)$. Now we can also rewrite~\eqref{eq:decomp_rho_1D} as
\begin{equation}
\rho=\int_0^1\sum_{k=0}^{N-1}\delta_{x(t+k)}\,\rd t.
 \label{eq:decomp_rho_1D_bis}
\end{equation}
By definition of $x(t)$ we have
$$\int_{x(t+k)}^{x(t+k+1)}\rho(s)\,\rd s=1,\qquad \forall k=0,...,N-2,\quad \forall t\in(0,1),$$
and therefore $|x(t+k+1)-x(t+k)|\geq r_0$ when the condition~\eqref{eq:cond_1D_hard_core} is satisfied. Hence~\eqref{eq:decomp_rho_1D_bis} is the sought-after convex combination of delta's located at distance $\geq r_0$. The corresponding $N$-particle probability is
\begin{equation}
\bP=\Pi_s\int_0^1\delta_{x(t)}\otimes\delta_{x(t+1)}\otimes\cdots\otimes\delta_{x(t+N-1)}\,\rd t
\label{eq:form_Monge_1D}
\end{equation}
where
\begin{equation}
	\Pi_s \myp{f_1 \otimes\cdots\otimes f_N}
	= \frac{1}{N!} \sum\limits_{\sigma \in \gS_N} f_{\sigma\myp{1}} \otimes\cdots\otimes f_{\sigma\myp{N}},
	\label{eq:def_Pi_s}
\end{equation}
is the symmetrization operator. At positive temperature, the previous state can be regularized using the block approximation described in the proof of Theorem~\ref{thm:representability}, provided that $\int_\R\rho|\log\rho|<\ii$ and~\eqref{eq:cond_1D_hard_core} holds with a strict inequality.

In dimensions $d\geq2$, the situation is much less clear. The condition~\eqref{eq:cond_1D_hard_core} can be re-expressed in the form
\begin{equation}
\boxed{R_\rho:=\min_{x\in\R^d}R(x)\geq \frac{r_0}{2}}
\label{eq:cond_R_rho_hard_core}
\end{equation}
where $R(x)$ is the radius previously defined in~\eqref{eq:Rdef}. This can also be written in the form
$$\int_{B(x,r_0/2)}\rho\leq1,\qquad \forall x\in\R^d.$$
This is definitely a \emph{necessary condition} for a density to be $w_{r_0}$--representable, in dimension $d\geq1$. Otherwise we would be able to find an $x\in\R^d$ and an $R<r_0/2$ such that $\int_{B(x,R)}\rho>1$. But then the probability that there are at least two particles in the ball $B(x,R)$ cannot vanish for any $\bP$ of density $\rho$ and those are at distance $<r_0$. This was already mentioned in~\cite[Sec.~9]{ChaChaLie-84}.

For $d\geq2$ the condition~\eqref{eq:cond_R_rho_hard_core} is definitely \emph{not sufficient} for a density to be representable. A counter example arises naturally within the \emph{sphere packing} problem. Recall that the $d$-dimensional \emph{sphere packing density}
\begin{equation}
\rho_c(d):=\lim_{\ell\to\ii}\frac{\max\{ N\ :\ \exists x_1,...,x_N\in \Omega_\ell,\ |x_j-x_k|\geq1\}}{|\Omega_\ell|}
\label{eq:sphere_packing}
\end{equation}
gives the maximal number of points per unit volume one can put while ensuring that they are at distance $\geq1$ to each other. Here $\Omega$ is any fixed smooth domain and $\Omega_\ell=\ell\Omega$. The packing density equals $\rho_c(1)=1$ in dimension $d=1$ and is otherwise only known in dimensions $ d\in\{ 2,3,8,24\} $, for which it is given by some special lattices~\cite{Cohn-17,Viazovska-21}. The \emph{sphere packing fraction} is defined by
$$v_c(d):=\rho_c (d)|B_{1/2}|=2^{-d}\rho_c (d)|B_{1}|$$
and represents the fraction of the volume occupied by the balls. This is simply $v_c(1)=1$ in dimension $d=1$ but is strictly less than 1 for $d\geq2$. Some volume has to be left unoccupied due to the impossibility to fill space with disjoint balls of fixed radius.
It has been shown that $v_c(d)$ tends to 0 exponentially fast in the limit $d\to\ii$ but its exact behavior is still unknown~\cite{TorSti-06}. Let us now consider a constant density $\rho(x)=\rho_0\1_{\Omega_\ell}(x)$ over a large domain $\Omega_\ell=\ell\Omega$ (for instance a ball). Then we have $R(x)=(\rho_0|B_1|)^{-1/d}$ well inside $\Omega_\ell$, whereas $R(x)\geq (\rho_0|B_1|)^{-1/d}$ close to the boundary. This shows that for this density
$$R_\rho=\min_{x\in\R^d}R(x)=(\rho_0|B_1|)^{-\frac1d}=\frac{r_0}2 \myp[\bigg]{ \frac{r_0^{-d}\rho_c(d)}{\rho_0v_c(d)} }^{\frac1d}.$$
In particular, the condition~\eqref{eq:cond_R_rho_hard_core} is satisfied whenever $\rho_0\leq r_0^{-d}\rho_c(d)/v_c(d)$.
On the other hand, it is clear from the packing problem (rescaled by $r_0$) that when $\rho_0>r_0^{-d}\rho_c(d)$ the density cannot be representable for $\ell$ large enough. Otherwise we would be able to place $N=\rho_0|\Omega_\ell|>r_0^{-d}\rho_c(d)|\Omega_\ell|$ points in $\Omega_\ell$ at distance $r_0$, which contradicts the definition of $\rho_c(d)$. In conclusion, we have found that, in dimensions $d\geq2$, constant densities $\rho_0\1_{\Omega_\ell}$ with
$$r_0^{-d}\rho_c(d)<\rho_0\leq \frac{r_0^{-d}\rho_c(d)}{v_c(d)}$$
satisfy~\eqref{eq:cond_R_rho_hard_core} but cannot be $w_{r_0}$--representable for $\ell\gg1$.

As a side remark, we mention that there are representable densities satisfying~\eqref{eq:cond_R_rho_hard_core}, with $R_\rho$ as close as we want to $r_0/2$. We can just take the sum of two Dirac deltas placed at distance $R\geq r_0$ or a smooth approximation of it. This proves that there cannot exist a simple necessary and sufficient condition of hard core representability involving $R_\rho$ only, in dimensions $d\geq2$. This is in stark contrast with the one-dimensional case.

There exists, however, a simple \emph{sufficient condition} in a form that was conjectured in~\cite[p.~116]{ChaChaLie-84}. In~\cite[Theorem 4.1]{ColMarStr-19} (see also Theorem~\ref{thm:clstate} below), it is proved that any density satisfying
$$\boxed{R_\rho\geq r_0}$$
is $w_{r_0}$--representable. The same holds when $T>0$ if one puts a strict inequality. It would be interesting to know if such a result is valid for $R_\rho\geq c_d r_0$ with $c_d<1$, depending on the dimension.

The conclusion of our discussion is that there seems to exist no simple characterization of hard core representability in dimensions $d\geq2$, involving averages of $\rho$ over balls. There are necessary or sufficient conditions but they do not match.

\subsubsection{Upper bounds}
Next we discuss upper bounds in the hard core case.

Even if we do not completely understand when a density is hard-core representable, the energy is very easy to bound when it is the case. Let us assume that $w$ satisfies Assumption~\ref{de:shortrangenew} with $\alpha=+\ii$ and that $\rho\in L^1(\R^d)$ is $w$--representable. For simplicity we also assume that $w=+\ii$ on $B_{r_0}$. Then, for any optimizer $\bP$, we have $|x_j-x_k|\geq r_0$ for $j\neq k$, $\bP$--almost surely. This implies
\begin{equation}
F_0[\rho]= \cU_N(\bP)\leq \int_{(\R^d)^N}\sum_{1\leq j<k\leq N} \frac{\kappa\1(|x_j-x_k|\geq r_0)}{|x_j-x_k|^s}\,\rd\bP\leq C\kappa Nr_0^{-s}
 \label{eq:simple_hard_core_T0}
\end{equation}
by~\cite[Lemma~9]{Lewin-22}. The constant $C$ only depends on $s$ and $d$. Upper bounds are easy once we know that the particles cannot get too close.

Constructing trial states with a good entropy is more difficult. Our proofs of Theorems~\ref{thm:GC_bound} and~\ref{thm:OT_block} work in the hard-core case, but they require additional conditions, of the form
$$R_\rho>r_0\qquad\text{or}\qquad\int_{B(x,r_0/2)}\rho\leq \eps$$
for a sufficiently small $\eps$. We do not state the corresponding results here and rather refer the reader to Remarks~\ref{rmk:Hard-core-1D},~\ref{rmk:Hard-core-GC}, and~\ref{rmk:Hard-core-CT} below. In the rest of this section we quickly discuss the grand-canonical 1D case which has been studied in a famous paper of Percus~\cite{Percus-76} and the situation where $\rho$ is bounded uniformly.

\subsubsection*{The 1D grand-canonical Percus formula}
The grand-canonical inverse problem was completely solved by Percus in dimension $d=1$ in~\cite{Percus-76} (see also~\cite{RobVar-81}). Under the optimal assumption that $R_\rho>r_0/2$, he proved that the grand-canonical Gibbs state with external potential
$$V(x)=-\log\rho(x)+\log\left(1-\int_{x-r_0}^x\rho\right)-\int_x^{x+r_0}\frac{\rho(s)}{1-\int_{s-r_0}^s\rho}\,\rd s$$
and hard-core $w_{r_0}$ has the density $\rho$. Since the potential $\tilde V=V+\log\rho$ solves the supremum in the dual formula~\eqref{eq:duality_GC}, we obtain
\begin{equation}
\boxed{G_T[\rho]= T\int_\R\rho(x)\big(\log\rho(x)-1\big)\,\rd x-T\int_\R\rho(x)\log\left(1-\int_{x-r_0}^x\rho\right)\rd x}
\label{eq:estim_Percus}
\end{equation}
for the hard core potential $w_{r_0}$. This explicit expression shows us that, in one dimension, the nonlocality is solely due to the second logarithmic term, which involves the local average $\int_{x-r_0}^x\rho$ over a window of length $r_0$. This is further discussed in~\cite{RobVar-81}.

For a general potential $w$ satisfying Assumption~\ref{de:shortrangenew}, we only obtain an upper bound and need to add $C\kappa r_0^{-s}\int_\R\rho$ by~\eqref{eq:simple_hard_core_T0}. We can estimate the logarithm by assuming, for instance, that $\int_{x-r_0}^x\rho\leq 1-\eps$ for all $x\in\R$.

To our knowledge the canonical problem was never solved in the manner of Percus. It would be interesting to derive an upper bound on $F_T[\rho]$ of the same form as the right side of~\eqref{eq:estim_Percus}.

\subsubsection*{Bound for densities uniformly bounded by the packing density}
In dimensions $d\geq2$ we have no simple criterion of representability, as we have seen. One simpler situation is when $\rho$ is everywhere bounded above by the sphere packing density,  which we have defined in~\eqref{eq:sphere_packing}. Then we can prove it is representable and furnish an explicit upper bound on its grand-canonical free energy.

\begin{theorem}[Hard-core case with packing density bound]\label{thm:hard-core2}
Assume that $w$ satisfies Assumption~\ref{de:shortrangenew} with $\alpha=+\ii$. Let $\rho_c(d)$ be the sphere packing density in~\eqref{eq:sphere_packing} and $v_c(d)=2^{-d}\rho_c(d)|B_1|$ be the volume fraction. Let $\rho \in L^1 \myp{\R^d,\R_+}$ be such that
$$\rho(x)\leq (1-\eps)^dr_0^{-d}\rho_c(d)$$
for some $\eps\in(0,1)$. We also assume that $\int_{\R^d}\rho|\log\rho|<\infty$ if $ T > 0 $. Then
\begin{equation*}
G_T[\rho]\leq\frac{C\kappa}{r_0^s}\int_{\mathbb R^d}\rho+T\int_{\R^d}\rho\log\rho+T\log\left(\frac{2^d}{\eps^d v_c(d)}\right)\int_{\R^d}\rho,
\end{equation*}
with a constant $C$ depending only on the dimension $d$ and the power $s$ from Assumption~\ref{de:shortrangenew}.
\end{theorem}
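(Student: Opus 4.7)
The plan is to construct an explicit grand-canonical trial state $\bP$ with density $\rho$, supported on hard-core configurations, by independently distributing particles over \emph{inner balls} of a near-optimal periodic packing randomized by a uniform shift, and then to bound the energy and entropy directly.

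For the geometric setup, fix $\delta>0$ and a periodic packing $\{X_i\}_{i\in L}\subset\R^d$ with pairwise distance $D=r_0/(1-\eps/2)$ and density $\rho_{\mathrm{pack}}=(1-\delta)\rho_c(d)/D^d$, which exists by the very definition of the sphere packing density. Setting $\tilde R=\eps D/4$, the inner balls $B(X_i,\tilde R)$ are pairwise disjoint and any two points chosen in distinct inner balls are at distance at least $D-2\tilde R=r_0$. The crucial volume fraction is $M:=\rho_{\mathrm{pack}}|B_{\tilde R}|=(1-\delta)\eps^d v_c(d)/2^d$, which as $\delta\to 0$ tends to the inverse of the entropy prefactor in the statement. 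Randomize by a uniform shift $Y$ in a fundamental domain $F$ of $L$, and conditionally on $Y$ let $\bP^Y$ be the product over cells in which cell $i$ independently contains no particle with probability $1-p_i^Y$, or one particle at $y$ with density $\rho(y)\1_{B(X_i+Y,\tilde R)}(y)/c_i^Y$, where $c_i^Y:=\int_{B(X_i+Y,\tilde R)}\rho$ and $p_i^Y:=c_i^Y/M$. The hypothesis $\rho\leq(1-\eps)^d r_0^{-d}\rho_c(d)\leq\rho_{\mathrm{pack}}$ (valid for $\delta$ small) ensures $p_i^Y\in[0,1]$.

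A short computation using the translation invariance of the shifted packing and the disjointness of the inner balls shows that $\bP:=\int_F\bP^Y\,\rd Y/|F|$ has one-particle density exactly $\rho$ (the factor $M$ in $p_i^Y$ cancels against the average volume fraction $M$ of $\cup_i B(X_i+Y,\tilde R)$). By construction $|x_j-x_k|\geq r_0$ almost surely under $\bP$, so combining~\eqref{eq:assumption_w} with the lemma from~\cite{Lewin-22} already used in~\eqref{eq:simple_hard_core_T0} gives $\cU(\bP)\leq C\kappa r_0^{-s}\int_{\R^d}\rho$. For the entropy, the disjointness of inner balls makes $\bP^Y$ factorize, and a direct computation yields
\begin{equation*}
\cS(\bP_i^Y)=-(1-p_i^Y)\log(1-p_i^Y)+p_i^Y\log M-\frac{1}{M}\int_{B(X_i+Y,\tilde R)}\rho\log\rho.
\end{equation*}
Dropping the nonnegative first term, summing over $i$, averaging over $Y$, and using that $Y$-translation invariance turns $\sum_i p_i^Y$ into $\int_{\R^d}\rho$ in average and $\1_{\cup_i B(X_i+Y,\tilde R)}$ into the constant $M$, concavity of $\cS$ gives $\cS(\bP)\geq\log M\int_{\R^d}\rho-\int_{\R^d}\rho\log\rho$. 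Combining the energy and entropy bounds and letting $\delta\to 0$ by lower semi-continuity of $G_T$ (Remark~\ref{rmk:lsc_GC}) produces the stated inequality, the prefactor $\log(2^d/(\eps^d v_c(d)))$ being exactly $\log(1/M)$ in the limit.

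The main obstacle is calibrating the three parameters $D$, $\tilde R$, $\rho_{\mathrm{pack}}$ so that hard-core ($D-2\tilde R\geq r_0$), representability ($\rho\leq\rho_{\mathrm{pack}}$), and existence of the packing ($\rho_{\mathrm{pack}}\leq\rho_c(d)/D^d$) hold simultaneously while producing the exact value $M=\eps^d v_c(d)/2^d$ in the limit. The choice $D=r_0/(1-\eps/2)$, $\tilde R=\eps D/4$ saturates these three geometric constraints at once and yields the constant claimed in the theorem.
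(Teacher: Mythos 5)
Your construction is correct and is essentially the paper's proof in different clothing: the independent-Bernoulli-cell state $\bP^Y$ you build is exactly the geometric localization $\bP(\cdot+\tau)_{|\theta}$ of a smeared, uniformly translated periodic packing (a ``floating crystal'') by $\theta=\rho/\tilde\rho$, cf.~\eqref{eq:geom_localization_tensor}, and your per-cell entropy identity is precisely the paper's lemma on the entropy of localized tensor products. The only cosmetic differences are that the paper works with a large finite crystal and compactly supported $\rho$ before passing to the limit, and that your appeal to lower semi-continuity as $\delta\to0$ is unnecessary, since the bound holds for every small $\delta>0$ and one may simply optimize the right-hand side.
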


The idea of the proof is to first construct a trial state for a constant density $\rho_0\approx(1-\eps)^dr_0^{-d}\rho_c(d)$ by using a periodic sphere packing with a large period, uniformly averaged over translations (often called a ``floating crystal''~\cite{LewLieSei-19b}). We then ``geometrically localize''~\cite{Lewin-11} this state to make it have density $\rho$. The proof is detailed later in Section~\ref{sec:proof_hard-core}.

\section{Proof of Theorem~\ref{thm:GC_bound} in dimension \texorpdfstring{$d=1$}{d=1}}\label{sec:proof_1D}

We start with the one-dimensional canonical case, for which the argument is relatively easy. We detail the proof for the convenience of the reader and because this will pave the way for the more complicated covering methods in higher dimensions. We only consider here the canonical case. The grand-canonical bound~\eqref{eq:GC_bound} follows using~\eqref{eq:upper_G_T_F_T}, but in the next section we will provide a direct proof in the grand-canonical case which also works in dimension $d=1$.

\begin{theorem}[$ d = 1 $]
	Suppose the interaction $ w $ satisfies Assumption~\ref{de:shortrangenew} with $ 1 \leq \alpha < \infty $.
	Let $ T \geq 0 $ and assume that $ \int_{\R} \rho \abs{\log \rho} < \infty $ for $ T > 0 $.
	Then for any density $ 0 \leq \rho \in L^1 \myp{\R} $ with $ \int_{\R} \rho \in \N $, we have
	\begin{align}
		F_T \myb{\rho}
		\leq{}& \frac{4 \kappa s}{s-1} \int_{\R} \rho^2 + \log \myp{2} T \int_{\R} \rho + T \int_{\R} \rho \log \rho \nn \\
		&+	\begin{cases}
				\dps \frac{2^{3+2\alpha}}{\alpha -1} \kappa r_0^\alpha \int_{\R} \rho^{1+\alpha}	&\text{for $\alpha>1$,}\\[0.4cm]
				\dps 2^5 \kappa r_0 \myp[\bigg]{2 \log \myp{2} \int_{\R} \rho^2+\int_{\R}\rho^2 \myp[\big]{\log r_0 \rho}_+ }	&\text{for $\alpha=1$.}
			\end{cases}
	\label{eq:onedim}
	\end{align}
\end{theorem}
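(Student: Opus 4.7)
\smallskip

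\noindent\textbf{Proof plan.}
The plan follows the strategy sketched in the introduction: partition $\R$ into $2N$ intervals of $\rho$-mass $1/2$ and use a convex combination of ``odd'' and ``even'' trial states which already carry a built-in separation between particles. Assume $N\geq 1$ (the case $N=0$ is trivial). Pick cut-points $-\infty=a_0<a_1<\cdots<a_{2N}<a_{2N+1}=+\infty$ with $\int_{I_k}\rho=1/2$ for $I_k=[a_{k-1},a_k)$ and $k=1,\dots,2N$, and set $\rho_o:=\rho\1_{\bigcup_j I_{2j-1}}$, $\rho_e:=\rho\1_{\bigcup_j I_{2j}}$. Writing $\rho=\tfrac12(2\rho_o)+\tfrac12(2\rho_e)$, the convexity from Remark~\ref{rmk:duality} gives
\begin{equation*}
F_T[\rho]\leq \tfrac12 F_T[2\rho_o]+\tfrac12 F_T[2\rho_e],
\end{equation*}
so it suffices to bound each piece.

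\smallskip

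\noindent\textbf{Trial state and entropy.} For $2\rho_o$ I use the symmetrized product $\bP_o:=\Pi_s\bigotimes_{j=1}^N f_j$ with $f_j:=2\rho\1_{I_{2j-1}}$ (a probability measure, since $\int_{I_{2j-1}}\rho=1/2$). Since the supports $I_{2j-1}$ are pairwise disjoint, a direct unfolding of $\Pi_s$ shows that $\rho_{\bP_o}=\sum_j f_j=2\rho_o$, and a short calculation exploiting the disjointness collapses the $N!$ factor in~\eqref{eq:canentropy}, yielding the clean formula
\begin{equation*}
\cS_N(\bP_o)=-\sum_{j=1}^N\int f_j\log f_j=-\int 2\rho_o\log(2\rho_o).
\end{equation*}
After averaging with the analogous computation for $\bP_e$, the entropy contribution to the upper bound becomes exactly $T\log(2)\int\rho+T\int\rho\log\rho$, which matches the second line of~\eqref{eq:onedim}.

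\smallskip

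\noindent\textbf{Interaction estimate.} Using Assumption~\ref{de:shortrangenew} I split $w\leq \kappa\1(|z|<r_0)(r_0/|z|)^\alpha+\kappa(1+|z|^s)^{-1}$. The regular tail is bounded by Young's inequality: since $\sum_j f_j=2\rho_o\leq 2\rho$,
\begin{equation*}
\sum_{j<k}\iint\frac{4\kappa\,f_j(x)f_k(y)}{1+|x-y|^s}\,dx\,dy\leq 4\kappa\iint\frac{\rho_o(x)\rho_o(y)}{1+|x-y|^s}\,dx\,dy\leq \frac{8\kappa s}{s-1}\int\rho_o^2,
\end{equation*}
where I use $\|(1+|z|^s)^{-1}\|_{L^1(\R)}\leq 2s/(s-1)$ for $s>1$. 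Averaging gives the $\tfrac{4\kappa s}{s-1}\int\rho^2$ term. For the singular part I use the geometric fact that if $x\in I_{2j-1}$ and $y\in I_{2k-1}$ with $k>j$, then $y-x\geq \ell_{2j}+\ell_{2j+1}+\cdots+\ell_{2k-2}$; in particular $y-x\geq \ell_{2j}$. After changing variables $z=y-x$ and restricting to $z>0$ (the $z<0$ part is symmetric), the singular contribution is controlled by
\begin{equation*}
8\kappa r_0^\alpha\int_\R\rho_o(x)\int_{\ell(x)}^{r_0}\frac{\rho(x+z)}{z^\alpha}\,dz\,dx,
\end{equation*}
where $\ell(x)=\ell_{2j}$ for $x\in I_{2j-1}$. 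The key inequality is Jensen's applied on $I_{2j}$: since $\int_{I_{2j}}\rho=1/2$,
\begin{equation*}
\ell_{2j}^{-\alpha}\leq 2^{1+\alpha}\int_{I_{2j}}\rho^{1+\alpha}.
\end{equation*}
For $\alpha>1$, integrating $\int_{\ell}^{r_0}dz/z^\alpha\leq \ell^{1-\alpha}/(\alpha-1)$ and combining with Jensen produces a bound of the form $\tfrac{C_\alpha}{\alpha-1}\kappa r_0^\alpha\int\rho^{1+\alpha}$. For $\alpha=1=d$ the integral $\int_\ell^{r_0}dz/z$ produces the logarithm $\log(r_0/\ell)$, and the corresponding Jensen step upgrades to $\ell^{-1}\log(r_0/\ell)\lesssim \int_{I_{2j}}\rho^2(\log r_0\rho)_+$, matching the $\alpha=1$ line of~\eqref{eq:onedim}.

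\smallskip

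\noindent\textbf{Main obstacle.} The routine parts are the covexity reduction, the entropy identity from disjoint supports, and the Young estimate on the tail. The delicate step is the singular interaction: one must simultaneously control pairs of odd intervals at all separations $<r_0$, not only nearest neighbours, and convert the resulting sum into the intrinsic density quantity $\int\rho^{1+\alpha}$ with a constant independent of $\rho$. This is what forces the integral comparison for $z\mapsto z^{-\alpha}$, produces the $1/(\alpha-1)$ singularity as $\alpha\downarrow 1$, and explains the appearance of the logarithmic correction $\rho^2(\log r_0\rho)_+$ in the borderline case $\alpha=d=1$.
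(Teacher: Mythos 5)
Your overall strategy is exactly the one the paper uses: split $\R$ into $2N$ consecutive intervals of $\rho$-mass $1/2$, take the convex combination of the two symmetrized tensor products built on the odd and even families, compute the entropy using disjointness of supports (which cancels the $N!$), and treat the tail of $w$ by Young's inequality. All of that part of your argument is correct and matches the paper.

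The gap is in the singular part, which is the heart of the proof. Your reduction to
$\int_\R\rho_o(x)\int_{\ell(x)}^{r_0}\rho(x+z)\,z^{-\alpha}\,\rd z\,\dx$ is a legitimate upper bound, but the step ``integrating $\int_{\ell}^{r_0}z^{-\alpha}\rd z\leq \ell^{1-\alpha}/(\alpha-1)$ and combining with Jensen'' does not close it: the factor $\rho(x+z)$ cannot be discarded, and there is no pointwise bound on it. If instead you pull out $z^{-\alpha}\leq\ell(x)^{-\alpha}$ you are left with the local mass $\int_x^{x+r_0}\rho$, which is not bounded by a universal constant; and if you literally drop $\rho(x+z)$ you obtain $\tfrac{1}{2(\alpha-1)}\sum_j\ell_{2j}^{1-\alpha}$, which is dimensionally a length$^{1-\alpha}$ quantity and is not comparable to $\int\rho^{1+\alpha}\sim\sum_j\ell_j^{-\alpha}$ (test on a constant density to see the mismatch). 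The paper's proof handles this in two separate moves that your sketch conflates. First it integrates out the mass $1/2$ of each \emph{distant} odd interval, reducing the singular energy to the purely combinatorial sum $\sum_{i<j}\rd(L_i,L_j)^{-\alpha}\1_{\rd(L_i,L_j)<r_0}$. Second, to control that sum it orders the intervals by $\eta_i$ (the minimal length of the \emph{even} neighbours), uses $\rd(L_i,L_j)\geq\eta_j\geq\eta_i$ for $j>i$, and carves out pairwise disjoint windows $\widetilde L_j$ of width $\eta_i/2$ inside the even intervals facing each $L_j$; this yields $\rd(L_i,L_j)^{-\alpha}\leq\tfrac{2}{\eta_i}\int_{\widetilde L_j-x_j}|z|^{-\alpha}\rd z$ and hence $\sum_{j>i}\rd(L_i,L_j)^{-\alpha}\leq\tfrac{4}{\eta_i}\big(\int_{\eta_i/2}^{r_0}z^{-\alpha}\rd z\big)_+$. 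The prefactor $2/\eta_i$ coming from the window width is precisely what upgrades $\eta_i^{1-\alpha}$ to $\eta_i^{-\alpha}$, after which your Jensen inequality $\ell^{-\alpha}\leq 2^{1+\alpha}\int_{L^*}\rho^{1+\alpha}$ applies and gives the stated constants (and, for $\alpha=1$, the Jensen step for $t\mapsto t^2(\log(2\lambda t))_+$ produces the logarithmic term). Without this comparison-to-integral mechanism your estimate of the singular interaction is only asserted, not proved.
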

\begin{proof}
	Denoting $ N = \int_{\R} \rho $, we can split the real numbers $ \R $ into two families $ \myp{L_j}_{j=1}^N $, $ \myp{L_j^{\ast}}_{j=1}^N $ of disjoint intervals in such a way that the mass of $ \rho $ in each of these intervals is exactly $ \int_{L_j} \rho = 1/2 $, and such that each $ L_j $ has neighboring intervals only among the $ L_j^{\ast} $, and vice versa (see \cref{fig:intervals}).

\begin{figure}
\begin{tikzpicture}
    \node[anchor=south west,inner sep=0] (image) at (0,0) {\includegraphics[width=0.8\textwidth]{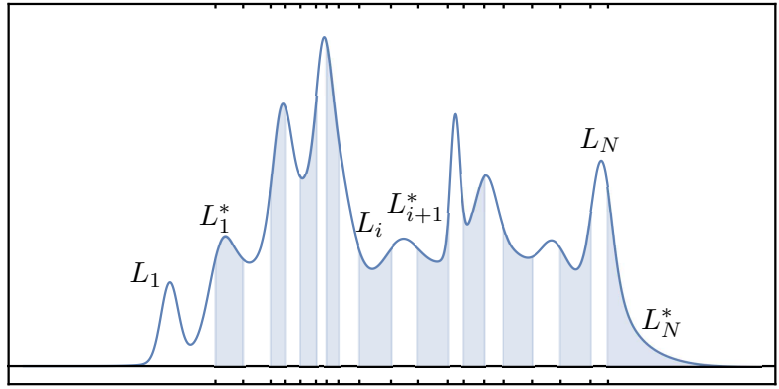}};
    \begin{scope}[x={(image.south east)},y={(image.north west)}]
        \draw (0.18,0.35) node[below] {$L_1$};
        \draw (0.27,0.5) node[below] {$L^*_1$};
                        \draw (0.47,0.48) node[below] {$L_i$};
        \draw (0.53,0.53) node[below] {$L^*_{i+1}$};
                \draw (0.77,0.7) node[below] {$L_N$};
        \draw (0.85,0.23) node[below] {$L^*_N$};
    \end{scope}
\end{tikzpicture}
  \caption{Sketch of intervals}
  \label{fig:intervals}
  \end{figure}

	This allows us to write
	\begin{equation*}
		\rho = \frac{1}{2} \myp[\Big]{ \sum\limits_j 2 \rho \mathds{1}_{L_j} + \sum\limits_j 2 \rho \mathds{1}_{L_j^{\ast}}},
	\end{equation*}
	a convex combination of two measures with mass equal to $ N $.
	As trial states for each of these, we take the symmetric tensor products
	\begin{equation*}
		\bQ = \Pi_s \myp[\Big]{\bigotimes\limits_j \myp{2 \rho \mathds{1}_{L_j}} },
		\qquad
		\bQ^{\ast} = \Pi_s \myp[\Big]{\bigotimes\limits_j \myp{2 \rho \mathds{1}_{L_j^{\ast}}} },
	\end{equation*}
	where $ \Pi_s $ denotes the symmetrization operator in~\eqref{eq:def_Pi_s}.
	Then the state
	\begin{equation*}
		\bP := \frac{1}{2} \myp{\bQ + \bQ^{\ast}}
	\end{equation*}
	has one-body density equal to $ \rho_{\bP} = \rho $.
	Using that the intervals $ \myp{L_j} $ are all disjoint, we have for instance
	\begin{align*}
		- \mathcal{S}_N \myp{\bQ}
		={}& \int_{\R^N} \frac{1}{N!} \sum\limits_{\sigma \in S_N} \bigotimes_j \myp{2 \rho \mathds{1}_{L_{\sigma \myp{j}}}} \log \myp[\Big]{\bigotimes_j \myp{2 \rho \mathds{1}_{L_{\sigma \myp{j}}}}} \\
		={}& \sum\limits_{j=1}^N \int_{\R} 2 \rho \mathds{1}_{L_j} \log \myp{2 \rho \mathds{1}_{L_j}}
		={} \int_{\bigcup_j L_j} 2 \rho \log \myp{2 \rho},
	\end{align*}
	and similarly for $ \bQ^{\ast} $.
	By concavity of the entropy, we conclude that 
	\begin{equation*}
		- \mathcal{S}_N \myp{\bP}
		\leq{} - \frac{1}{2} \mathcal{S}_N \myp{\bQ} - \frac{1}{2} \mathcal{S}_N \myp{\bQ^{\ast}}
		={} \log 2 \int_{\R} \rho + \int_{\R} \rho \log \rho.
	\end{equation*}

	To estimate the interaction energy in the state $ \bP $, it suffices to provide an estimate for both $ \bQ $ and $ \bQ^{\ast} $.
	We write here the argument only for $ \bQ $, since the argument for $ \bQ^{\ast} $ is exactly the same.
	By Assumption~\ref{de:shortrangenew} and the construction of $ \bQ $, we immediately have
	\begin{align*}
		\mathcal{U}_N \myp{\bQ}
		={}& \iint_{\R^2} w \myp{x-y} \rho_{\bQ}^{\myp{2}} \myp{x,y} \id x \id y \\
		\leq{}& 4 \kappa \sum\limits_{i < j} \iint_{\R^2} \myp[\Big]{\frac{r_0^{\alpha} \mathds{1} \myp{\abs{x - y} < r_0}}{\abs{x - y}^{\alpha}} + w_2 \myp{x - y} } \times \nn \\
		&\qquad\qquad\qquad\qquad \times \rho \myp{x} \mathds{1}_{L_i} \myp{x} \rho \myp{y} \mathds{1}_{L_j} \myp{y} \id x \id y
	\end{align*}
	where $ w_2 \myp{x} = \myp{1 + \abs{x}^s}^{-1} $ and $\rho_{\bQ}^{\myp{2}}$ is the two-particle correlation function.
	For the contribution from the tail of the interaction, we have by Young's inequality
	\begin{align*}
		\MoveEqLeft[6] \sum\limits_{i < j} \iint_{\R^2} w_2 \myp{x - y} \rho \myp{x} \mathds{1}_{L_i} \myp{x} \rho \myp{y} \mathds{1}_{L_j} \myp{y} \id x \id y \\
		\leq{}& \frac{1}{2} \iint_{\R^2} w_2 \myp{x-y} \rho \myp{x} \rho \myp{y} \id x \id y \\
		\leq{}& \frac{\normt{w_2}{L_1}}{2} \int_{\R} \rho^2
		\leq{} \frac{s}{s-1} \int_{\R} \rho^2.
	\end{align*}
	From the core of $ w $ we get
	\begin{align*}
		4 \sum\limits_{i < j} \iint_{\R^2} \frac{\mathds{1} \myp{\abs{x - y} < r_0}}{\abs{x - y}^{\alpha}} \rho \myp{x} \mathds{1}_{L_i} \myp{x} \rho \myp{y} \mathds{1}_{L_j} \myp{y} \id x \id y
		\leq \sum\limits_{i < j} \frac{\mathds{1}_{\rd \myp{L_i,L_j} < r_0}}{\rd \myp{L_i,L_j}^{\alpha}}.
	\end{align*}
	The idea now is to use the intervals $ \myp{L_j^{\ast} } $ to estimate the sum above.
	For each $ i $ we denote by $ \eta_i $ the minimal length of neighboring intervals,
	\begin{equation*}
		\eta_i = \min \Set{\ell_j^{\ast} \given \rd \myp{L_i, L_j^{\ast}}= 0},
	\end{equation*}
	where $ \ell_j^{\ast} := \abs{L_j^{\ast}} $ is the interval length, and we re-order the collection $ \myp{L_i} $ such that $ \eta_1 \leq \cdots \leq \eta_N $.
	Fixing the index $ i $, we now clearly have for $ j > i $,
	\begin{equation*}
		\rd \myp{L_i, L_j} \geq \eta_j \geq \eta_i,
	\end{equation*}
	in particular, $ \eta_i $ is smaller than the side length of any interval neighboring $ L_j $.
	Pick $ x_j \in \overline{L_i} $ and $ y_j \in \overline{L_j} $ such that $ \rd \myp{L_i, L_j} = \abs{x_j-y_j} $, and let $ L_k^{\ast} $ be the neighboring interval of $ L_j $ facing $ y_j $, that is, $ \rd \myp{y_j, L_k^{\ast}} = 0 $.
	Defining
	\begin{equation*}
		\widetilde{L}_j := \myp{y_j-\eta_i/2, y_j+\eta_i /2} \cap L_k^{\ast},
	\end{equation*}
	then $ \abs{\widetilde{L}_j} = \eta_i / 2 $, and $ \eta_i /2 \leq \abs{x_j-y} \leq \abs{x_j-y_j} $ for all $ y \in \widetilde{L}_j $, so we can estimate
	\begin{equation*}
		\frac{\mathds{1}_{\rd \myp{L_i,L_j} < r_0}}{\rd \myp{L_i,L_j}^{\alpha}}
		\leq{} \frac{2}{\eta_i} \int_{\widetilde{L}_j} \frac{\mathds{1} \myp{\abs{x_j - y} < r_0 }}{\abs{x_j - y}^{\alpha}} \id y
		= \frac{2}{\eta_i} \int_{\widetilde{L}_j-x_j} \frac{\mathds{1} \myp{\abs{y} < r_0 }}{\abs{y}^{\alpha}} \id y.
	\end{equation*}
	Now summing over $ j $ gives
	\begin{align}
		\sum\limits_{j = i+1}^N \frac{\mathds{1}_{\rd \myp{L_i,L_j} < r_0}}{\rd \myp{L_i,L_j}^{\alpha}}
		\leq{}& \frac{2}{\eta_i} \int_{\R} \frac{\mathds{1} \myp{\eta_i /2 \leq \abs{y} < r_0 }}{\abs{y}^{\alpha}} \id y
		={} \frac{4}{\eta_i} \myp[\bigg]{ \int_{\eta_i / 2}^{r_0} \frac{1}{\abs{y}^{\alpha}} \id y }_+ \nn \\
		\leq{}&	\begin{cases}
					\dps \frac{2^{1+\alpha}}{\alpha -1} \frac{1}{\eta_i^{\alpha}}	&\text{for $\alpha>1$,}\\[0.4cm]
					\dps \frac{4}{\eta_i} \myp[\Big]{\log \myp[\Big]{\frac{2 r_0}{\eta_i}} }_+	&\text{for $\alpha=1$.}
				\end{cases}
	\label{eq:onedim_interact}
	\end{align}
	By Hölder's inequality, we have by construction of the intervals $ L_j^{\ast} $
	\begin{equation*}
		\frac{1}{\myp{\ell_j^{\ast}}^{\alpha}} \leq 2^{1+\alpha} \int_{L_j^{\ast}} \rho^{1+\alpha},
	\end{equation*}
	for any $ \alpha > 1 $, so in this case we conclude that
	\begin{equation*}
		\sum\limits_{i < j} \frac{\mathds{1}_{\rd \myp{L_i,L_j} < r_0}}{\rd \myp{L_i,L_j}^{\alpha}}
		\leq{} \sum\limits_{i=1}^N \frac{2^{1+\alpha}}{\alpha -1} \frac{1}{\eta_i^{\alpha}}
		\leq{} \frac{2^{2+\alpha}}{\alpha -1} \sum\limits_{i=1}^N \frac{1}{\myp{\ell_i^{\ast}}^{\alpha}}
		\leq{} \frac{2^{3+2\alpha}}{\alpha -1} \sum\limits_{i=1}^N \int_{L_i^{\ast}} \rho^{1+\alpha}.
	\end{equation*}
	The same bound holds for the interaction energy of $ \bQ^{\ast} $, but with the intervals $ L_i^{\ast} $ replaced by $ L_i $ at the end.
	This finishes the proof of the $ \alpha > 1 $ case in \eqref{eq:onedim}.

	To finish the $ \alpha = 1 $ case, we note that applying Jensen's inequality on the function $ t \mapsto t^2 \myp{ \log \myp{2\lambda t}}_+ $ for $ \lambda > 0 $ yields
	\begin{equation*}
		\frac{1}{\ell_j^{\ast}} \myp[\Big]{\log \myp[\Big]{ \frac{\lambda}{\ell_j^{\ast}} }}_+
		={} 4 \ell_j^{\ast} \myp[\Big]{\frac{1}{\ell_j^{\ast}} \int_{L_j^{\ast}} \rho}^2 \myp[\Big]{ \log \myp[\Big]{ \frac{2 \lambda}{\ell_j^{\ast}} \int_{L_j^{\ast}} \rho }}_+
		\leq{} 4 \int_{L_j^{\ast}} \rho^2 \myp{ \log \myp{2 \lambda \rho} }_+.
	\end{equation*}
	Hence, continuing from \eqref{eq:onedim_interact}, we get
	\begin{align*}
		\sum\limits_{i < j} \frac{\mathds{1}_{\rd \myp{L_i,L_j} < r_0}}{\rd \myp{L_i,L_j}^{\alpha}}
		\leq{} \sum\limits_{i=1}^N \frac{4}{\eta_i} \myp[\Big]{\log \myp[\Big]{\frac{2 r_0}{\eta_i}} }_+
		\leq{} &8 \sum\limits_{i=1}^N \frac{1}{\ell_i^{\ast}} \myp[\Big]{\log \myp[\Big]{\frac{2 r_0}{\ell_i^{\ast}}} }_+ \\
		\leq{}& 2^5 \sum\limits_{i=1}^N \int_{L_i^{\ast}} \rho^2 \myp{ \log \myp{4 r_0 \rho} }_+.
	\end{align*}
	Since the corresponding bound also holds for $ \bQ^{\ast} $, this concludes the proof.
\end{proof}

\begin{remark}[Hard-core case]\label{rmk:Hard-core-1D}
	In the case where $ w $ has a hard-core with range $ r_0 > 0 $, it follows from the proof above that
	\begin{equation}
	\label{eq:onedim_hardcore}
		F_T \myb{\rho}
		\leq \myp[\Big]{\frac{4 \kappa s}{\myp{s-1} r_0} +\log \myp{2} T} \int_{\R} \rho + T \int_{\R} \rho \log \rho
	\end{equation}
	for any density $ \rho \in L^1 \myp{\R} $ satisfying the (sub-optimal) condition $\int_{x}^{x+r_0} \rho \leq \frac{1}{2}$ for all $ x \in \R$.
\end{remark}

\section{Proof of Theorem~\ref{thm:GC_bound} in the grand-canonical case}\label{sec:proof_GC}

In the course of our proof we need to cover the support of our density using disjoint cubes separated by a distance depending on the local value of the density, in order to have a reasonable control of the interaction. We obtain such a covering by a variant of the Besicovitch lemma~\cite{Guzman}, which we first describe in this subsection. It is different from the standard formulation.

For simplicity we work with a compactly supported density $\rho$ with $\int_{\R^d}\rho>1$. For every $x\in\R^d$, we define $\ell(x)$ to be the \emph{largest} number such that
\begin{equation}
\int_{x+\ell(x)\mathcal C}\rho(x)\,\dx=\frac1{3^d(4^d+1)},
\label{eq:ell}
\end{equation}
where $\mathcal C=(-1/2,1/2)^d$ is the unit cube centered at the origin. It is convenient to work with cubes instead of balls. It is important that the chosen value of the integral in~\eqref{eq:ell} is universal and only depends on the space dimension $d$. This value is motivated by the estimates which will follow, it could be any fixed number $<1$ at this point.  The number $\ell(x)$ always exists since the full integral is larger than 1. The function $x\mapsto \ell(x)$ is upper semi-continuous. To simplify our notation we denote by $\mathcal C(x):=x+\ell(x)\mathcal C$ the cube centered at $x$ of side length $\ell(x)$. By H\"older's inequality we get
$$\frac1{3^d(4^d+1)}=\int_{\mathcal C(x)}\rho\leq \ell(x)^{\frac{\alpha d}{\alpha+d}} \myp[\bigg]{ \int_{\mathcal C(x)}\rho^{1+\frac{\alpha}d} }^{\frac{d}{d+\alpha}}$$
and thus obtain the estimate
\begin{equation}
\frac1{\ell(x)^\alpha}\leq 3^{\alpha+d}(4^d+1)^{1+\frac{\alpha}{d}}\int_{\mathcal C(x)}\rho^{1+\frac\alpha{d}},\qquad \forall x\in\R^d
\label{eq:estim_ell}
\end{equation}
on the local length $\ell(x)$. The standard Besicovitch covering lemma (as stated for instance in~\cite{FraLapWei-LT,Guzman}) implies for compactly supported densities that there exists a set of points $x_j'^{(k)}$ with $1\leq k\leq K'\leq 4^d+1$ and $1\leq j\leq J_k$ such that
\begin{itemize}
 \item the cubes $\big(\mathcal C(x_j'^{(k)})\big)_{\substack{1\leq k\leq K'\\ 1\leq j\leq J_k}}$ cover the support of $\rho$ and each $x\in\R^d$ is in at most $2^d$ such cubes,
 \item for every $k$, the cubes $\big(\mathcal C(x'^{(k)}_j)\big)_{1\leq j\leq J_k}$ are all disjoint.
\end{itemize}
We need to obtain different families which satisfy additional properties, namely we require the cubes to have a safety distance to all the larger cubes within the same family, this distance being comparable to the side length of the cube in question. The precise statement is the following.

\begin{lemma}[Besicovitch with minimal distance]
\label{lem:Besicovitch_lemma}
Let $\rho$ be a compactly supported density with $\int_{\R^d}\rho>1$. Then there exists a set of points $x_j^{(k)}$ with $1\leq k\leq K\leq 3^d(4^d+1)$ and $1\leq j\leq J_k<\ii$ such that
\begin{itemize}
 \item the cubes $\big(\mathcal C(x_j^{(k)})\big)_{\substack{1\leq k\leq K\\ 1\leq j\leq J_k}}$ cover the support of $\rho$ and each $x\in\R^d$ is in at most $2^d$ such cubes,
 \item for every $k$, the cubes $\big(\mathcal C(x^{(k)}_j)\big)_{1\leq j\leq J_k}$ in the $k$th collection satisfy
 $$\rd\left(\mathcal C(x^{(k)}_j),\mathcal C(x^{(k)}_\ell)\right)\geq\frac 1 2\min\Big\{\ell(x^{(k)}_j),\ell(x^{(k)}_\ell)\Big\}.$$
\end{itemize}
\end{lemma}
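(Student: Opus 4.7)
The plan is to apply the standard Besicovitch covering lemma as mentioned in the excerpt to obtain an initial decomposition, and then further refine each of its disjoint families by a greedy coloring to enforce the safety-distance condition.

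Concretely, Besicovitch applied to the family $\{\mathcal C(x) : x\in\supp\rho\}$ gives $K'\leq 4^d+1$ sub-families $(\mathcal C(x_j'^{(k)}))_{j}$ of pairwise disjoint cubes covering $\supp\rho$ with pointwise overlap at most $2^d$; each $J_k$ is finite since $\int_{\mathcal C(x_j'^{(k)})}\rho$ is a fixed positive constant. Next I would fix one such disjoint family, sort its cubes in \emph{decreasing} order of side length $\ell(\cdot)$, and process them one at a time: upon reaching cube $\mathcal C(x)$, assign it the smallest color in $\{1,\ldots,3^d\}$ not used by any already-processed cube $\mathcal C(y)$ satisfying $\rd(\mathcal C(x),\mathcal C(y))<\tfrac12\ell(x)$. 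Because already-processed cubes have $\ell(y)\geq\ell(x)=\min\{\ell(x),\ell(y)\}$, each monochromatic sub-subfamily automatically satisfies the required distance bound.

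The heart of the argument is to show that at most $3^d-1$ already-processed cubes can ever block a color. Given a blocking cube $\mathcal C(y)$, I would place inside it a subcube $Q_y$ of side exactly $\ell(x)$ flush against the face (or corner) of $\mathcal C(y)$ closest to $\mathcal C(x)$, so that $\rd(Q_y,\mathcal C(x))=\rd(\mathcal C(y),\mathcal C(x))<\tfrac12\ell(x)$. The $Q_y$'s are pairwise disjoint (as subsets of the disjoint $\mathcal C(y)$'s) and each is disjoint from $\mathcal C(x)$, so using $\rd_\infty\leq\rd_2$ together with the explicit formula $\rd_\infty(\mathcal C(x),Q_y)=|x-y'|_\infty-\ell(x)$ for the centers $y'$ of $Q_y$, one obtains $\ell(x)\leq |y'-x|_\infty<\tfrac32\ell(x)$ and the pairwise bound $|y'-z'|_\infty\geq\ell(x)$. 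Rescaling by $\ell(x)^{-1}$, the centers land in $(-\tfrac32,\tfrac32)^d\setminus(-1,1)^d$ at pairwise sup-distance $\geq 1$. Tiling $(-\tfrac32,\tfrac32]^d$ by the $3^d$ half-open unit cubes $\prod_i(a_i-\tfrac12,a_i+\tfrac12]$ with $a_i\in\{-1,0,1\}$, each such cube can hold at most one of these centers (two points in a half-open unit cube are at sup-distance strictly less than one), and the central cube $(-\tfrac12,\tfrac12]^d$ lies entirely in $(-1,1)^d$ and is therefore empty, leaving at most $3^d-1$ centers.

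The main obstacle is precisely this sharp packing count: a naive argument merely counts disjoint cubes in an enlarged cube and yields $4^d$ rather than $3^d$, which would break the advertised constant. The $3^d$ bound depends on carefully exploiting the strict inequality in the distance condition so that the central unit cube of the natural tiling falls entirely inside the excluded open cube $(-1,1)^d$. Everything else --- applying Besicovitch, running the greedy coloring, relabeling the output, and the finiteness statement for $J_k$ --- is routine bookkeeping, and combining the two steps yields the final bound $K\leq 3^d(4^d+1)$.
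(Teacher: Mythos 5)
Your proof is correct and follows essentially the same route as the paper: apply the standard Besicovitch covering lemma, then split each disjoint family into $3^d$ subfamilies by processing cubes in decreasing order of side length and using the fact that at most $3^d-1$ larger disjoint cubes can violate the safety distance. Your rescaled tiling argument for the $3^d-1$ count is a careful, rigorous justification of the packing bound that the paper only asserts in one line, so it is a welcome elaboration rather than a deviation.
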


\begin{proof}
We start the proof by applying the standard Besicovitch covering lemma recalled above. We obtain $K$ collections of disjoint cubes. To impose the minimal distance we separate each family into $3^d$ subfamilies. Specifically, we use that the maximal number of disjoint cubes of side length $\geq\ell$ intersecting a cube of side length $2\ell$ is at most $3^d$. Thus if we look at a given cube of side length $\ell$, only $3^d-1$ other bigger cubes can be at distance $\leq\ell/2$. By induction we can thus always distribute all our cubes into $3^d$ subfamilies, while ensuring the distance property for all the bigger cubes.
\end{proof}

Using Lemma \ref{lem:Besicovitch_lemma} we obtain the following partition of unity
\begin{equation}
\1_{\supp \rho}=\sum_{k=1}^{K}\sum_{j=1}^{J_k}\frac{\1_{\mathcal C(x^{(k)}_j) \cap \supp \rho}}{\eta},
\qquad \1_{\supp \rho} \leq \eta:=\sum_{k=1}^{K}\sum_{j=1}^{J_k}\1_{\mathcal C(x^{(k)}_j)}\leq 2^d
\label{eq:partition}
\end{equation}
which we are going to use to construct our trial state for the upper bound on $G_T[\rho]$. We split the proof into several steps. We start with the case $\alpha>d$ and treat the special case $\alpha=d$ at the very end.

\medskip

\noindent \textbf{Step 1. Less than one particle.} If $\int_{\R^d}\rho\leq1$, we consider the probability $\bP=(\bP_n)$ given by
\begin{equation}
 \bP_0=1-\int_{\R^d} \rho,\qquad\bP_1=\rho,\qquad \bP_n=0\text{ for $n\geq2$,}
 \label{eq:simple_less1}
\end{equation}
which has density $\rho$ and no interaction energy. Its free energy is thus just equal to the entropy term
$$-T \cS \myp{\bP} = T\left(1-\int_{\R^d}\rho\right)\log\left(1-\int_{\R^d}\rho\right)+T\int_{\R^d}\rho\log\rho.$$
The first term is negative and thus we obtain the desired inequality
\begin{equation}
G_T[\rho]\leq T\int_{\R^d}\rho\log\rho\qquad \text{for}\quad \int_{\R^d}\rho\leq1.
\label{eq:esimple_upper_one_part}
\end{equation}

\medskip

\noindent \textbf{Step 2. Compactly supported densities ($\alpha>d$).} Next we consider the case of a compactly supported density $\rho$ with $\int_{\R^d}\rho>1$.
Using the partition~\eqref{eq:partition} we write
$$\rho=\frac1K\sum_{k=1}^{K} \myp[\bigg]{ \sum_{j}\rho_j^{(k)}} ,\qquad\rho_j^{(k)}:=\frac{K\rho\1_{Q_j^{(k)}}}{\eta},$$
where we abbreviated $Q_j^{(k)}=\mathcal C(x^{(k)}_j)$ for simplicity. This is a (uniform) convex combination of the $K$ densities $\rho^{(k)}=\sum_{j}\rho_j^{(k)}$.
For fixed $k$, the $\rho_j^{(k)}$ have disjoint supports with distance greater or equal to $\min\{\ell(x_j^{(k)}),\ell(x_{j'}^{(k)})\}/2$. In addition, we have
$$\int\rho_j^{(k)}=K\int_{Q_j^{(k)}} \frac{\rho}{\eta}\leq 3^d(4^d+1)\int_{Q_j^{(k)}}\rho\leq 1.$$
This is the reason for our choice of the constant in~\eqref{eq:ell}.
Our trial state is given by
$$\bP:=\frac1K\sum_{k=1}^K\bP^{(k)}$$
where $$\bP^{(k)}=\bigotimes_{j=1}^{J_k}\left(\left(1-\int_{\R^d}\rho_j^{(k)}\right)\oplus\rho_j^{(k)}\oplus0\oplus\ldots\right)$$
is the symmetrized tensor product of the states in~\eqref{eq:simple_less1}, which has density $\rho^{(k)}$. Using the concavity of the entropy, our upper bound is, thus,
\begin{align*}
G_T[\rho]&\leq \frac1K\sum_k \mathcal{G}_T(\bP^{(k)})\\
&\leq \frac1{K}\sum_{k=1}^{K}\bigg(\sum_{1\leq i<j\leq J_k}\iint_{\R^d\times\R^d}\rho_i^{(k)}(x)\rho_j^{(k)}(y)w(x-y)\,\dx\,\dy\\
&\qquad +T\sum_{j=1}^{J_k}\int_{Q_j^{(k)}}\rho_j^{(k)}\log\rho_j^{(k)}\bigg).
\end{align*}
We have
\begin{align*}
\frac{1}{K}\sum_{k=1}^{K}\sum_{j=1}^{J_k}\int_{Q_j^{(k)}}\rho_j^{(k)}\log\rho_j^{(k)}
&=\frac{1}{K}\sum_{k=1}^{K}\sum_{j=1}^{J_k}\int_{Q_j^{(k)}}\rho_j^{(k)}\log\frac{K\rho}\eta\\
&=\int_{\R^d}\rho\log\frac{K\rho}\eta
\leq\int_{\R^d}\rho\log\rho+3d\int_{\R^d}\rho
\end{align*}
since $K\leq 15^d\leq e^{3d}$ and $\eta\geq1$. Thus we obtain
\begin{multline*}
G_T[\rho]\leq  \frac1K\sum_{k=1}^{K}\sum_{1\leq i<j\leq J_k}\iint_{\R^d\times\R^d}\rho_i^{(k)}(x)\rho_j^{(k)}(y)w(x-y)\,\dx\,\dy\\
+T\int_{\R^d}\rho\log\rho+3Td\int_{\R^d}\rho.
\end{multline*}
Our next task is to estimate the interaction, for every fixed $k$. By Assumption~\ref{de:shortrangenew} we have $w\leq w_1+w_2$ with $w_1(x)=\kappa (r_0/|x|)^{\alpha}\1(|x|<r_0)$ and $w_2(x)=\kappa(1+|x|^s)^{-1}$. We first estimate the term involving the integrable potential $w_2$ using Young's inequality as
\begin{align*}
&\sum_{1\leq i<j\leq J_k}\iint_{\R^d\times\R^d}\rho_i^{(k)}(x)\rho_j^{(k)}(y)w_2(x-y)\,\dx\,\dy\\
&\qquad \leq \frac12\iint_{\R^d\times\R^d}\rho^{(k)}(x)\rho^{(k)}(y)w_2(x-y)\,\dx\,\dy\\
&\qquad\leq \frac{\norm{w_2}_{L^1}}2\int_{\R^d}(\rho^{(k)})^2= \frac{\norm{w_2}_{L^1}}2 K^2\int_{\cup_i Q_i^{(k)}}\frac{\rho^2}{\eta^2}.
\end{align*}
After summing over $k$ this gives
\begin{equation*}
\frac1K\sum_{k=1}^K\sum_{1\leq i<j\leq J_k}\iint_{\R^d\times\R^d}\rho_i^{(k)}(x)\rho_j^{(k)}(y)w_2(x-y)\,\dx\,\dy
\leq \frac{\norm{w_2}_{L^1}}2 K\int_{\R^d}\frac{\rho^2}{\eta}.
\end{equation*}
Using for instance
$$\int_{\R^d}w_2=\kappa|\bS^{d-1}|\int_0^\ii \frac{r^{d-1}}{1+r^s}\,\rd r\leq \kappa|\bS^{d-1}|\frac{s}{d \myp{s-d}},$$
and recalling that $\eta\geq1$ and $K\leq 3^d(4^d+1)$, we obtain
\begin{multline*}
\frac1K \sum_{k=1}^{K}\sum_{1\leq i<j\leq J_k}\iint_{\R^d\times\R^d}\rho_i^{(k)}(x)\rho_j^{(k)}(y)w_2(x-y)\,\dx\,\dy\\
\leq \kappa \frac{s|\bS^{d-1}|}{2d(s-d)}3^{d}(4^d+1)\int_{\R^d}\rho^2.
\end{multline*}

Next we consider the more complicated term involving the singular part $w_1=\kappa\1(|x|<r_0) (r_0/|x|)^{\alpha}$. To simplify our notation, we remove the superscript $(k)$ and thus consider the collection $(\rho_j)_{j=1}^J$ of functions supported in the disjoint cubes $Q_j$ with the safety distance. For every $i\neq j$, using $\int_{\R^d}\rho_j\leq 1$, we can estimate
$$\iint\rho_i(x)\rho_j(y)w_1(x-y)\,\dx\,\dy\leq \frac{\kappa r_0^\alpha}{\rd (Q_i,Q_j)^\alpha}.$$
Recall that when $|Q_i|\leq |Q_j|$, the distance $\rd (Q_i,Q_j)$ is at least equal to $\ell_i/2$. We can order our $J$ cubes so that the volume is increasing: $|Q_1|\leq |Q_2|\leq\cdots\leq |Q_J|$. We need to estimate
$$\sum_{i=1}^{J-1}\sum_{j=i+1}^J\frac{1}{\rd (Q_i,Q_j)^\alpha}=\sum_{i=1}^{J-1}\frac1{\ell_i^\alpha}\sum_{j=i+1}^J\frac{1}{\rd (\mathcal C,Q'_{i,j})^\alpha}$$
where $\mathcal C=(-1/2,1/2)^d$ and for every $i$, we have denoted by $Q'_{i,j}$  the cube centered at $(x_j-x_i)/\ell_i$, of volume $|Q_j|/|Q_i|\geq1$. To estimate the sum in $j$, we use the following lemma, which is based on the integrability at infinity of $|x|^{-\alpha}$ and is similar to~\cite[Lemma~9]{Lewin-22}.

\begin{lemma}\label{lem:cubes}
Let $\mathcal C=(-1/2,1/2)^d$ be the unit cube and consider any collection of non-intersecting cubes $Q_j$ with the property that $|Q_j|\geq1$ and $\rd (\mathcal C,Q_j)\geq\frac12$. Then we have
\begin{equation}
\sum_j \frac{1}{\rd (\mathcal C,Q_j)^\alpha}\leq \frac{3^\alpha2^{7d}d^2}{|\bS^{d-1}|(\alpha-d)}\,.
\label{eq:estim_sum}
\end{equation}
\end{lemma}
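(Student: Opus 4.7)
The plan is to control $S:=\sum_j d_j^{-\alpha}$, with $d_j:=\rd(\mathcal C,Q_j)$, by comparing it with an integral of $\rd(\mathcal C,y)^{-\alpha}$ over the exterior region $\{\rd(\mathcal C,y)\geq 1/2\}$, plus a finite count of cubes close to $\mathcal C$. Since $|Q_j|\geq 1$ forces the side length $\ell_j\geq 1$, the first step is to select inside each $Q_j$ a \emph{unit} sub-cube $\tilde Q_j\subset Q_j$ containing the point of $Q_j$ closest to $\mathcal C$. These $\tilde Q_j$ are then pairwise disjoint, all lie in $\{\rd(\mathcal C,y)\geq 1/2\}$, and satisfy $\rd(\mathcal C,\tilde Q_j)=d_j$.

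I then split the sum according to whether $d_j\geq\sqrt d$ (the \emph{far} cubes) or $1/2\leq d_j<\sqrt d$ (the \emph{close} cubes). For a far cube, every $y\in\tilde Q_j$ satisfies $d_j\leq\rd(\mathcal C,y)\leq d_j+\sqrt d\leq 2d_j$, hence the pointwise bound $d_j^{-\alpha}\leq 2^{\alpha}\rd(\mathcal C,y)^{-\alpha}$ holds on $\tilde Q_j$. Integrating over $\tilde Q_j$ (which has unit volume) and summing via disjointness gives
\[
\sum_{\text{far}} d_j^{-\alpha}\leq 2^\alpha\int_{\{\rd(\mathcal C,y)\geq\sqrt d\}} \rd(\mathcal C,y)^{-\alpha}\,dy.
\]
On this region the elementary estimates $|y|\geq\rd(\mathcal C,y)$ and $\rd(\mathcal C,y)\geq|y|-\sqrt d/2\geq|y|/2$ (the latter valid as soon as $|y|\geq\sqrt d$) reduce the right-hand side to a purely radial integral comparable to $2^{\alpha}|\mathbb S^{d-1}|\,d^{(d-\alpha)/2}/(\alpha-d)$.

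For a close cube, the bound $d_j\geq 1/2$ gives $d_j^{-\alpha}\leq 2^\alpha$ directly, so it only remains to bound their number. Each close $\tilde Q_j$ is a unit cube with $d_j<\sqrt d$ and diameter $\sqrt d$, so it lies in the shell $\{1/2\leq\rd(\mathcal C,y)\leq 2\sqrt d\}$, whose volume is at most $(1+4\sqrt d)^d$; disjointness bounds the number of close cubes by this volume. Summing the contributions from the two regimes and tidying the constants then produces the stated inequality. The main subtlety is the very first step: a direct comparison of $d_j^{-\alpha}$ to $\int_{Q_j}\rd(\mathcal C,y)^{-\alpha}\,dy$ fails whenever $Q_j$ is much larger than $d_j$, because then most of $Q_j$ sits far from $\mathcal C$ where the integrand is small; restricting to the unit sub-cube $\tilde Q_j$ anchored at the closest corner is precisely what makes the pointwise comparison uniformly effective across all cube sizes.
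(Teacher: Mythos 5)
Your argument is sound and rests on the same basic device as the paper's proof: use the disjointness of the cubes together with the volume lower bound $|Q_j|\geq1$ to convert the sum into a convergent integral of an $|x|^{-\alpha}$-type kernel over the region at distance at least $1/2$ from $\mathcal C$. The implementations differ. The paper picks the two closest points $X_j\in\overline{\mathcal C}$ and $Y_j\in\overline{Q_j}$ and compares $\rd(\mathcal C,Q_j)^{-\alpha}$ with the \emph{double} integral of $|x-y|^{-\alpha}$ over the balls $B(X_j,1/8)$ and $B(Y_j,1/8)$ intersected with $\mathcal C$ and $Q_j$; because $\rd(\mathcal C,Q_j)\geq 1/2$, the comparison $|x-y|\geq |X_j-Y_j|/2$ holds there uniformly, so no case distinction is needed, at the cost of a lower bound on the volume of a ball--cube intersection at a corner. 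You instead compare with a \emph{single} integral of $\rd(\mathcal C,\cdot)^{-\alpha}$ over a unit subcube anchored at the closest point, which is cleaner but forces the far/close split, since the pointwise comparison $\rd(\mathcal C,y)\leq 2d_j$ on the whole unit subcube fails when $d_j$ is small relative to $\sqrt d$. Your observation that one must restrict to a unit subcube near the closest point (rather than integrate over all of $Q_j$) is exactly the right one, and is the analogue of the paper's restriction to $Q_j\cap B(Y_j,1/8)$.

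The one step that does not close as written is the final claim that "tidying the constants produces the stated inequality." Your far-cube contribution is bounded by $4^{\alpha}\,|\bS^{d-1}|\,d^{(d-\alpha)/2}/(\alpha-d)$ (a factor $2^{\alpha}$ from $\rd(\mathcal C,y)\leq 2d_j$ and another $2^{\alpha}$ from $\rd(\mathcal C,y)\geq|y|/2$), and for $d=1$ this is $2\cdot 4^{\alpha}/(\alpha-1)$, which exceeds the target $2^{7}3^{\alpha}/\big(|\bS^{0}|(\alpha-1)\big)=64\cdot 3^{\alpha}/(\alpha-1)$ as soon as $(4/3)^{\alpha}>32$, i.e.\ for $\alpha\geq 13$. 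So the lemma with its stated constant is not yet proved in that corner of the parameter range. The fix is a one-line adjustment: place the far/close threshold at $d_j\geq 2\sqrt d$ instead of $d_j\geq\sqrt d$, so that $\rd(\mathcal C,y)\leq d_j+\sqrt d\leq\tfrac32 d_j$ on the unit subcube and $\rd(\mathcal C,y)\geq|y|-\sqrt d/2\geq\tfrac34|y|$ on the integration region; the combined loss is then $(\tfrac32)^{\alpha}(\tfrac43)^{\alpha}=2^{\alpha}\leq 3^{\alpha}$, and the (slightly larger, but still finite) count of close cubes is easily absorbed into $3^{\alpha}2^{7d}d^{2}/\big(|\bS^{d-1}|(\alpha-d)\big)$. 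Since the paper itself notes that the displayed constant is far from optimal, this is a repairable bookkeeping issue rather than a flaw in the method.
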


The constant on the right of~\eqref{eq:estim_sum} is not at all optimal and is only displayed for concreteness.

\begin{proof}[Proof of Lemma~\ref{lem:cubes}]
Let $X_j\in\mathcal C$ and $Y_j\in Q_j$ be so that $\rd(\mathcal C,Q_j)=|X_j-Y_j|\geq\frac12$.
For any $x\in B(X_j,1/8)$ and $y\in B(Y_j,1/8)$ we have
$$\frac{|X_j-Y_j|}2\leq |X_j-Y_j|-\frac14\leq |x-y|\leq |X_j-Y_j|+\frac14\leq \frac32|X_j-Y_j|.$$
Integrating over $x'\in \mathcal C\cap B(X_j,1/8)$ and $y'\in Q_j\cap B(Y_j,1/8)$ we obtain
\begin{align*}
\frac1{\rd (\mathcal C,Q_j)^\alpha}&=\frac1{|X_j-Y_j|^\alpha}\\
&\leq \frac{(3/2)^\alpha}{|\mathcal C\cap B(X_j,1/8)|\;| Q_j\cap B(Y_j,1/8)|}\int_\mathcal C\int_{Q_j}\frac{\dx\,\dy}{|x-y|^\alpha}.
\end{align*}
The volume of the intersection of a ball of radius $1/8$ centered at $X_j$ in a cube of volume $\geq1$ and that of the other cube is bounded away from 0. It is in fact minimal when $X_j, Y_j$ are located at a corner, yielding
$$|\mathcal C\cap B(X_j,1/8)|\geq \frac{|\bS^{d-1}|}{2^{4d}d},\qquad |Q_j\cap B(Y_j,1/8)|\geq \frac{|\bS^{d-1}|}{2^{4d}d}.$$
Thus, we obtain
$$\frac1{\rd (\mathcal C,Q_j)^\alpha}\leq \frac{3^\alpha 2^{8d-\alpha}d^2}{|\bS^{d-1}|^2}\int_\mathcal C\int_{Q_j}\frac{\dx\,\dy}{|x-y|^\alpha}.$$
Summing over $j$ using that the cubes are disjoint we obtain
$$\sum_j \frac1{\rd (\mathcal C,Q_j)^\alpha}\leq \frac{3^\alpha 2^{8d-\alpha }d^2}{|\bS^{d-1}|^2}\int_\mathcal C\int_{\R^d}\frac{\1(|x-y|\geq\frac12) \,\dx\,\dy}{|x-y|^\alpha}= \frac{3^\alpha2^{7d}d^2}{|\bS^{d-1}|(\alpha-d)}$$
as was claimed.
\end{proof}

From the estimates~\eqref{eq:estim_ell} and \eqref{eq:estim_sum}, we deduce that
\begin{multline*}
\sum_{1\leq i<j\leq J_k}\iint_{|x-y|\leq r_0}\rho_i(x)\rho_j(y)w_1(x-y)\,\dx\,\dy\\
\leq \kappa r_0^\alpha \frac{d^23^{d+2\alpha} 2^{7d}(4^d+1)^{1+\frac{\alpha}{d}}}{|\bS^{d-1}|(\alpha-d)}\int_{\cup_iQ_i}\rho^{1+\frac\alpha{d}}.
\end{multline*}
Using that $\int_{\cup_iQ_i}\rho^{1+\frac\alpha{d}} \leq \int_{\R^d} \rho^{1+ \frac{\alpha}{d}}$ and summing over $ K $, we obtain our final estimate
\begin{multline}
G_T[\rho]\leq  \kappa \frac{s|\bS^{d-1}|}{2d(s-d)}3^{d}(4^d+1)\int_{\R^d}\rho^2 +\kappa r_0^\alpha \frac{d^23^{d+2\alpha} 2^{7d}(4^d+1)^{1+\frac{\alpha}{d}}}{|\bS^{d-1}|(\alpha-d)}\int_{\R^d}\rho^{1+\frac\alpha{d}}\\
+T\int_{\R^d}\rho\log\rho+3Td\int_{\R^d}\rho.
\label{eq:estim_final_proof}
\end{multline}
This is our final upper bound, with non-optimal constants only displayed for concreteness.

\medskip

\noindent \textbf{Step 3. General densities  ($\alpha>d$).}
In order to be able to use the Besicovitch lemma, we restricted ourselves to compactly supported densities. We prove here that the exact same estimate holds for general densities. Let $\rho\in (L^1\cap L^{1+\alpha/d})(\R^d,\R_+)$, $\eps\in(0,1)$ and write
$$\rho=(1-\eps)\frac{\rho\1_{\mathcal C_L}}{1-\eps}+\eps\frac{\rho\1_{\R^d\setminus \mathcal C_L}}{\eps}$$
with $\cC_L=(-L/2,L/2)^d$.
Using the concavity of the entropy, we obtain
\begin{equation}
G_T[\rho]\leq (1-\eps)G_T\left[\frac{\rho\1_{\mathcal C_L}}{1-\eps}\right]+\eps\, G_T\left[\frac{\rho\1_{\R^d\setminus \mathcal C_L}}{\eps}\right].
\label{eq:estim_support_convexity}
\end{equation}
We choose $L$ so large that
$$\int_{\R^d\setminus \mathcal C_L}\rho\leq\eps,$$
which allows us to use~\eqref{eq:esimple_upper_one_part} for the second term on the right of~\eqref{eq:estim_support_convexity}. For the first term we just use Step 2. We find
\begin{multline*}
G_T[\rho]\leq \frac{C\kappa r_0^\alpha}{(1-\eps)^{\frac\alpha{d}}}\int_{\mathcal C_L}\rho^{1+\frac\alpha{d}}+\frac{C\kappa}{1-\eps}\int_{\mathcal C_L}\rho^{2}+CT\int_{\mathcal C_L}\rho+T\int_{\R^d}\rho\log\rho\\
+T\log\eps^{-1}\int_{\R^d\setminus\mathcal  C_L}\rho+T\log(1-\eps)^{-1}\int_{\mathcal C_L}\rho.
\end{multline*}
By passing first to the limit $L\to\ii$ and then $\eps\to0$, we conclude that $\rho$ satisfies the same estimate~\eqref{eq:estim_final_proof} as for compactly supported densities.

\medskip

\noindent \textbf{Step 4. Case $\alpha=d$.}
The case when the core of the interaction behaves as $w_1(x)=\kappa r_0^d |x|^{-d} \1 \myp{|x|\leq r_0}$ is similar to the previous situation with some small changes. The function is not integrable around the origin which requires to have a safety distance between particles in our trial state. However this interaction is also non-integrable without cutoff at infinity so we need to use that the core of our interaction is compactly supported on the ball of radius $r_0$. The following alternative to Lemma \ref{lem:cubes} is going to be useful.

\begin{lemma}\label{lem:cubes_d=a}
Let $\mathcal C_0=(-\ell_0/2,\ell_0/2)^d$ and consider any collection of non-intersecting cubes $Q_j$ with the property that $|Q_j|\geq\ell_0^d$ and $\rd (\mathcal C_0,Q_j)\geq\frac{\ell_0}{2}$. Then we have
\begin{equation}
\sum_j \frac{\1_{\rd (\mathcal C_0,Q_j)\leq r_0}}{\rd(\mathcal C_0,Q_j)^d}\leq C\ell_0^{-d}\left(\log\left(\frac{2r_0}{\ell_0}\right)\right)_+\,.
\label{eq:estim_sum_d=a}
\end{equation}
\end{lemma}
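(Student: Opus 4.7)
The plan is to follow the strategy of Lemma~\ref{lem:cubes}, the only difference being that the integrability of $|x|^{-\alpha}$ over $\R^d$ for $\alpha > d$ (which gave the $(\alpha-d)^{-1}$ factor) is now replaced by the logarithmic radial integral of $|x|^{-d}$ over an annulus with outer radius $\simeq r_0/\ell_0$.

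First I rescale by $\ell_0$ via $x \mapsto \ell_0 x$, which sends $\mathcal C_0$ to the unit cube $\mathcal C$, each $Q_j$ to a cube $\tilde Q_j := \ell_0^{-1} Q_j$ of volume at least $1$ at distance at least $1/2$ from $\mathcal C$, and the cutoff $\rd(\mathcal C_0, Q_j) \leq r_0$ to $\rd(\mathcal C, \tilde Q_j) \leq r_0/\ell_0$. The sum on the left of \eqref{eq:estim_sum_d=a} equals $\ell_0^{-d}$ times the corresponding sum over the $\tilde Q_j$, so it remains to bound
\[
\sum_j \frac{\1_{\rd(\mathcal C, \tilde Q_j) \leq r_0/\ell_0}}{\rd(\mathcal C, \tilde Q_j)^d}
\leq C \myp[\bigg]{ \log\myp[\Big]{\frac{2r_0}{\ell_0}} }_+.
\]

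For each $j$ such that the indicator is nonzero, I pick $X_j \in \overline{\mathcal C}$ and $Y_j \in \overline{\tilde Q_j}$ realizing the distance and run the averaging trick from the proof of Lemma~\ref{lem:cubes}: the sets $\mathcal C \cap B(X_j, 1/8)$ and $\tilde Q_j \cap B(Y_j, 1/8)$ have volumes bounded below by a dimensional constant (using $|\tilde Q_j| \geq 1$), and for $(x,y)$ in their product
\[
\frac{1}{2}|X_j - Y_j| \leq |x-y| \leq \frac{3}{2}|X_j - Y_j| \leq \frac{3r_0}{2\ell_0}.
\]
Summing the resulting pointwise inequality over $j$ and using the pairwise disjointness of the $\tilde Q_j$, the bound reduces to
\[
\sum_j \frac{\1_{\rd(\mathcal C, \tilde Q_j) \leq r_0/\ell_0}}{\rd(\mathcal C, \tilde Q_j)^d}
\leq C_d \int_{\mathcal C} \int_{\frac{1}{2} \leq |x-y| \leq \frac{3r_0}{2\ell_0}} \frac{dy\,dx}{|x-y|^d}
\leq C_d' \myp[\bigg]{ \log \myp[\Big]{\frac{3r_0}{\ell_0}} }_+,
\]
the last inequality by polar coordinates. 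Reinstating the $\ell_0^{-d}$ factor and absorbing the constant inside the logarithm into the overall $C$ yields the claim.

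There is no real obstacle beyond tracking the cutoff on $|x-y|$ through the averaging step, which is immediate since the extreme values of $|x-y|$ on the small balls stay within a factor $3/2$ of $|X_j - Y_j|$. This is also what produces the $(\cdot)_+$ on the right-hand side: when $r_0/\ell_0$ is below a dimensional threshold the integration annulus is empty and the sum trivially vanishes, matching the vanishing of $(\log(2r_0/\ell_0))_+$ for $r_0 \leq \ell_0/2$.
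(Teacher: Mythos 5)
Your argument is essentially the paper's own proof: the same averaging of $|x-y|^{-d}$ over small balls centred at the distance-realizing points $X_j,Y_j$, the disjointness of the cubes, and a radial integral over an annulus; the initial rescaling by $\ell_0$ is purely cosmetic. The only imprecision is the final step, since $\log(3r_0/\ell_0)\leq C\left(\log(2r_0/\ell_0)\right)_+$ fails when $2r_0/\ell_0$ is slightly above $1$ — but the lemma as stated has the same threshold looseness (a single admissible cube at distance exactly $r_0=\ell_0/2$ makes the left side positive while the right side vanishes), and the application in \eqref{eq:estim_log_rho} anyway allows an extra additive constant, so this is not a genuine gap in your approach.
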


\begin{proof}
We assume $\ell_0\leq 2r_0$ otherwise there is nothing to prove.
Let $X_j\in\mathcal C_0$ and $Y_j\in Q_j$ be such that  $\rd(\mathcal C_0,Q_j)=|X_j-Y_j|\geq\frac{\ell_0}2$.
For any $x\in B(X_j,\ell_{0}/8)$ and $y\in B(Y_j,\ell_{0}/8)$ we have
$$\frac{|X_j-Y_j|}2\leq |X_j-Y_j|-\frac{\ell_0}4\leq |x-y|\leq |X_j-Y_j|+\frac{\ell_0}4\leq \frac32|X_j-Y_j|.$$
Integrating over $x'\in \mathcal C_0\cap B(X_j,\ell_{0}/8)$ and $y'\in Q_j\cap B(Y_j,\ell_{0}/8)$ we obtain
\begin{align*}
\MoveEqLeft[4] \frac{\1_{\rd (\mathcal C_0,Q_j)\leq r_0}}{\rd (\mathcal C_0,Q_j)^d}
={} \frac{\1_{\rd (\mathcal C_0,Q_j)\leq r_0}}{|X_j-Y_j|^d}\\
\leq{}& \frac{(3/2)^d}{|\mathcal C_0\cap B(X_j,\ell_0/8)|\;|Q_j\cap B(Y_j,\ell_0/8)|}\int_{\mathcal C_0}\int_{Q_j}\frac{\1_{\rd (|x-y|)\leq r_0}}{|x-y|^d}\,\dx\,\dy.
\end{align*}
Summing over all cubes we get
\begin{equation*}
\sum_{j}\frac{\1_{\rd (\mathcal C_0,Q_j)\leq r_0}}{\rd (\mathcal C_0,Q_j)^d}
\leq \frac{2^{7d}3^dd}{|\mathbb S^{d-1}|\ell_0^{d}}\int_{\frac{\ell_0}2}^{r_0}r^{-1} \,\rd r
=\frac{2^{7d}3^dd}{|\mathbb S^{d-1}|}\frac{\log(2r_0/\ell_0)}{\ell_0^{d}}.
\end{equation*}
\end{proof}

Next we explain how to relate the right side of~\eqref{eq:estim_sum_d=a} with the density $\rho$. Recall from \eqref{eq:ell} that
\begin{equation}
\int_{\cC(x)}\rho(y)\,\dy=\frac1{3^d(4^d+1)}
\label{eq:elld}
\end{equation}
where $\cC(x)$ is the cube of side length $\ell(x)$ centered at $x$. By Jensen's inequality, we have for every convex function $F$
\begin{equation}
\ell(x)^dF\left(\frac{1}{\ell(x)^d3^d(4^d+1)}\right)
= \ell\myp{x}^d F \myp[\bigg]{\frac{1}{\ell \myp{x}^d} \int_{\mathcal{C} \myp{x}} \rho}
\leq \int_{\cC(x)}F\big(\rho(y)\big)\,\dy.
 \label{eq:convex_F}
\end{equation}
Applying this to
$$F(t)=t^2\Big(\log \big(6^d(4^d+1)r_0^dt\big)\Big)_+,$$
we obtain
\begin{align}
&\ell(x)^{-d}\left(\log\left(\frac{2r_0}{\ell(x)}\right)\right)_+\nn\\
&\qquad\leq  \frac{3^{2d}(4^d+1)^2}{d}\int_{\cC(x)}\rho(y)^2\Big(\log\big(6^d(4^d+1)r_0^d\rho(y)\big)\Big)_+\,\dy\nn\\
&\qquad\leq  \frac{3^{2d}(4^d+1)^2}{d}\int_{\cC(x)}\rho(y)^2\Big(4d+\big(\log r_0^d\rho(y)\big)_+\Big)\,\dy.
 \label{eq:estim_log_rho}
\end{align}
The rest of the proof is similar to the case $\alpha>d$, using~\eqref{eq:estim_log_rho} and Lemma~\ref{lem:cubes_d=a}. We omit the details. This concludes the proof of Theorem~\ref{thm:GC_bound}.\qed

\begin{remark}[Hard-core case]\label{rmk:Hard-core-GC}
The previous proof can be used in the hard core case $\alpha=\infty$, under the (sub-optimal) condition that $\int_{x+r_0\mathcal C}\rho<\frac{1}{3^d(4^d+1)}$ for all $x$, where $\mathcal C=(-1/2,1/2)^d$. The interaction can be bounded by $\kappa C N$ as we have seen in~\eqref{eq:simple_hard_core_T0}, leading to the bound
		\begin{align*}
		G_T [\rho]&\leq  C\kappa\int_{\R^d}\rho+CT \int_{\R^d} \rho + T \int_{\R^d} \rho \log \rho\,.
		\end{align*}
\end{remark}

\section{Proofs in the canonical case}\label{sec:proof_C}

\subsection{The local radius \texorpdfstring{$R(x)$}{R(x)} in optimal transport}\label{sec:OT}
Here we explain how to construct canonical trial states using a result from optimal transport, in order to obtain bounds at zero temperature for a singular interaction ($d\leq\alpha\leq\ii$).

Consider any density $ \rho $ with $ \int \rho > 1 $, and recall the local radius $ R(x) $ from~\eqref{eq:Rdef}.
Note that $ R \myp{x} $ can never be zero because $ \rho $ as a measure does not have any point mass.
The function $ R $ is connected to the Hardy-Littlewood maximal function $ M_\rho $, defined by
\begin{equation}
	\label{eq:maximaldef}
	M_\rho \myp{x} := \sup_{r > 0} \frac{1}{\abs{B_r}} \int_{B \myp{x,r}} \rho \myp{y} \id y,
\end{equation}
where $ \abs{B_r} $ denotes the volume of a ball in $ \R^d $ of radius $ r $.
By definition of $ R $ it is clear that
\begin{equation*}
	\frac{1}{\abs{B_1} R \myp{x}^d}
	= \frac{1}{\abs{B_1} R \myp{x}^d} \int_{B \myp{x, R \myp{x}}} \rho \myp{y} \id y
	\leq M_\rho \myp{x},
\end{equation*}
so that we have the pointwise bound
\begin{equation}
\label{eq:Rmaximalbound}
	\frac{1}{R \myp{x}}
	\leq \myp{\abs{B_1} M_\rho \myp{x}}^{\frac{1}{d}}.
\end{equation}
Furthermore, using Hölder's inequality gives for any $ p > 0 $,
\begin{equation*}
	1 = \int_{B \myp{x, R \myp{x}}} \rho
	\leq \abs{B \myp{x, R \myp{x}}}^{\frac{p}{p+d}} \myp[\Big]{\int_{B \myp{x,R \myp{x}}} \rho^{1+\frac{p}{d}}}^{\frac{d}{p+d}},
\end{equation*}
implying for $ \rho \in L_{\mathrm{loc}}^{1+\frac{p}{d}} \myp{\R^d} $ the bound
\begin{equation}
\label{eq:R_bound}
	\frac{1}{R \myp{x}^p}
	\leq \abs{B_1}^{\frac{p}{d}} \int_{B \myp{x,R \myp{x}}} \rho^{1+\frac{p}{d}}.
\end{equation}
It is also apparent that $ R $ is 1-Lipschitz-continuous, see e.g. \cite[Theorem 4.1]{ColMarStr-19}. One might also remark that $ R $ always stays away from zero, i.e.
\begin{equation}
\label{eq:Rlowbound}
	R_{\rho}
	:= \min_{x \in \R^d} R \myp{x} > 0.
\end{equation}
This is an immediate consequence of the facts that $ R $ is continuous and that, necessarily, $ \lim_{\abs{x} \to \infty} R \myp{x} = \infty $, because $ \rho \in L^1 \myp{\R^d} $.

To obtain an upper bound on the canonical energy at a fixed density $ 0 \leq \rho \in L^1 \myp{\R^d} $, it is convenient to have existence of states $ \bP $ in which the distance between the particles is bounded from below in terms of the function $ R $ from \eqref{eq:Rdef}. The following is a consequence of a result from~\cite{ColMarStr-19}.

\begin{theorem}[Optimal transport state]
	\label{thm:clstate}
	Let $ 0 \leq \rho \in L^1 \myp{\R^d} $ with $ N = \int_{\R^d} \rho \in \N $.
	There exists an $ N $-particle state $ \bP $ with density $ \rho_{\bP} = \rho $ such that
	\begin{equation}
	\label{eq:plansupport}
		\abs{x_i - x_j} \geq \max \Big( R_{\rho}, \tfrac{R(x_i) + R(x_j)}{3} \Big)\quad\text{for $1\leq i\neq j\leq N$}
	\end{equation}
	$\bP$--almost everywhere, where $ R $ is the function defined by \eqref{eq:Rdef}, and $ R_{\rho} $ is its minimum in~\eqref{eq:Rlowbound}.
\end{theorem}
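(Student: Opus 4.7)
The strategy is to invoke the existence result from optimal transport in~\cite[Theorem~4.1]{ColMarStr-19}, which does the heavy lifting, and then to extract the stated separation estimate using only the 1-Lipschitz regularity of $R$ recalled in this subsection.

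\emph{Step 1 (optimal transport input).} The construction of~\cite{ColMarStr-19} provides, for any $\rho\in L^1(\R^d)$ with $\int_{\R^d}\rho=N\in\N$, a symmetric $N$-particle probability measure $\bP$ with $\rho_\bP=\rho$ whose support is contained in
\begin{equation*}
\Set[\big]{(x_1,\dotsc,x_N)\in(\R^d)^N \given |x_i-x_j|\geq \min\bigl(R(x_i),R(x_j)\bigr)\text{ for all } i\neq j}.
\end{equation*}
The rough idea is to build $\bP$ as (the symmetrization of) a Monge-type measure $(\mathrm{Id},T,T^2,\dotsc,T^{N-1})_\#(\rho/N)$, where $T:\R^d\to\R^d$ is a measurable map defined via a greedy partition of $\R^d$ into $N$ pieces of equal $\rho$-mass, each piece fitting inside a ball of radius comparable to $R(x)$ around its anchor point. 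In dimension $d=1$ this is the classical map sending $x$ to the next point $y>x$ with $\int_x^y\rho=1$; the higher-dimensional version is the delicate content of~\cite{ColMarStr-19}. I would simply quote this existence result rather than reprove it.

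\emph{Step 2 (the $R_\rho$-bound).} The inequality $|x_i-x_j|\geq R_\rho$ is immediate from the definition $R_\rho=\min_{x\in\R^d}R(x)\leq \min(R(x_i),R(x_j))$.

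\emph{Step 3 (upgrading to $(R(x_i)+R(x_j))/3$ via Lipschitz regularity).} Fix a configuration in $\supp\bP$ and two indices $i\neq j$. Without loss of generality suppose $R(x_i)\leq R(x_j)$, so that Step~1 yields $|x_i-x_j|\geq R(x_i)$. Since $R$ is $1$-Lipschitz (recalled just before the statement),
\begin{equation*}
R(x_j)\leq R(x_i)+|x_i-x_j|\leq 2|x_i-x_j|.
\end{equation*}
Adding this to the inequality $R(x_i)\leq |x_i-x_j|$ gives $R(x_i)+R(x_j)\leq 3|x_i-x_j|$, which is exactly the required bound. The same estimate coupled with Step~2 yields the claimed $\max$.

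\emph{Main obstacle.} The nontrivial ingredient is Step~1: constructing a plan with the \emph{exact} marginal $\rho$ while guaranteeing a pointwise (rather than merely uniform) separation governed by $R$. Once this is taken as a black box from~\cite{ColMarStr-19}, everything else reduces to the short Lipschitz computation above, and the factor $3$ in the statement is precisely the price paid for converting a one-sided bound $|x_i-x_j|\geq\min(R(x_i),R(x_j))$ into a symmetric bound involving $R(x_i)+R(x_j)$.
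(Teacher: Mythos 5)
Your proof is correct and follows essentially the same route as the paper: both rest on the existence theorem of Colombo--Di Marino--Stra and on the $1$-Lipschitz continuity of $R$. The only cosmetic difference is that the paper feeds the target set $\{(x,y):|x-y|<\max(R_\rho,\tfrac{R(x)+R(y)}{3})\}$ directly into the general criterion of \cite[Theorem 4.3]{ColMarStr-19} and uses the Lipschitz bound to check that each slice has $\rho$-mass at most one, whereas you apply the criterion to the simpler $\min$-set and use the same Lipschitz bound to upgrade the resulting separation; the two computations are equivalent.
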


\begin{proof}
	The proof is a simple application of \cite[Theorem 4.3]{ColMarStr-19}.
	For any $ 0 < \eta < 1 $ (we will choose $ \eta = 1/3 $ in a moment) and any $ x \in \R^d $, define a set
	\begin{equation*}
		\widetilde{B} \myp{x} = \Set{ y \in \R^d \given \abs{x-y} < \eta \myp{R \myp{x} + R \myp{y}} }.
	\end{equation*}
	Then we have for any $ 0 < t < 1 -\eta $, using the Lipschitz continuity of $ R $,
	\begin{align*}
		\widetilde{B} \myp{x}
		\subseteq{}& \Set{y \in \R^d \given t \abs{x-y} < \eta R \myp{x} } \cup \Set{y \in \R^d \given \myp{1-t} \abs{x-y} < \eta R \myp{y} } \\
		\subseteq{}& B \myp{x, \tfrac{\eta}{t} R \myp{x}} \cup \Set{y \in \R^d \given \myp{1-t} \abs{x-y} < \eta \myp{ R \myp{x} + \abs{x-y}} } \\
		={}& B \myp{x, \tfrac{\eta}{t} R \myp{x}} \cup B \myp{x, \tfrac{\eta}{1-t-\eta} R \myp{x}}.
	\end{align*}
	We wish to choose $ t $ and $ \eta $ such that the measure of right hand side is equal to one (with respect to the measure $ \rho $).
	First, requiring the two balls to have the same radius leads to the choice $ t = \frac{1-\eta}{2} $.
	Next, we choose $ \eta $ such that $ \frac{\eta}{t} = \frac{\eta}{1-t-\eta} = \frac{2 \eta}{1-\eta} = 1 $, which implies $ \eta = 1/3 $.

	Now, defining an open and symmetric set $ D \subseteq \R^d \times \R^d $ by
	\begin{equation*}
		D = \Set[\Big]{ \myp{x,y} \in \R^d \times \R^d \given \abs{x-y} < \max \left( R_{\rho}, \tfrac{R \myp{x} + R \myp{y}}{3} \right) },
	\end{equation*}
	then $ B \myp{x} := \Set{y \in \R^d \given \myp{x,y} \in D} $ satisfies
	\begin{equation*}
		B \myp{x} = B \myp{x, R_{\rho}} \cup \widetilde{B} \myp{x}
		\subseteq B \myp{x, R \myp{x}}.
	\end{equation*}
	Thus, by definition of $ R $, we have $ \rho \myp{B \myp{x}} \leq \rho \myp{B \myp{x, R\myp{x}}} = 1 $, and since $ D $ is open and symmetric, \cite[Theorem 4.3]{ColMarStr-19} asserts the existence of a $ \bP $ with the claimed properties.
	Specifically, one can take $ \bP $ to be the optimizer for the multi-marginal optimal transport problem associated to the cost
	\begin{equation*}
		c \myp{x} = \min \Set{ \rd \myp{x, A}, 1},
	\end{equation*}
	where $ A $ denotes the set containing the $(x_1,...,x_N)$ satisfying~\eqref{eq:plansupport}.
\end{proof}

\subsection{Proof of Theorem~\ref{thm:GC_bound} in the canonical case at \texorpdfstring{$T=0$}{T=0}}
The existence of the state from \cref{thm:clstate} allows us to prove the last part of \cref{thm:GC_bound} about the canonical free energy at zero temperature. For convenience we state a proposition valid for any state $\bP$ for which the particles satisfy an inequality similar to~\eqref{eq:plansupport}.
Along with \cref{prop:clbound3} below (which covers the case $ \alpha = d $), this immediately implies Theorem~\ref{thm:GC_bound} in the canonical case.

\begin{proposition}[Zero temperature energy bound, $ d < \alpha < \infty $]
	\label{prop:clbound}
	Let $ w $ satisfy Assumption~\ref{de:shortrangenew} with $ d < \alpha < \infty $.
	Let $\bP$ be any $ N $-particle probability measure with one-body density $ \rho:=\rho_{\bP} $ satisfying
	\begin{equation}
	\label{eq:plansupport2}
		\abs{x_i - x_j} \geq \eta \big( R \myp{x_i} + R \myp{x_j}\big) \qquad \text{ for $1\leq  i \neq j\leq N$,}
	\end{equation}
	$\bP$--almost everywhere, for some $ 0 < \eta \leq 1 $. Then the interaction energy in the state $ \bP $ is bounded by
	\begin{equation}
		F_0 \myb{\rho}
		\leq{} \cU_{N} \myp{\bP}
		\leq{} \frac{C\kappa r_0^\alpha}{\eta^{\alpha}} \int_{\R^d} \rho \myp{x}^{1+\frac{\alpha}{d}} \id x + \frac{C\kappa}{\eta^d} \int_{\R^d} \rho \myp{x}^2 \id x
	\label{eq:clenergybound}
	\end{equation}
	with $C$ a constant depending only on $d,\alpha,s$.
\end{proposition}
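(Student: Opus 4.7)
The plan is to bound the interaction pointwise via a volume-averaging argument based on disjoint balls centered at the particles, then integrate against $\bP$ and reduce everything to integrals of $\rho$ through the Hardy-Littlewood maximal function $M_\rho$ from~\eqref{eq:maximaldef}. I split $w \leq w_1 + w_2$ with $w_1(z) = \kappa(r_0/|z|)^\alpha \1(|z|<r_0)$ and $w_2(z) = \kappa/(1+|z|^s)$, per Assumption~\ref{de:shortrangenew}. The separation hypothesis~\eqref{eq:plansupport2} immediately implies that the balls $B_k := B(x_k, \eta R(x_k)/2)$ are pairwise disjoint on $\supp\bP$.

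Fix a configuration in $\supp\bP$ and an index $j$. For $k \neq j$ and $y \in B_k$, the bound $\eta R(x_k)/2 \leq |x_j-x_k|/2$ (which follows from~\eqref{eq:plansupport2}) yields $|x_j-x_k|/2 \leq |y-x_j| \leq 3|x_j-x_k|/2$; this translates to pointwise comparisons $w_i(x_j-x_k) \leq C w_i(y-x_j)$ for $i=1,2$ (possibly enlarging the cutoff for $w_1$ by a harmless factor). Averaging over $y \in B_k$, summing over $k\neq j$ using disjointness, and then using $R(x_k) \geq R(y)/2$ (Lipschitz continuity of $R$) together with $1/R(y)^d \leq |B_1| M_\rho(y)$ from~\eqref{eq:Rmaximalbound}, gives the key pointwise bound
\begin{equation*}
	\sum_{k \neq j} w(x_j-x_k) \leq \frac{C}{\eta^d} \int_{|y-x_j|\geq \eta R(x_j)} w(y-x_j)\, M_\rho(y)\, \id y,
\end{equation*}
where the lower cutoff comes from $|y-x_j| \geq |x_j-x_k|-\eta R(x_k)/2 \geq \eta R(x_j)$ for $y \in B_k$. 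Integrating against $\bP$ and using $\int \sum_j F(x_j)\,\id\bP = \int F\rho$ reduces matters to estimating
\begin{equation*}
	\cU_N(\bP) \leq \frac{C}{\eta^d} \iint_{|y-x|\geq \eta R(x)} \rho(x) M_\rho(y) w(y-x)\, \id x\, \id y.
\end{equation*}

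For the tail $w_2 \in L^1(\R^d)$ with $\|w_2\|_{L^1} \leq C\kappa$, I drop the cutoff and combine Young's inequality, Cauchy-Schwarz, and the Hardy-Littlewood maximal inequality on $M_\rho$ to get $\int \rho\,(w_2\ast M_\rho) \leq \|w_2\|_{L^1}\|\rho\|_{L^2}\|M_\rho\|_{L^2} \leq C\kappa \|\rho\|_{L^2}^2$. For the singular part $w_1$, Lipschitz continuity of $R$ converts the constraint $|y-x|\geq \eta R(x)$ into $|y-x| \geq \eta R(y)/(1+\eta) \geq \eta R(y)/2$. For fixed $y$, a dyadic decomposition of the annulus $\{\eta R(y)/2 \leq |y-x| \leq r_0\}$ into shells at scales $2^k\eta R(y)$, combined with $\int_{|y-x|\leq r}\rho \leq |B_1| r^d M_\rho(y)$, produces a geometric series convergent for $\alpha > d$, yielding
\begin{equation*}
	\int_{\eta R(y)/2 \leq |y-x|\leq r_0} \frac{\rho(x)}{|y-x|^\alpha}\,\id x \leq C M_\rho(y)\, (\eta R(y))^{d-\alpha}.
\end{equation*}
Substituting, using $R(y)^{-(\alpha-d)} \leq (|B_1| M_\rho(y))^{(\alpha-d)/d}$ from~\eqref{eq:Rmaximalbound}, and applying the Hardy-Littlewood maximal inequality at the exponent $p=1+\alpha/d > 1$ gives $\frac{C\kappa r_0^\alpha}{\eta^\alpha} \int M_\rho^{1+\alpha/d} \leq \frac{C\kappa r_0^\alpha}{\eta^\alpha}\int \rho^{1+\alpha/d}$.

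Adding the two contributions yields~\eqref{eq:clenergybound}. The main obstacle is the pointwise estimate in the first step: one must exploit the separation~\eqref{eq:plansupport2} to convert a discrete sum over singular pair interactions into an integral with weight $1/R(y)^d$, which is then controlled by $M_\rho$. The dyadic argument for $w_1$ combined with the substitution $R^{-(\alpha-d)} \leq C M_\rho^{(\alpha-d)/d}$ is precisely what produces the optimal power $1+\alpha/d$ of $\rho$ in the final estimate.
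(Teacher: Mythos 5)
Your proof is correct, and it implements the same underlying strategy as the paper -- convert the discrete sum of pair interactions into an integral by averaging over disjoint balls of radii comparable to $\eta R(x_k)$, then control everything with the Hardy--Littlewood maximal function -- but the execution differs in instructive ways. The paper works in the difference variable: it orders the configuration so that $R(x_1)\leq\cdots\leq R(x_N)$, places disjoint balls of radius $\eta R(x_j)$ at the points $x_i-x_j$, and positions a half-radius ball inside each so that $x_i-x_j$ is its farthest point from the origin; monotonicity of $|y|^{-\alpha}$ then gives the purely geometric bound $\sum_{j>i}|x_i-x_j|^{-\alpha}\leq C(\eta R(x_i))^{-\alpha}$, so the energy reduces to $\int\rho\,R^{-\alpha}$ and $\int\rho\,R^{-d}$, each handled by a single application of H\"older plus the maximal inequality. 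You instead center the disjoint balls at the particles themselves, use a two-sided comparability $w(x_j-x_k)\leq Cw(y-x_j)$ for $y\in B_k$ (no ordering or farthest-point trick needed), and pay for the normalization $|B_k|^{-1}$ by inserting $M_\rho(y)$ into the integrand; this forces a second round of estimation (the dyadic-shell bound on $\int_{|y-x|\geq\eta R(y)/2}\rho(x)|y-x|^{-\alpha}\,\dx$ and the substitution $R^{-(\alpha-d)}\leq C M_\rho^{(\alpha-d)/d}$) before the maximal inequality at exponent $1+\alpha/d$ closes the argument. Your route is slightly longer but avoids the ordering and the geometric positioning of the inner balls, and the two-sided comparison is what lets you treat the core and the tail in a unified pointwise bound; all the individual steps (the enlarged cutoff $3r_0/2$ for $w_1$, the Lipschitz conversion between $R(x)$ and $R(y)$, the exponent bookkeeping $\eta^{-d}\cdot\eta^{-(\alpha-d)}=\eta^{-\alpha}$ and $2+(\alpha-d)/d=1+\alpha/d$) check out.
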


\begin{proof}[Proof of \cref{prop:clbound}]
In this proof we will not keep track of the exact value of the constants, since we will need the (unknown) one from the Hardy-Littlewood inequality. Hence $C$ denotes here a generic constant depending only on $d,\alpha,s$.
By the assumptions on $ w $, we have
	\begin{equation}
		\cU_{N} \myp{\bP}
		\leq{} \kappa \int_{\R^{dN}} \sum_{1\leq i<j\leq N} \left(\frac{r_0^\alpha\1(|x_i-x_j|\leq r_0)}{\abs{x_i - x_j}^{\alpha}} + \frac{1}{1+ \abs{x_i-x_j}^s}\right) \rd \bP \myp{x}.
	\label{eq:interactionbound2}
	\end{equation}
Let $x= ( x_1, \dotsc, x_N) $ be in the support of $\bP$. After permutation we can assume that $ R \myp{x_1} \leq R \myp{x_2} \leq \cdots \leq R \myp{x_N} $.
	We fix the index $ i $ and consider the points $ x_i - x_j $ in $ \R^d $ for $ j = i+1, \dotsc, N $.
	Because of \eqref{eq:plansupport2}, these points are all at a distance at least $ \eta \myp{R\myp{x_i} + R \myp{x_j}} $ from the origin, and
	\begin{equation*}
		\abs{\myp{x_i - x_j} - \myp{x_i - x_k}} = \abs{x_j - x_k} \geq \eta \myp{R \myp{x_j} + R \myp{x_k}}.
	\end{equation*}
	Hence we can place $ N - i $ disjoint balls in $ \mathbb{R}^d $ with radii $ \eta R \myp{x_j} $, centered at the points $ x_i - x_j $, respectively.
	Inside each of these balls, we place a smaller ball of radius $ \frac{\eta}{2} R \myp{x_j} $, centered at
	\begin{equation*}
		z_j = \myp[\Big]{1- \frac{\eta R \myp{x_j} }{2 \abs{x_i-x_j}}} \myp{x_i - x_j}.
	\end{equation*}
	Then $ x_i - x_j $ is the point on the boundary of $ B \myp{z_j, \frac{\eta}{2} R \myp{x_j} } $ which is the farthest from the origin (see \cref{fig:balls}), so that
	\begin{equation*}
		\frac{1}{\abs{x_i - x_j}^{\alpha}} = \min_{y \in B \myp{z_j, \frac{\eta}{2} R \myp{x_j} } } \frac{1}{\abs{y}^{\alpha}}.
	\end{equation*}

  \begin{figure}
\begin{tikzpicture}
    \node[anchor=south west,inner sep=0] (image) at (0,0) {\includegraphics[width=0.6\textwidth]{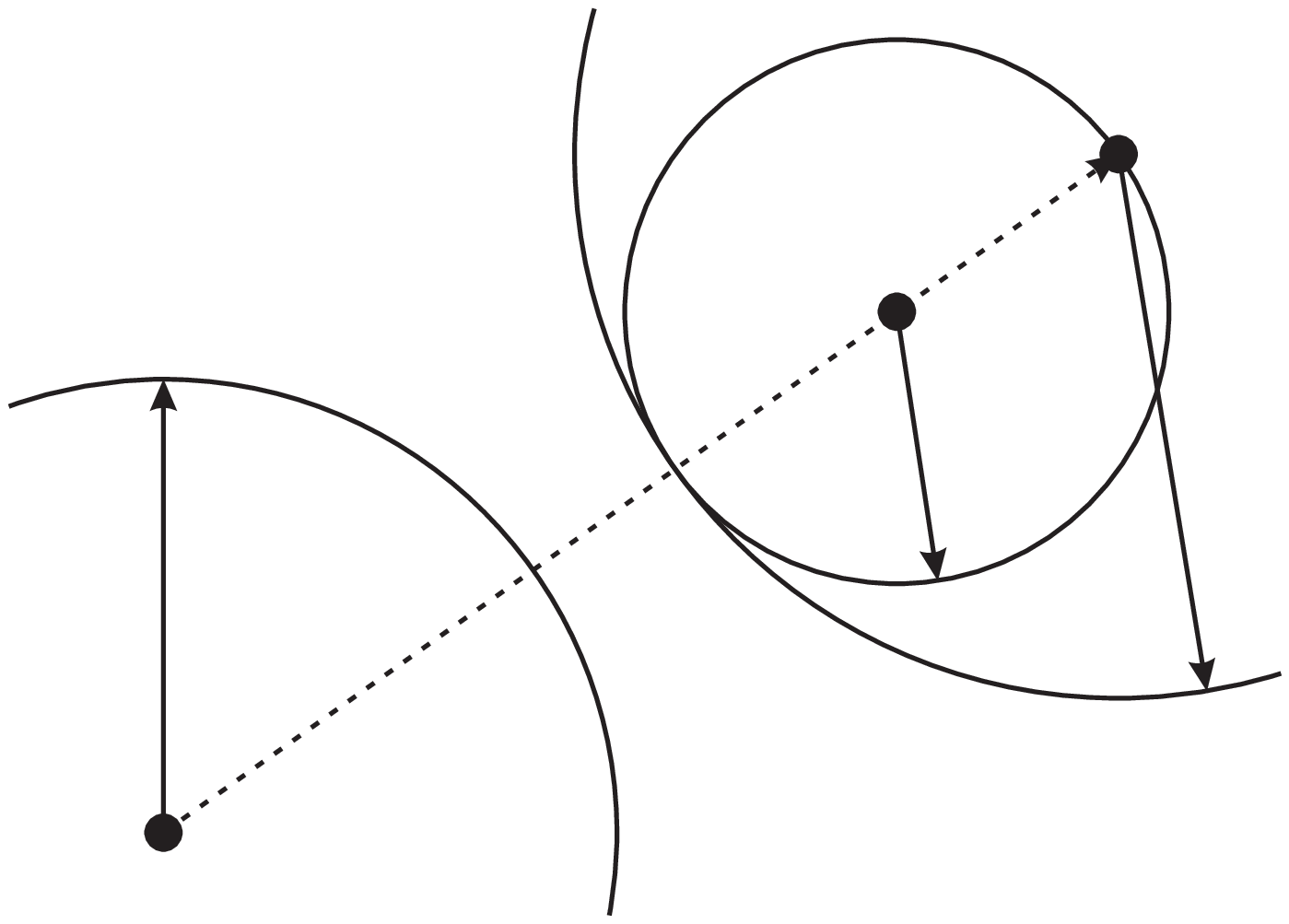}};
    \begin{scope}[x={(image.south east)},y={(image.north west)}]
        \draw (0.1,0.1) node[below] {$0$};
                \draw (.7,.78) node[below] {$z_j$};
                                \draw (.93,.85) node[above] {$x_i-x_j$};
 \draw (0.62,.0) node[above] {$B(0,\eta R(x_i))$};
                 \draw (.7,.96) node[above] {$B \myp{z_j, \frac{\eta}{2} R \myp{x_j} }$};
  \draw (0.22,.4) node[above] {$\eta R(x_i)$};
                                          \draw (0.8,.5) node[above] {$\frac\eta2 R(x_j)$};
                               \draw (1,.45) node[above] {$\eta R(x_j)$};
                                          \draw (0.85,.12) node[above] {$B(x_i-x_j,\eta R(x_j))$};
    \end{scope}
\end{tikzpicture}
  \caption{Sketch of construction}
  \label{fig:balls}
  \end{figure}

	Note that the distance from $ B \myp{z_j, \frac{\eta}{2} R \myp{x_j} } $ to the origin is bounded from below by
	\begin{equation*}
		\rd \left(0, B \myp{z_j, \frac{\eta}{2} R \myp{x_j}} \right)
		\geq \abs{z_j} - \frac{\eta}{2} R \myp{x_j}
		= \abs{x_i-x_j} - \eta R \myp{x_j}
		\geq \eta R \myp{x_i}.
	\end{equation*}
	Using this, along with the fact that all the balls are disjoint, we get the pointwise bound
	\begin{align}
		\sum\limits_{j=i+1}^N \frac{1}{\abs{x_i - x_j}^{\alpha}}
		\leq{}& \sum\limits_{j=i+1}^N \frac{1}{\abs{B \myp{z_j,\frac{\eta}{2} R \myp{x_j}}}} \int_{B \myp{z_j, \frac{\eta}{2} R \myp{x_j}}} \frac{1}{\abs{y}^{\alpha}} \id y \nn \\
		\leq{}& \frac{1}{\abs{B \myp{0,\frac{\eta}{2} R \myp{x_i}}}} \int_{B \myp{0, \eta R \myp{x_i}}^c} \frac{1}{\abs{y}^{\alpha}} \id y \label{eq:interactbound}\\
		={}& \frac{2^d}{\abs{B_1}} \frac{1}{\myp{\eta R \myp{x_i}}^{\alpha}} \int_{B \myp{0, 1}^c} \frac{1}{\abs{y}^{\alpha}} \id y \nn
	\end{align}
	for $ \bP $-a.e. $ x \in \mathbb{R}^{dN} $.
	We conclude that the contribution to the energy from the core of $ w $ can be bounded by
	\begin{align}
		\int_{\R^{dN}} \sum\limits_{i=1}^N \sum\limits_{j = i +1}^N \frac{1}{\abs{x_i - x_j}^{\alpha}} \id \bP \myp{x}
		\leq{}& \frac{C}{\eta^{\alpha}} \int_{\R^{dN}} \sum\limits_{i = 1}^N \frac{1}{R \myp{x_i}^{\alpha}} \id \bP \myp{x} \nn \\
		={}& \frac{C}{\eta^{\alpha}} \int_{\R^d} \frac{\rho \myp{x}}{R \myp{x}^{\alpha}} \id x.
	\label{eq:core_bound}
	\end{align}
	Similarly, we get for the contribution from the tail of $ w $,
	\begin{align*}
		\sum\limits_{j = i + 1} \frac{1}{1 + \abs{x_i - x_j}^s}
		\leq{}& \sum\limits_{j = i + 1} \frac{1}{ \abs{B \myp{z_j, \frac{\eta}{2} R \myp{x_j}}} } \int_{B \myp{z_j, \frac{\eta}{2} R \myp{x_j}}} \frac{1}{1 + \abs{y}^s} \id y \\
		\leq{}& \frac{2^d}{\abs{B_1}} \frac{1}{\myp{\eta R \myp{x_i}}^d} \int_{\R^d} \frac{1}{1 + \abs{y}^s} \id y,
	\end{align*}
	so
	\begin{equation}
	\label{eq:tail_bound}
		\int_{\R^{dN}} \sum\limits_{i=1}^N \sum\limits_{j = i +1}^N \frac{1}{1+\abs{x_i - x_j}^s} \id \bP \myp{x}
		\leq{} \frac{C}{\eta^d} \int_{\R^d} \frac{\rho \myp{x}}{R \myp{x}^d} \id x.
	\end{equation}

	Finally, recalling from \eqref{eq:Rmaximalbound} that $ R \myp{x} $ is bounded from below in terms of the maximal function of $ \rho $, we apply the Hölder and Hardy-Littlewood maximal inequalities to obtain for any power $ p > 0 $,	\begin{align*}
		\int_{\R^d} \frac{\rho \myp{x}}{R \myp{x}^p} \id x
		\leq{}& C \int_{\R^d} \rho \myp{x} \myp{M_\rho} \myp{x}^{\frac{p}{d}} \id x \\
		\leq{}& C \myp[\Big]{ \int_{\R^d} \rho \myp{x}^{1+ \frac{p}{d}} \id x }^{\frac{d}{d+p}} \myp[\Big]{ \int_{\R^d} \myp{M_\rho} \myp{x}^{1+\frac{p}{d}} \id x}^{\frac{p}{d+p}} \\
		\leq{}& C \int_{\R^d} \rho \myp{x}^{1+\frac{p}{d}}.
	\end{align*}
	Using this on \eqref{eq:core_bound} and \eqref{eq:tail_bound}, and combining with \eqref{eq:interactionbound2}, we obtain the claimed bound~\eqref{eq:clenergybound}.
\end{proof}

\begin{proposition}[Special case $ \alpha = d $]
	\label{prop:clbound3}
	Let $ w $ be an interaction satisfying Assumption \ref{de:shortrangenew} with $ \alpha = d $, and $ 0 \leq \rho \in L^1 \myp{\R^d} $ a density with $ \int \rho = N $.
	Then, for any $ N $-particle probability measure $ \bP $ with one-body density $ \rho_{\bP} = \rho $ satisfying \eqref{eq:plansupport2} for some $ 0 < \eta \leq 1 $, the interaction energy is bounded by
	\begin{align}
		F_0 \myb{\rho}
		\leq{}& \int_{\R^{dN}} \sum\limits_{1 \leq i < j \leq N} w \myp{x_i-x_j} \id \bP \myp{x_1, \dotsc, x_N} \nn \\
		\leq{}& \frac{\kappa r_0^d C}{\eta^{2d}} \int_{\R^d} \rho^2 \myp[\Big]{ \log \myp[\Big]{\frac{c r_0^d}{\eta^{2d}} \rho} }_+ + \frac{\kappa C}{\eta^{2d}} \int_{\R^d} \frac{1}{1 + \abs{y}^s} \id y \int_{\R^d} \rho^2,
	\label{eq:clenergybound3}
	\end{align}
	where the constants $ c $ and $ C $ depend only on the dimension $ d $.
\end{proposition}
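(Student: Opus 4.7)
The plan is to adapt the disjoint-ball geometry of Proposition~\ref{prop:clbound} to the borderline case $\alpha=d$, where the radial integral $\int_{\eta R(x_i)}^{\infty} r^{-1}\,dr$ diverges. I split $w\leq w_1+w_2$ with singular core $w_1(x)=\kappa r_0^d|x|^{-d}\1(|x|\leq r_0)$ and tail $w_2(x)=\kappa(1+|x|^s)^{-1}$. The tail contribution is handled exactly as in~\eqref{eq:tail_bound}--\eqref{eq:core_bound}: the same disjoint-ball argument yields a bound of the form $C\eta^{-d}\|w_2\|_{L^1}\int \rho/R^d\,dx$, and combining the pointwise estimate $R^{-d}\leq C M_\rho$ from~\eqref{eq:Rmaximalbound} with the $L^2$-boundedness of the Hardy--Littlewood maximal operator produces the second term of~\eqref{eq:clenergybound3}.

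For the core $w_1$, the construction of disjoint balls $B(z_j,\eta R(x_j)/2)$ from Proposition~\ref{prop:clbound} still works, but with a key new feature: the cutoff $\1(|x_i-x_j|\leq r_0)$ forces all these balls to lie inside $B(0,r_0)$, since $x_i-x_j$ is the farthest point of $B(z_j,\eta R(x_j)/2)$ from the origin. Integrating $|y|^{-d}$ over the bounded annulus $\eta R(x_i)\leq|y|\leq r_0$ now yields the logarithmic bound
\begin{equation*}
\sum_{j=i+1}^{N}\frac{\1(|x_i-x_j|\leq r_0)}{|x_i-x_j|^d}\leq \frac{C}{(\eta R(x_i))^d}\left(\log\frac{r_0}{\eta R(x_i)}\right)_+,
\end{equation*}
and integration against $\bP$ with $\rho_{\bP}=\rho$ produces
\begin{equation*}
\int_{\R^{dN}}\sum_{i<j}w_1(x_i-x_j)\,d\bP\leq \frac{C\kappa r_0^d}{\eta^d}\int_{\R^d}\frac{\rho(x)}{R(x)^d}\left(\log\frac{r_0}{\eta R(x)}\right)_+ dx.
\end{equation*}

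The hard part is to control this last integral by $\int\rho^2\log(cr_0^d\rho/\eta^{2d})_+$. My plan is to combine a pointwise Jensen estimate on the unit-mass ball $B(x,R(x))$ using the convex function $F(u)=u^2(\log(au))_+$ with $a=|B_1|r_0^d/\eta^d$, which yields
\begin{equation*}
\frac{1}{R(x)^d}\left(\log\frac{r_0}{\eta R(x)}\right)_+\leq \frac{|B_1|}{d}\int_{B(x,R(x))}\rho(y)^2\left(\log\frac{|B_1|r_0^d\rho(y)}{\eta^d}\right)_+ dy,
\end{equation*}
together with the Besicovitch covering of Lemma~\ref{lem:Besicovitch_lemma}. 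On each Besicovitch cube $\cC_j$ the inequality $\int_{\cC_j}\rho=1/(3^d(4^d+1))<1=\int_{B(x_j,R(x_j))}\rho$ forces $R(x_j)>\ell_j/2$ at the cube center, which by the 1-Lipschitz regularity of $R$ lets me discretize the integral into a sum of the form $\sum_j\ell_j^{-d}\log(r_0/(\eta\ell_j))_+$. This discrete sum is then controlled via Lemma~\ref{lem:cubes_d=a} and the Jensen-type bound~\eqref{eq:estim_log_rho} by $\int\rho^2(\log(cr_0^d\rho))_+$ plus an error of order $\log(\eta^{-1})\int\rho^2$; this extra $\log(\eta^{-1})$ is absorbed inside the logarithm on the right-hand side of~\eqref{eq:clenergybound3}, accounting for the $\eta^{-2d}$ factor there. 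The main obstacle will be executing this discretization carefully, since $R$ may drop well below $\ell_j/2$ at points $x\in\cC_j$ far from the center (the 1-Lipschitz bound gives only $R(x)\geq (1-\sqrt{d})\ell_j/2$ there), so the Fubini step cannot be performed naively; the pointwise Jensen bound above must be combined with the cube-by-cube estimate of Lemma~\ref{lem:cubes_d=a} rather than with a simple replacement of $R(x)$ by $\ell_j$ throughout $\cC_j$.
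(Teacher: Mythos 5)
Your geometric setup for the core term is the same as the paper's: the disjoint balls $B(z_j,\tfrac{\eta}{2}R(x_j))$ all lie in $B(0,|x_i-x_j|)\subseteq B(0,r_0)$ and at distance $\geq\eta R(x_i)$ from the origin, giving the pointwise bound $\sum_{j>i}\1(|x_i-x_j|\le r_0)|x_i-x_j|^{-d}\leq C(\eta R(x_i))^{-d}\bigl(\log\tfrac{r_0}{\eta R(x_i)}\bigr)_+$, and your Jensen inequality with $F(u)=u^2(\log(au))_+$ is exactly the paper's~\eqref{eq:Rlogbound}. Using Hardy--Littlewood for the tail term is also fine. The gap is in how you close the core estimate. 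By integrating against $\bP$ first, you reduce to the purely density-level inequality
\begin{equation*}
\int_{\R^d}\frac{\rho(x)}{R(x)^d}\Bigl(\log\frac{r_0}{\eta R(x)}\Bigr)_+\dx\;\leq\;\frac{C}{\eta^d}\int_{\R^d}\rho^2\Bigl(\log\frac{Cr_0^d\rho}{\eta^d}\Bigr)_+ ,
\end{equation*}
thereby discarding the separation hypothesis~\eqref{eq:plansupport2}, which is precisely the information the paper uses at this point. The paper instead proves the bound \emph{configuration-wise}: Lemma~\ref{lem:Rconfigbound} shows that for any points satisfying $|x_i-x_j|\geq\eta(R(x_i)+R(x_j))$, the sum $\sum_i R(x_i)^{-d}(\log(\lambda R(x_i)^{-d}))_+$ is controlled by $C\eta^{-d}\int\rho^2(\log(\cdot\,\rho))_+$, via a decomposition into exponentially growing shells around the particle with smallest $R$ (at most $C_d\tau^d$ particles per shell) together with~\eqref{eq:Rlogbound}; integrating against $\bP$ is then trivial.

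Your proposed substitute --- Besicovitch cubes, Lemma~\ref{lem:cubes_d=a} and~\eqref{eq:estim_log_rho} --- does not fill this hole. Lemma~\ref{lem:cubes_d=a} bounds sums of $\rd(\cC_0,Q_j)^{-d}$ over \emph{pairwise distances between disjoint cubes}; it says nothing about $\int\rho\,R^{-d}(\log)_+$. And the discretization itself breaks down for the reason you yourself point out: on a cube $\cC_j$ of side $\ell_j$ with $R(x_j)\approx\ell_j/2$ at the center, the $1$-Lipschitz bound gives $R(x)\geq(1-\sqrt d)\ell_j/2$, which is vacuous for $d\geq2$, so you cannot replace $R(x)$ by $\ell_j$ on $\cC_j$; saying the two lemmas ``must be combined carefully'' is not a proof. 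To salvage your route you would need a genuinely new ingredient, e.g.\ a bounded-overlap lemma stating that $\int_{\{x\,:\,|x-y|<R(x)\}}\rho(x)\dx\leq C_d$ uniformly in $y$ (which can be proved by a cone decomposition around $y$, and would then let you apply Fubini to $\int\rho(x)\int_{B(x,R(x))}\rho^2(\log(a\rho))_+\,\dy\,\dx$); but no such statement appears in your argument, so as written the crucial step is missing.
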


\begin{proof}
	The proof goes along the same lines as the proof of \cref{prop:clbound}.
	However, complications arise due to the fact that $ 1/\abs{x}^d $ is not integrable at infinity, so we need to take into account the finite range $ r_0 $ of the core of $ w $.
	Incidentally, this also forces us to avoid using the Hardy-Littlewood maximal inequality later in the proof.
	Following the proof of \cref{prop:clbound} up to \eqref{eq:interactbound} and noting that $ B \myp{z_j, \frac{\eta}{2} R \myp{x_j}} \subseteq B \myp{0, \abs{x_i-x_j}} $, we have
	\begin{align*}
		\sum\limits_{j=i+1}^N \frac{\mathds{1} \myp{\abs{x_i - x_j} \leq r_0}}{\abs{x_i - x_j}^d}
		\leq{}& \sum\limits_{j=i+1}^N \frac{\mathds{1} \myp{\abs{x_i - x_j} \leq r_0}}{\abs{B \myp{z_j,\frac{\eta}{2} R \myp{x_j}}}} \int_{B \myp{z_j, \frac{\eta}{2} R \myp{x_j}}} \frac{1}{\abs{y}^d} \id y \\
		\leq{}& \frac{1}{\abs{B \myp{0,\frac{\eta}{2} R \myp{x_i}}}} \int_{\eta R \myp{x_i} \leq \abs{y} \leq r_0} \frac{1}{\abs{y}^d} \id y \\
		={}& \frac{\abs{\mathbb{S}^{d-1}}}{\abs{B \myp{0,\frac{\eta}{2} R \myp{x_i}}}} \myp[\Big]{ \int_{\eta R \myp{x_i}}^{r_0} \frac{1}{r} \id r }_+ \\
		={}& \frac{2^d}{\eta^d R \myp{x_i}^d} \myp[\Big]{ \log \myp[\Big]{ \frac{r_0^d}{\eta^d R \myp{x_i}^d} } }_+.
	\end{align*}
	This leads to the bound
	\begin{align}
		\MoveEqLeft[6] \int_{\R^{dN}} \sum\limits_{i=1}^N \sum\limits_{j = i +1}^N w \myp{x_i - x_j} \id \bP \myp{x} \nn \\
		\leq{}& \kappa r_0^d \frac{2^d}{\eta^d} \int_{\R^{dN}} \sum\limits_{i=1}^N \frac{1}{R \myp{x_i}^d} \myp[\Big]{ \log \myp[\Big]{ \frac{r_0^d}{\eta^d R \myp{x_i}^d} } }_+ \id \bP \myp{x} \nn \\
		&+ \kappa \frac{2^d}{\abs{B_1} \eta^d} \int_{\R^d} \frac{1}{1 + \abs{y}^s} \id y \int_{\R^{dN}} \sum\limits_{i=1}^N \frac{1}{ R \myp{x_i}^d} \id \bP\myp{x},
	\end{align}
	where, in this case, we cannot use the Hardy-Littlewood maximal inequality on the first term.
	However, this can be circumvented using the fact that $ \abs{x_i - x_j} \geq \eta \myp{R \myp{x_i} + R \myp{x_j}} $ on the support of $ \bP $, which is the content of \cref{lem:Rconfigbound} below.
	Using the lemma, we conclude
	\begin{align*}
		F_0 \myb{\rho}
		\leq{}& \frac{\kappa r_0^d C}{\eta^{2d}} \int_{\R^d} \rho^2 \myp[\Big]{ \log \myp[\Big]{\frac{2^d \abs{B_1} r_0^d}{\eta^{2d}} \rho} }_+ + \frac{\kappa C}{\eta^{2d}} \int_{\R^d} \frac{1}{1 + \abs{y}^s} \id y \int_{\R^d} \rho^2,
	\end{align*}
	where the constant $ C $ depends only on the dimension $ d $.
\end{proof}

\begin{lemma}
	\label{lem:Rconfigbound}
	Let $ 0 \leq \rho \in L^1 \myp{\R^d} $ be any density with $ \int \rho > 1 $, and take any configuration of points $ x_1, \dotsc, x_M \in \R^d $ satisfying $ \abs{x_i - x_j} \geq \eta \myp{R \myp{x_i} + R \myp{x_j}} $ for $ i \neq j $, for some $ 0 < \eta \leq 1 $.
	Then we have the bounds
	\begin{equation}
	\label{eq:Rconfigbound1}
		\sum\limits_{i=1}^M \frac{1}{R \myp{x_i}^p}
		\leq \frac{C_{d,p}}{\eta^p} \int_{\R^d} \rho^{1+\frac{p}{d}}
	\end{equation}
	for any $ p > 0 $, and for any $ \lambda > 0 $,
	\begin{equation}
	\label{eq:Rconfigbound2}
		\sum\limits_{i=1}^M \frac{1}{R \myp{x_i}^d} \myp[\Big]{ \log \myp[\Big]{\frac{\lambda}{R \myp{x_i}^d}} }_+
		\leq \frac{C_d}{\eta^d} \int_{\R^d} \rho^2 \myp[\Big]{ \log \myp[\Big]{ \frac{2^d \lambda}{\eta^d} \abs{B_1} \rho} }_+.
	\end{equation}
\end{lemma}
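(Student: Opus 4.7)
The hypothesis $|x_i - x_j| \geq \eta(R(x_i)+R(x_j))$ ensures that the balls $B_i := B(x_i, \eta R(x_i))$ are pairwise disjoint. The key pointwise estimate is: for any $y \in B_i$, the triangle inequality yields $B(y, (1+\eta)R(x_i)) \supset B(x_i, R(x_i))$, and the latter carries unit $\rho$-mass by the definition~\eqref{eq:Rdef} of $R(x_i)$. Hence, directly from the definition~\eqref{eq:maximaldef} of the Hardy--Littlewood maximal function,
\begin{equation*}
M_\rho(y) \geq \frac{1}{|B_1|(1+\eta)^d R(x_i)^d} \geq \frac{1}{2^d|B_1|R(x_i)^d}, \qquad \forall y \in B_i.
\end{equation*}

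To prove~\eqref{eq:Rconfigbound1}, I would raise the above to the power $1+p/d$ and integrate over $B_i$ (of volume $|B_1|\eta^d R(x_i)^d$), obtaining $R(x_i)^{-p} \leq C_{d,p}\eta^{-d}\int_{B_i} M_\rho^{1+p/d}$. Summing over $i$ using disjointness of the $B_i$'s, together with the strong-type Hardy--Littlewood maximal inequality $\|M_\rho\|_{L^{1+p/d}} \leq C_{d,p}\|\rho\|_{L^{1+p/d}}$ (valid since $1+p/d>1$), yields the desired bound with constant $C/\eta^d$. Because $\eta^{-d} \leq \eta^{-p}$ whenever $\eta \leq 1$ and $p \geq d$, which is the range of $p$ used in the applications (Propositions~\ref{prop:clbound} and~\ref{prop:clbound3}), this implies~\eqref{eq:Rconfigbound1}.

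For~\eqref{eq:Rconfigbound2}, I would apply the same strategy with $\phi(t) := t^2(\log(\mu t))_+$, which is non-decreasing in $t \geq 0$. Monotonicity together with the pointwise lower bound on $M_\rho$ gives $\phi(M_\rho(y)) \geq \phi\bigl((2^d|B_1|R(x_i)^d)^{-1}\bigr)$ on $B_i$. Integrating over the disjoint ball $B_i$ and choosing $\mu := 2^d|B_1|\lambda$ so that the logarithm on the left collapses to $\log(\lambda/R(x_i)^d)$, one obtains
\begin{equation*}
\frac{1}{R(x_i)^d}\left(\log\frac{\lambda}{R(x_i)^d}\right)_+ \leq \frac{C_d}{\eta^d} \int_{B_i} M_\rho^2 \bigl(\log(2^d|B_1|\lambda M_\rho)\bigr)_+.
\end{equation*}
Summing over $i$ with disjointness and invoking a log-augmented Hardy--Littlewood maximal inequality (boundedness of $M$ on the Orlicz space $L^2\log L$) then yields~\eqref{eq:Rconfigbound2}; the extra factor of $\eta^{-d}$ inside the logarithm of the target arises naturally when the constants in this Orlicz-type inequality are tracked.

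The main obstacle is the last step of~\eqref{eq:Rconfigbound2}, namely the passage from $\int M_\rho^2(\log M_\rho)_+$ to $\int \rho^2(\log\rho)_+$ with the correct logarithmic argument. A seemingly more direct route via Jensen's inequality applied to the unit-mass balls $B(x_i,R(x_i))$ would avoid the maximal function entirely, but these larger balls may overlap with arbitrarily high multiplicity, forcing one to re-introduce a careful dyadic decomposition anyway.
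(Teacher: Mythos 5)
Your argument for \eqref{eq:Rconfigbound1} is correct and genuinely different from the paper's. The paper orders the points by increasing $R$, groups them into clusters whose base balls $B(x_1^{(k)},R(x_1^{(k)}))$ are pairwise disjoint, decomposes each cluster into dyadic shells $\tau^m\eta R(x_1)\leq|x_1-y|<\tau^{m+1}\eta R(x_1)$, shows each shell contains at most $4^d\tau^d$ points (via disjointness of the balls $B(x_j,\eta R(x_j))$ and the Lipschitz continuity of $R$), sums the geometric series, and converts $R(x_1)^{-p}$ into $\int_{B(x_1,R(x_1))}\rho^{1+p/d}$ by H\"older. That route yields the constant $\eta^{-p}$ and uses no maximal function. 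Your route is shorter but yields $\eta^{-d}$ in place of $\eta^{-p}$, so strictly speaking it proves the stated inequality only for $p\geq d$; as you note, this covers every application in the paper ($p=\alpha\geq d$, $p=s>d$, $p=d$), and for $p>d$ your constant is in fact better. One should record, though, that the lemma as stated claims the bound for all $p>0$, which your argument does not deliver.

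For \eqref{eq:Rconfigbound2} there is a genuine gap at exactly the step you flag. What you need is not the \emph{norm} boundedness of $M$ on the Orlicz space $L^2\log L$, but the \emph{modular} inequality
\begin{equation*}
\int_{\R^d} M_\rho^2\,\myp[\big]{\log\myp{\mu M_\rho}}_+ \leq C_d\int_{\R^d}\rho^2\,\myp[\big]{\log\myp{C_d'\mu\rho}}_+ ,
\end{equation*}
with constants independent of $\mu$ and with the truncated logarithm on both sides; a Luxemburg-norm estimate does not imply this. Such a modular inequality is true for Young functions in the class $\nabla_2$ (to which $t\mapsto t^2(\log t)_+$ belongs), but its standard proof via the weak-$(1,1)$ bound and the layer-cake formula produces an additional term of the order $\int\rho^2\1(\rho>c)$, which is not obviously dominated by the stated right-hand side when $\lambda$ is small (the logarithm there may vanish where $\rho$ is of order one). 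So the constant tracking you defer is precisely the hard part, and the $\eta^{-d}$ inside the logarithm of \eqref{eq:Rconfigbound2} does not ``arise naturally'' from it. This is exactly why the paper remarks, in the proof of Proposition~\ref{prop:clbound3}, that the finite range $r_0$ forces it to avoid the Hardy--Littlewood maximal inequality here: instead it reuses the shell/cluster decomposition, reduces the whole sum to the single quantity $R(x_1)^{-d}\big(\log(\lambda/R(x_1)^d)\big)_+$ per cluster, and applies Jensen's inequality \eqref{eq:Rlogbound} only on the pairwise disjoint unit-mass balls at the cluster bases. To complete your proof you would either have to establish the modular inequality above with explicit, $\mu$-uniform constants and absorb the lower-order term, or fall back on the paper's elementary shell argument for the logarithmic part.
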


\begin{proof}
	We consider any configuration $ x_1, \dotsc, x_M $ as in the statement, and seek to provide a bound on the sum $ \sum_{i=1}^M \frac{1}{R \myp{x_i}^p} $.
	We order the points such that $ R \myp{x_1} \leq \cdots \leq R \myp{x_M} $, and assume first for simplicity that all the balls $ B \myp{x_j, R \myp{x_j}} $ intersect the smallest ball $ B \myp{x_1, R \myp{x_1}} $.
	The main idea of the following argument is to split the space $ \mathbb{R}^d $ into shells of exponentially increasing width, centered around $ x_1 $, and arguing that the number of points among $ x_2, \dotsc, x_M $ that can lie in each shell is universally bounded.
	To elaborate, take any $ \tau > 1 $ and consider for $ m \in \N_0 $ the spherical shell of points $ y \in \mathbb{R}^d $ satisfying
	\begin{equation}
	\label{eq:shell}
		\tau^m \eta R \myp{x_1}
		\leq \abs{x_1 - y}
		< \tau^{m+1} \eta R \myp{x_1}.
	\end{equation}
	Note that if $ x_j $ lies in this shell, then by Lipschitz continuity of $ R $,
	\begin{equation*}
		\frac{2\eta}{1+\eta} R \myp{x_j}
		\leq \abs{x_1-x_j}
		< \tau^{m+1} \eta R \myp{x_1},
	\end{equation*}
	immediately implying that
	\begin{equation}
	\label{eq:shell_upper}
		R \myp{x_j} < \frac{1+\eta}{2} \tau^{m+1} R \myp{x_1}.
	\end{equation}
	This means that the ball $ B \myp{x_j, \eta R \myp{x_j}} $ is contained in
	\begin{equation*}
		B \myp{x_j, \eta R \myp{x_j}}
		\subseteq B \myp{x_1, \abs{x_1 - x_j} + \eta R \myp{x_j}}
		\subseteq B \myp[\big]{x_1, \frac{3+\eta}{2} \tau^{m+1} \eta R \myp{x_1}}.
	\end{equation*}
	Furthermore, by the assumption that $ B \myp{x_1, R \myp{x_1}} \cap B \myp{x_j, R \myp{x_j}} \neq \emptyset $, we have that
	\begin{equation}
	\label{eq:shell_lower}
		\frac{\tau^m}{2} \eta R \myp{x_1}
		\leq \frac{1}{2} \abs{x_1 - x_j}
		< \frac{1}{2} \myp{R \myp{x_1} + R \myp{x_j}}
		\leq R \myp{x_j}.
	\end{equation}
	Since the balls $ B \myp{x_j, \eta R \myp{x_j}} $ are all disjoint, we conclude that the number of $ x_j $'s that can lie in the $ m $'th shell around $ x_1 $ is bounded by the ratio of the volumes
	\begin{align*}
		\# \Set{j \mid \tau^m \eta R \myp{x_1} \leq \abs{x_1 - x_j} < \tau^{m+1} \eta R \myp{x_1}}
		\leq{}& \frac{ \abs{B \myp[\big]{x_1, \frac{3+\eta}{2} \tau^{m+1} \eta R \myp{x_1}}} }{ \abs{ B \myp{0, \frac{\tau^m}{2} \eta R \myp{x_1}}} } \\
		\leq{}& 4^d \tau^d.
	\end{align*}
	Note also that no $ x_j $ can be placed inside the first shell (corresponding to $ \abs{x_1- x_j} < \eta R \myp{x_1} $), because we always have $ \abs{x_1 - x_j} \geq \eta \myp{R \myp{x_1} + R \myp{x_j}} $ by assumption.
	Now, for any power $ p > 0 $, this allows us to bound, using \eqref{eq:R_bound},
	\begin{align}
		\sum\limits_{j = 1}^M \frac{1}{R \myp{x_j}^p}
		\leq{}& 4^d \tau^d \sum\limits_{m=0}^{\infty} \frac{2^p}{\myp{\tau^m \eta R \myp{x_1} }^p}
		\leq{} \frac{ 2^{p+2d} \tau^{p+d} }{ \eta^p \myp{\tau^p - 1} } \frac{1}{R \myp{x_1}^p} \nn \\
		\leq{}& \frac{ 2^{p+2d} \tau^{p+d} }{ \eta^p \myp{\tau^p - 1} } \abs{B_1}^{\frac{p}{d}} \int_{B\myp{x_1, R \myp{x_1}}} \rho \myp{y}^{1+\frac{p}{d}} \id y.
	\label{eq:config_bound1}
	\end{align}
	To bound the sum involving the logarithm, we note first that for any $ \lambda > 0 $, applying Jensen's inequality to the function $ t \mapsto t^2 \myp{\log \lambda t}_+ $ yields
	\begin{align}
		\frac{1}{R \myp{x}^d} \myp[\Big]{\log \myp[\Big]{ \frac{\lambda}{R \myp{x}^d}}}_+
		={}& \frac{\abs{B_1}}{\abs{B \myp{x}}} \myp[\bigg]{\int_{B \myp{x}} \rho}^2 \myp[\Big]{\log \myp[\Big]{\frac{\lambda \abs{B_1}}{\abs{B \myp{x}}} \int_{B \myp{x}} \rho}}_+ \nn \\
		\leq{}& \abs{B_1} \int_{B \myp{x}} \rho^2 \myp{ \log \myp{\lambda \abs{B_1} \rho}}_+.
	\label{eq:Rlogbound}
	\end{align}
	Using this, we obtain by again summing over all the shells,
	\begin{align*}
		\sum\limits_{i=1}^M \frac{1}{R \myp{x_i}^d} \myp[\Big]{ \log \myp[\Big]{\frac{\lambda}{R \myp{x_i}^d}} }_+
		\leq{}& \sum\limits_{m=0}^{\infty} \frac{4^d \tau^d 2^d}{\myp{\tau^m \eta R \myp{x_1} }^d} \myp[\Big]{ \log \myp[\Big]{\frac{2^d \lambda}{\myp{\tau^m \eta R \myp{x_1} }^d}} }_+ \\
		\leq{}& \frac{2^{3d} \tau^d}{\eta^d} \sum\limits_{m=0}^{\infty} \frac{1}{\tau^{dm} R \myp{x_1}^d} \myp[\Big]{ \log \myp[\Big]{\frac{2^d \lambda}{\eta^d R \myp{x_1}^d}} }_+ \\
		\leq{}& \frac{2^{3d} \tau^{2d} \abs{B_1}}{\eta^d \myp{\tau^d -1}} \int_{B \myp{x_1, R \myp{x_1}}} \rho^2 \myp[\Big]{ \log \myp[\Big]{ \frac{2^d \lambda}{\eta^d} \abs{B_1} \rho} }_+.
	\end{align*}

	Finally, we generalize to the case where not all the balls $ B \myp{x_j, R \myp{x_j}} $ intersect the smallest ball $ B \myp{x_1, R \myp{x_1}} $.
	We split the configuration $ \myp{x_j}_{1 \leq j \leq M} $ into clusters $ \myp[\big]{x_j^{\myp{k}}}_{1 \leq j \leq n_k} $ with $ 1 \leq k \leq K $, such that:
	\begin{itemize}
		\item For any $ k $, $ R \myp{x_1^{\myp{k}}} \leq \cdots \leq R \myp{x_{n_k}^{\myp{k}}} $.

		\item $ B \myp{x_1^{\myp{k}}, R \myp{x_1^{\myp{k}}}} \cap B \myp{x_j^{\myp{k}}, R \myp{x_j^{\myp{k}}}} \neq \emptyset $ for any $ j,k $.

		\item The balls $ B \myp{x_1^{\myp{k}}, R \myp{x_1^{\myp{k}}}} $ are all pairwise disjoint for $ k = 1, \dotsc, K $.
	\end{itemize}
	Then, using \eqref{eq:config_bound1} on each cluster, we get for instance
	\begin{align*}
		\sum\limits_{j = 1}^M \frac{1}{R \myp{x_j}^p}
		\leq{}& \frac{ 2^{p+2d} \tau^{p+d} }{ \eta^p \myp{\tau^p - 1} } \abs{B_1}^{\frac{p}{d}} \sum\limits_{k=1}^K \int_{B \myp{x_1^{\myp{k}}, R \myp{x_1^{\myp{k}}}}} \rho \myp{y}^{1+\frac{p}{d}} \id y \\
		\leq{}& \frac{ 2^{p+2d} \tau^{p+d} }{ \eta^p \myp{\tau^p - 1} } \abs{B_1}^{\frac{p}{d}} \int_{\mathbb{R}^d} \rho \myp{y}^{1+\frac{p}{d}} \id y,
	\end{align*}
	which concludes the proof of \eqref{eq:Rconfigbound1}.
	\eqref{eq:Rconfigbound2} follows in the same way.
\end{proof}


\begin{remark}[Hard-core at zero temperature]
	\label{rmk:clbound2}
As we have mentioned in Section~\ref{sec:hard-core}, in the hard core case $\alpha=+\ii$, we know from~\eqref{eq:simple_hard_core_T0} that for any representable density $\rho$, we have
	\begin{equation}
	\label{eq:clenergybound2}
		F_0 [\rho]
		\leq \frac{\kappa C}{r_0^s} \int_{\R^d} \rho \myp{x} \id x,
	\end{equation}
where the constant $ C $ depends only on $ d $ and $ s $. The problem is to determine when $\rho$ is representable. Using Theorem~\ref{thm:clstate}, this is the case when for instance $R_\rho=\min_{x} R(x)\geq r_0$.
\end{remark}

\subsection{The block approximation}\label{sec:Block}
While the state from \cref{thm:clstate} is useful for obtaining energy bounds at zero temperature, it might be singular with respect to the Lebesgue measure on $ \R^{dN} $, leaving it unsuitable to use for the positive temperature case, because the entropy in this case will be infinite.
Here we describe a simple way of regularizing states, while keeping the one-body density fixed, which is a slight generalization to any partition of unity of the construction in~\cite{CarDuvPeySch-17}.
Essentially, it works by cutting $ \R^d $ into "blocks" and then locally replacing the state by a pure tensor product.

Let $ \sum \chi_j = \mathds{1}_{\R^d} $ be any partition of unity, and $ \bP $ any $ N $-particle state with density $ \rho $.
The corresponding \emph{block approximation} is defined by
\begin{equation}
	\widetilde\bP
	:= \sum\limits_{i_1,...,i_N} \bP \myp{\chi_{i_1} \otimes\cdots\otimes \chi_{i_N}} \frac{\myp{ \rho \chi_{i_1}} \otimes\cdots \otimes \myp{ \rho \chi_{i_N}} }{\prod_{k=1}^N\int_{\R^d} \rho \chi_{i_k}},
\label{eq:block}
\end{equation}
where we denote
\begin{equation*}
	\bP \myp{\chi_{i_1} \otimes\cdots\otimes \chi_{i_N}}
	:= \int_{\mathbb{R}^{dN}} \chi_{i_1} \otimes\cdots\otimes \chi_{i_N} \id \bP.
\end{equation*}
That is, $ \widetilde{\bP} $ is a convex combination of tensor products of the normalized $ \frac{\rho \chi_i}{\int \rho \chi_i} $.
One can easily show that $ \widetilde{\bP} $ has one-body density $ \rho_{\widetilde{\bP}} = \rho $.
Furthermore, it is clear that $ \widetilde{\bP} $ is a symmetric measure whenever $ \bP $ is, so we can also write
\begin{equation*}
	\widetilde{\bP}
	= \sum\limits_{i_1,...,i_N} \bP \myp{\chi_{i_1} \otimes\cdots\otimes \chi_{i_N}} \Pi_s \myp[\Big]{ \frac{\rho \chi_{i_1}}{\int \rho \chi_{i_1}} \otimes\cdots\otimes \frac{\rho \chi_{i_N}}{\int \rho \chi_{i_N}} },
\end{equation*}
where $ \Pi_s $ denotes the symmetrization operator  in~\eqref{eq:def_Pi_s}. In~\cite{CarDuvPeySch-17} the chosen partition of unity is just a tiling made of cubes, but in fact any partition works. Applying Jensen's inequality yields the following.

\begin{lemma}[Entropy of the block approximation]
	\label{lem:block_entropy}
	Suppose that the state $ \bP $ and the partition of unity $ \myp{\chi_j} $ are such that $ \chi_{i_1}, \dotsc, \chi_{i_N} $ all have disjoint supports whenever $ \bP \myp{\chi_{i_1} \otimes \cdots \otimes \chi_{i_N}} \neq 0 $.
	Then we have
	\begin{multline}
	\label{eq:block_entropy}
		\int_{\R^{dN}} \widetilde{\bP} \log \myp{ N! \, \widetilde\bP}
		\leq{} \int_{\R^d} \rho \log \rho + \int_{\R^d} \rho \sum\limits_i \chi_i \log \chi_i\\ -\sum\limits_{i} \myp[\Big]{\int_{\R^d} \rho \chi_{i}} \log \myp[\Big]{\int_{\R^d} \rho \chi_{i}}.
	\end{multline}
\end{lemma}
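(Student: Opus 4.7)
The plan is to exploit the disjoint-support hypothesis to make $\widetilde\bP$ locally a pure tensor product, compute its entropy cell by cell, and then absorb the combinatorial $\log N!$ via the symmetry of $\bP$. Write $a_{\vec i} := \bP(\chi_{i_1}\otimes\cdots\otimes\chi_{i_N})$, $c_j := \int_{\R^d}\rho\chi_j$, and $\mu_{\vec i} := \bigotimes_{k=1}^N(\rho\chi_{i_k}/c_{i_k})$, so that $\widetilde\bP = \sum_{\vec i}a_{\vec i}\mu_{\vec i}$. By hypothesis, $a_{\vec i}\neq 0$ forces the indices $i_1,\dots,i_N$ to be pairwise distinct, and the supports in $(\R^d)^N$ of the different $\mu_{\vec i}$ are therefore pairwise disjoint. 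On the support of $\mu_{\vec i}$ the defining formula \eqref{eq:block} collapses to the single term $\widetilde\bP = a_{\vec i}\mu_{\vec i}$.

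Integrating cell by cell then gives
$$\int_{\R^{dN}}\widetilde\bP\log\widetilde\bP = \sum_{\vec i}a_{\vec i}\log a_{\vec i} + \sum_{\vec i}a_{\vec i}\int_{\R^{dN}}\mu_{\vec i}\log\mu_{\vec i}.$$
For the second sum I would expand $\log\mu_{\vec i}$ using its tensor product structure and resum in $\vec i$ via the identity $\sum_{\vec i:\, i_k = j}a_{\vec i} = c_j/N$, valid for each $k$ by symmetry of $\bP$ combined with $\rho_\bP = \rho$. A short calculation then yields
$$\sum_{\vec i}a_{\vec i}\int_{\R^{dN}}\mu_{\vec i}\log\mu_{\vec i} = \int_{\R^d}\rho\log\rho + \int_{\R^d}\rho\sum_j\chi_j\log\chi_j - \sum_j c_j\log c_j,$$
which is precisely the right-hand side of \eqref{eq:block_entropy} minus the $\log N!$ coming from $\log(N!\,\widetilde\bP) = \log N! + \log\widetilde\bP$.

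It thus remains to show $\log N! + \sum_{\vec i}a_{\vec i}\log a_{\vec i} \leq 0$, and this is the main and essentially the only nontrivial point. Symmetry of $\bP$ implies $a_{\vec i}$ depends only on the unordered set $U = \{i_1,\dots,i_N\}$; writing $A_U := N!\, a_{\vec i(U)}$ we then have $\sum_U A_U = 1$ and
$$\sum_{\vec i}a_{\vec i}\log a_{\vec i} = \sum_U N!\,a_{\vec i(U)}\log a_{\vec i(U)} = \sum_U A_U\log A_U - \log N! \leq -\log N!,$$
the last inequality because $A_U \in [0,1]$ forces $A_U\log A_U \leq 0$. This is the Jensen-type bound on the symmetric orbits alluded to in the statement, and combining it with the previous display yields \eqref{eq:block_entropy}.
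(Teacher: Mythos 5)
Your computation of the tensor-product entropies and your orbit-counting bound $\log N!+\sum_{\vec i}a_{\vec i}\log a_{\vec i}\leq 0$ are both correct, but the proof has a genuine gap at its first structural step: the claim that the measures $\mu_{\vec i}$ for \emph{different} tuples $\vec i$ have pairwise disjoint supports in $(\R^d)^N$. The hypothesis only guarantees that $\chi_{i_1},\dotsc,\chi_{i_N}$ have disjoint supports \emph{within} a single tuple of nonzero weight; it does not say that $\supp\chi_i\cap\supp\chi_j=\emptyset$ for arbitrary $i\neq j$. In the application of the lemma (Section~\ref{sec:proof_OT_block}) the functions $\chi_j=\1_{B_j}/\varphi$ come from a Besicovitch cover whose balls overlap, so two admissible tuples such as $(1,2,\dotsc)$ and $(3,2,\dotsc)$ with $B_1\cap B_3\neq\emptyset$ produce $\mu$'s with overlapping supports. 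Without cross-tuple disjointness your ``cell by cell'' identity fails, and the natural replacement inequality points the wrong way: since $\widetilde\bP\geq a_{\vec i}\mu_{\vec i}$ pointwise, one only gets $\int\widetilde\bP\log\widetilde\bP\geq\sum_{\vec i}a_{\vec i}\log a_{\vec i}+\sum_{\vec i}a_{\vec i}\int\mu_{\vec i}\log\mu_{\vec i}$, which is useless for an upper bound.

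The repair is to give up the exact identity and apply Jensen's inequality for the convex function $t\mapsto t\log(N!\,t)$ to the convex combination $\widetilde\bP=\sum_{\vec i}a_{\vec i}\,\Pi_s\mu_{\vec i}$, which yields $\int\widetilde\bP\log(N!\,\widetilde\bP)\leq\sum_{\vec i}a_{\vec i}\int\Pi_s\mu_{\vec i}\log(N!\,\Pi_s\mu_{\vec i})$ with no disjointness needed across tuples (the mixing entropy is simply discarded, which is harmless since it is nonpositive). The factor $N!$ is then cancelled inside the logarithm by writing $N!\,\Pi_s\mu_{\vec i}=\sum_{\sigma}\bigotimes_k \rho\chi_{i_{\sigma(k)}}/c_{i_{\sigma(k)}}$ and using the \emph{within-tuple} disjointness to retain only $\sigma=\mathrm{Id}$ on the support of each summand; your marginal identity $\sum_{\vec i:\,i_k=j}a_{\vec i}=c_j/N$ and the rest of your computation then apply verbatim. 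This is the paper's route. Your orbit-counting treatment of $\log N!$ is correct and would give a clean alternative in the special case where the $\chi_j$ form a genuine partition of $\R^d$ into disjoint sets (a tiling, as in the original block approximation of Carlier--Duval--Peyr\'e--Schmitzer), but it is not available under the lemma's weaker hypothesis.
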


\begin{remark}\label{rmk:block_compact}
	Since $ \myp{\chi_i} $ is a partition of unity, the term above involving $ \chi_i \log \chi_i $ can always be estimated from above by zero.
	On the other hand, it is not clear that the sum in last term above is even finite for an arbitrary partition $ \myp{\chi_i} $.
	However, it turns out to behave nicely in many situations. For instance, if $\int\rho\chi_j\leq1$ for all $j$, we can estimate it by $ 1/e $ times the number of terms in the partition of unity, which is typically finite when $\rho$ has compact support.
\end{remark}

\begin{proof}
	The entropy of the block approximation can be estimated using Jensen's inequality by
	\begin{align*}
		& \int \widetilde{\bP} \log \myp{ N! \, \widetilde\bP} \\
		&\ \leq \sum\limits_{i_1,...,i_N} \bP \myp{\chi_{i_1}\otimes\cdots\otimes \chi_{i_N}} \int \Pi_s \left( \bigotimes_k \frac{\rho \chi_{i_k}}{\int \rho \chi_{i_k}} \right) \log \left( N! \, \Pi_s  \bigotimes_k \frac{\rho \chi_{i_k}}{\int \rho \chi_{i_k}} \right)\\
		&\  = \sum\limits_{i_1,...,i_N} \bP \myp{\chi_{i_1}\otimes\cdots\otimes \chi_{i_N}} \int \bigotimes_k \frac{\rho \chi_{i_k}}{\int \rho \chi_{i_k}}  \log \left(\sum_{\sigma \in \gS_N}\bigotimes_k \frac{\rho \chi_{i_{\sigma(k)}}}{\int \rho \chi_{i_{\sigma(k)}}} \right).
	\end{align*}
	We have here used the symmetry of $\bP$ to remove the first $\Pi_s$. It is important that the $N!$ has disappeared in the logarithm. For any non-zero term, the supports of the $ \chi_{i_k} $ are all disjoint, hence only the case $\sigma=\text{Id}$ remains in the sum. Using that
	\begin{equation*}
		\int \bigotimes_k \frac{\rho \chi_{i_k}}{\int \rho \chi_{i_k}}  \log \left(\bigotimes_k \frac{\rho \chi_{i_k}}{\int \rho \chi_{i_k}} \right)
		= \sum\limits_{k=1}^N \int \frac{\rho \chi_{i_k}}{\int \rho \chi_{i_k}} \log \frac{ \rho \chi_{i_k}}{\int \rho \chi_{i_k}}
	\end{equation*}
and plugging this into the previous expression, we conclude that \eqref{eq:block_entropy} holds.
\end{proof}

\subsection{Proof of Theorem~\ref{thm:OT_block}  in the canonical case at \texorpdfstring{$T>0$}{T>0}}\label{sec:proof_OT_block}
	We assume first that the density $ \rho $ is compactly supported, and then remove this assumption at the end.
	Applying the Besicovitch covering lemma~\cite{FraLapWei-LT,Guzman} on the cover $ \Set{B \myp{x, \eps R \myp{x}} \given x \in \supp \rho} $ gives the existence of a (finite) set of points $ \myp{y_j} \subseteq \supp \rho $ satisfying that $ \myp{B_j} := \myp{B \myp{y_j, \eps R \myp{y_j}}} $ covers the support of $ \rho $, and the multiplicity of the cover is universally bounded, i.e.,
	\begin{equation*}
		1 \leq \varphi \myp{x}
		:= \sum\limits_j \1_{B_j} \myp{x}
		\leq C_d,
		\qquad x \in \supp \rho,
	\end{equation*}
	where the constant $ C_d $ depends only on the dimension $ d $, and thus not on $ \eps $ or $ \rho $.
	This gives us a partition of unity $ \myp{\chi_j} $ defined by $ \chi_j := \frac{\1_{B_j}}{\varphi} $.
	One way of constructing the Besicovitch cover is to inductively maximize $ \eps R \myp{y_j} $ over the remaining volume $ y_j \in \supp \rho \setminus \bigcup_{k=1}^{j-1} B_k $, supposing that $ y_1, \dotsc, y_{j-1} $ have already been chosen.
	This construction implies the bound on the distances
	\begin{equation}
	\label{eq:Bes_distance}
		\abs{y_j - y_k}
		\geq \max\myp{\eps R \myp{y_j}, \eps R \myp{y_k}}
		\geq \frac{\eps}{2} \myp{R \myp{y_j} + R \myp{y_k}}
	\end{equation}
	for all $ j \neq k $.

	We now take the optimal transport state $ \bP $ obtained from \cref{thm:clstate}, and denote by $ m_j := \int \rho \chi_j = \int_{B_j} \frac{\rho}{\varphi} $ the local mass of $ \rho $ with respect to the partition of unity $ \myp{\chi_j} $.
	As a trial state for the free energy, we take the block approximation \eqref{eq:block} of $ \bP $ using the $ \chi_j $, i.e.,
	\begin{equation*}
		\bP_{\varepsilon}
		:={} \sum\limits_{j_1, \dotsc, j_N} \mathbb{P} \myp{\chi_{j_1} \times \cdots \times \chi_{j_N}} \myp[\Big]{\frac{\rho \chi_{j_1}}{m_{j_1}} } \otimes \cdots \otimes \myp[\Big]{\frac{\rho \chi_{j_N}}{m_{j_N}} }.
	\end{equation*}
	We show that the support of $ \bP_{\varepsilon} $ satisfies the condition \eqref{eq:plansupport2} for some $ \eta $.
	For any point $ \myp{x_1, \dotsc, x_N} \in \supp \bP_{\varepsilon} $, there must be a term in the sum above such that $ \bP \myp{\chi_{j_1} \times \cdots \times \chi_{j_N}} \neq 0 $, and $ x_k \in B_{j_k} = B\myp{y_{j_k}, \varepsilon R \myp{y_{j_k}}} $ for all $ k $.
	In particular, since the support of $ \bP $ satisfies \eqref{eq:plansupport}, there exist $ z_1, \dotsc, z_N $ with $ z_k \in B_{j_k} $ and $ \abs{z_k - z_{\ell}} \geq \frac{1}{3} \myp{R \myp{z_k} + R \myp{z_{\ell}}} $ for any $ k \neq \ell $.
	By the Lipschitz continuity of $ R $, $ x_k \in B_{j_k} $ implies that $ R \myp{y_{j_k}} \leq \frac{1}{1-\varepsilon} R \myp{x_k} $, so
	\begin{equation*}
		\abs{x_k - z_k}
		\leq 2 \varepsilon R \myp{y_{j_k}}
		\leq \frac{2\varepsilon}{1-\varepsilon} R \myp{x_k}.
	\end{equation*}
	Finally, this gives us the bound
	\begin{align}
		\abs{x_k - x_{\ell}}
		\geq{}& \abs{z_k - z_{\ell}} - \abs{x_k - z_k} - \abs{x_{\ell} - z_{\ell}} \nn \\
		\geq{}& \frac{1}{3} \myp{R \myp{z_k} + R \myp{z_{\ell}}} - \abs{x_k - z_k} - \abs{x_{\ell} - z_{\ell}} \nn \\
		\geq{}& \frac{1}{3} \myp{R \myp{x_k} + R \myp{x_{\ell}}} - \frac{4}{3} \myp{\abs{x_k - z_k} + \abs{x_{\ell} - z_{\ell}}} \nn \\
		\geq{}& \frac{1}{3} \myp[\Big]{1 - \frac{8 \varepsilon}{1-\varepsilon}} \myp{R \myp{x_k} + R \myp{x_{\ell}}}.
	\label{eq:OT_block_supp}
	\end{align}
	This argument also shows that if $ \bP \myp{\chi_{j_1} \times \cdots \times \chi_{j_N}} \neq 0 $, then the sets $ B_{j_k} $ are disjoint for $ k = 1, \dotsc, N $, provided that $ \varepsilon < \frac{1}{9} $.

	Now, since $ \bP_{\eps} $ satisfies \eqref{eq:OT_block_supp}, it follows from \cref{prop:clbound} that the interaction energy (in case $ \alpha > d $) is bounded by
	\begin{equation*}
		\cU_{N} \myp{\bP_{\eps}}
		\leq{} C\kappa r_0^\alpha \int_{\R^d} \rho^{1+\frac{\alpha}{d}} + C\kappa \int_{\R^d} \rho^2,
	\end{equation*}
	and similarly for $ \alpha = d $, using \cref{prop:clbound3}.
	Thus, to show \eqref{eq:can_bound2}, it only remains to provide a bound on the entropy of the state $ \bP_{\varepsilon} $.
	First, applying \cref{lem:block_entropy} immediately gives
	\begin{align*}
		\int_{\R^{dN}} \bP_{\varepsilon} \log \myp{N! \, \bP_{\varepsilon}}
		\leq{} \int_{\R^d} \rho \log \rho - \sum\limits_{j} m_j \log m_j.
	\end{align*}
	Then, for any numbers $ s,t \geq 0 $, we can use the elementary bound
	\begin{equation*}
		- s \log \myp{t s} \leq \frac{1}{et}
	\end{equation*}
	to conclude that
	\begin{align*}
		- \sum\limits_{j} m_j \log m_j
		={}& \sum\limits_{j} m_j \log \myp{R\myp{y_j}^d} - m_j \log \myp{R \myp{y_j}^d m_j} \\
		\leq{}& \sum\limits_{j} d \int \rho \myp{x} \chi_{j} \myp{x} \log \myp{ \myp{1+\eps} R \myp{x} } \id x + \frac{1}{e R \myp{y_j}^d} \\
		\leq{}& d \log \myp{1+\eps} \int_{\R^d} \rho + d \int_{\R^d} \rho \log R + \frac{C}{\eps^d} \int_{\R^d} \rho^2,
	\end{align*}
	where the last inequality uses \eqref{eq:Bes_distance} and \cref{lem:Rconfigbound}.
	This proves \cref{thm:OT_block} for compactly supported densities.
\qed

\begin{remark}[Hard-core case]\label{rmk:Hard-core-CT}
In the hard core case $\alpha=+\ii$, the above proof provides the bound
		\begin{multline}
		F_T [\rho]\leq  C\frac{\kappa}{r_0^d}\int_{\R^d}\rho+CT \int_{\R^d} \rho + T \int_{\R^d} \rho \log \rho+\frac{CTr_0^d}{(R_\rho-r_0)^d}\int_{\R^d}\rho^2\\+T \int_{\R^d} \rho \log R^d\label{eq:estim_hard_core_T>0}
			\end{multline}
under the assumption that $R_\rho=\min_x R(x)>r_0$, where $C$ only depends on $d$ and $s$. The main difference is the estimate on the distance between the particles in \eqref{eq:OT_block_supp}. We need to keep the maximum and use
		\begin{align*}
		\abs{x_k - x_{\ell}}
		\geq{}&  \max\left\{R_\rho,\frac{1}{3} \myp{R \myp{x_k} + R \myp{x_{\ell}}}\right\}  - \frac{8 \varepsilon}{3(1-\varepsilon)} \myp{R \myp{x_k} + R \myp{x_{\ell}}}\\
		\geq{}& \left(1-\frac{8\eps}{1-\eps}\right)R_\rho.
			\end{align*}
			Taking $\eps=\min(R_\rho/r_0-1,1)/100$ provides~\eqref{eq:estim_hard_core_T>0}.
\end{remark}

\subsection{Removal of the compactness condition} To finish this section we describe how to extend a result holding for compactly supported densities to general integrable ones, using this time a compactness argument.

\begin{theorem}
\label{thm:non_compact}
	Assume that $w$ satisfies Assumption~\ref{de:shortrangenew}.
	If we have for some $1\leq p\leq q<\ii$ with $q\geq2$ and some constants $C_j\geq0$
	\begin{multline}
	\label{eq:estimate}
		F_T[\rho]\leq C_0\int\rho+C_1\int\rho^{p}+C_2\int\rho^q +T\int\rho\log\rho\\
		+C_3\int\rho^2 (\log\rho)_++C_4\int\rho \log R
	\end{multline}
	for all $\rho\in L^1\cap L^q$ of compact support, then the same holds with the same constants for all $\rho\in L^1\cap L^q$. If $T>0$ we assume in both cases that $\int_{\R^d}\rho|\log\rho|<\ii$.
\end{theorem}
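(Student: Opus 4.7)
The plan is to approximate $\rho$ by compactly supported densities $\rho_n$ with the same integer mass $N=\int\rho$, apply the hypothesis to each $\rho_n$, and pass to the limit using dominated convergence on the right-hand side of~\eqref{eq:estimate} together with the lower semi-continuity of $F_T$ claimed in~\cref{rmk:lsc} on the left-hand side. Without loss of generality one may assume that the right-hand side of~\eqref{eq:estimate} is finite at $\rho$.

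First I would construct the truncation. Set $\mathcal{C}_n=(-n/2,n/2)^d$, $m_n=\int_{\mathcal{C}_n}\rho\to N$, and $\rho_n:=(N/m_n)\rho\,\1_{\mathcal{C}_n}$. Each $\rho_n$ has compact support, mass exactly $N$, satisfies $\rho_n\leq c\,\rho$ for a constant $c>0$ independent of $n$, and $\rho_n\to\rho$ strongly in $L^1$. Dominated convergence then yields $\int\rho_n^r\to\int\rho^r$ for every $r\in[1,q]$, $\int\rho_n\log\rho_n\to\int\rho\log\rho$ via the domination $\rho_n|\log\rho_n|\leq c\rho|\log\rho|+c|\log c|\rho$ (using the assumption $\int\rho|\log\rho|<\infty$ when $T>0$), and $\int\rho_n^2(\log\rho_n)_+\to\int\rho^2(\log\rho)_+$. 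For the local-radius term $\int\rho_n\log R_{\rho_n}$, the uniform bound $|\int_{B(x,r)}(\rho_n-\rho)|\leq\|\rho_n-\rho\|_{L^1}\to 0$ forces pointwise convergence $R_{\rho_n}(x)\to R(x)$ outside a negligible set where the distribution function $r\mapsto\int_{B(x,r)}\rho$ has a plateau at the level $1$, and another dominated convergence argument based on $R_{\rho_n}(x)\leq C(1+|x|)$ concludes.

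Second I would establish the lower semi-continuity statement of~\cref{rmk:lsc}: given $\rho_n\to\rho$ in $L^1$ with $T\int\rho_n|\log\rho_n|\leq C$ and $\liminf_n F_T[\rho_n]<\infty$, pick symmetric near-minimizers $\bP_n$ of $F_T[\rho_n]$. Stability~\eqref{eq:stabilitynew} combined with the upper entropy bound~\eqref{eq:upper_bd_S_canonical} controls both $\cU_N(\bP_n)$ and $\cS_N(\bP_n)$; strong $L^1$-convergence of the marginals provides tightness of $(\bP_n)$ on $(\R^d)^N$, and the uniform entropy bound gives equi-integrability via the de la Vall\'ee--Poussin criterion, so Dunford--Pettis extracts a weak-$L^1$ limit $\bP$, which is symmetric with $\rho_\bP=\rho$. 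Convexity of $t\mapsto t\log t$ yields weak $L^1$ lower semi-continuity of $-\cS_N$. For the interaction, stability with $N=2$ gives $w\geq -2\kappa$ pointwise, hence $g(x_1,\dots,x_N):=\sum_{j<k}w(x_j-x_k)$ is lower semi-continuous and bounded below on $(\R^d)^N$; approximating $g$ from below by the bounded lsc truncations $g\wedge M$, using weak $L^1$-convergence (via the Portmanteau-type inequality for bounded lsc integrands) for each fixed $M$, and then sending $M\to\infty$ by monotone convergence yields $\int g\,d\bP\leq\liminf_n\int g\,d\bP_n$, so $\cF_T(\bP)\leq\liminf_n\cF_T(\bP_n)$ and in particular $F_T[\rho]\leq\liminf_n F_T[\rho_n]$.

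Combining: applying the hypothesis to the compactly supported $\rho_n$ gives $F_T[\rho_n]\leq\mathrm{RHS}[\rho_n]$; the first step ensures $\mathrm{RHS}[\rho_n]\to\mathrm{RHS}[\rho]$, and the second gives $F_T[\rho]\leq\liminf_n F_T[\rho_n]$. The main obstacle I foresee is precisely the lower semi-continuity of the interaction along weakly $L^1$-convergent states, since $w$ can diverge at the origin; this is where the lower semi-continuity of $w$ (emphasized in~\cref{rmk:lsc}) is essential, and the truncation plus monotone-convergence trick sketched above is the cleanest way to handle it. A secondary delicate point is the convergence of $\int\rho_n\log R_{\rho_n}$, which is dealt with by the pointwise convergence $R_{\rho_n}\to R$ established in the first step together with uniform domination on $\supp\rho$.
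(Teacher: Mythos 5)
Your overall strategy coincides with the paper's: truncate $\rho$ to a compactly supported density $\rho_n$ of the same integer mass, pass to the limit in the right-hand side of~\eqref{eq:estimate} by dominated convergence (including the $\int\rho\log R$ term via the uniform bounds $C^{-1}(1+|x|)\leq R_n(x),R(x)\leq C(1+|x|)$), and conclude with the strong-$L^1$ lower semi-continuity of $F_T$ on the left-hand side. The treatment of the right-hand side, of the entropy lower semi-continuity by convexity, and of tightness all match the paper's argument.

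The one genuine gap is your compactness mechanism for the states at $T=0$. The uniform entropy bound you extract from $F_T[\rho_n]\leq C$ via stability and the relative entropy with respect to $(\rho_n/N)^{\otimes N}$ is of the form $T\int\bP_n\log(N!\,\bP_n)\leq C$, which carries no information when $T=0$. At zero temperature the near-minimizers $\bP_n$ are typically singular (optimal transport plans), there is no equi-integrability, and Dunford--Pettis cannot produce a weak-$L^1$ limit. The correct substitute, and what the paper does, is to use the tightness you already established to extract a weak-$*$ convergent subsequence of probability measures, identify $\rho_{\bP}=\rho$ by testing against $C^0_b$ functions, and then apply the portmanteau inequality $\liminf_n\int h\,\rd\bP_n\geq\int h\,\rd\bP$ to the lower semi-continuous, bounded-below integrand $h=\big(\sum_{j<k}w(x_j-x_k)+\kappa N\big)\wedge M$, followed by monotone convergence in $M$ --- exactly your truncation trick, but for weak convergence of measures rather than weak $L^1$ convergence. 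Note also that the role of the lower semi-continuity of $w$ is the opposite of what you suggest: at $T>0$ the limit state is absolutely continuous (there the entropy bound and Dunford--Pettis do apply) and mere measurability of $w$ suffices, while it is precisely at $T=0$ that the lower semi-continuity of $w$ is indispensable.
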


\begin{proof}
Let us first assume $C_4=0$ for simplicity. Our proof uses that the energy $\rho\mapsto F_T[\rho]$ is lower semi-continuous for the strong topology of $L^1$, 	as previously mentioned in Remark~\ref{rmk:lsc}, that is,
	\begin{multline}
	F_T[\rho]\leq\liminf_{n\to\ii}F_T[\rho_n]\\\text{if $\rho_n\to\rho$ strongly in $L^1(\R^d)$ with $\int\rho_n^{q}+T\int\rho_n|\log\rho_n|\leq C$.}
	\label{eq:lsc}
	\end{multline}
The theorem then follows immediately by letting
	$$\rho_n:=\frac{N}{\int_{B_n}\rho}\;\rho\1_{B_n}$$
	the truncation of $\rho$ over the ball of radius $n$. Note that $\rho_n\leq (1+o(1))\rho$. The sequence $\rho_n$ clearly satisfies the convergence properties of~\eqref{eq:lsc} and therefore the lower semi-continuity provides
	\begin{align*}
		F_T[\rho]
		\leq{}& \! \liminf_{n\to\ii} F_T[\rho_n] \\
		\leq{}&\liminf_{n\to\ii} \Bigg\{C_0N+ C_1 \myp[\bigg]{ \frac{N}{\int_{B_n}\rho} }^p\int_{B_n}\rho^p+C_2 \myp[\bigg]{ \frac{N}{\int_{B_n}\rho}}^q\int_{B_n}\rho^q \\
&\quad+T\frac{N}{\int_{B_n}\rho}\int_{B_n}\rho\log\rho+T\frac{N}{\int_{B_n}\rho} \log \myp[\bigg]{ \frac{N}{\int_{B_n}\rho}} \int_{B_n}\rho\\
&\quad +C_3 \myp[\bigg]{ \frac{N}{\int_{B_n}\rho} }^2\int_{B'_n}\rho^2\log\rho+2 \myp[\bigg]{\frac{N}{\int_{B_n}\rho}}^2 \log \myp[\bigg]{\frac{N}{\int_{B_n}\rho} } \int_{B'_n}\rho^2\Bigg\}\\
		={}& C_0N+C_1\int\rho^{p}+C_2\int\rho^q +T\int\rho\log\rho+C_3\int\rho^2(\log\rho)_+,
	\end{align*}
where $B'_n:=B_n\cap\big\{\rho\geq N^{-1}\int_{B_n}\rho\big\}$.

When $C_4>0$ the proof is similar. We need to use that $(1+|x|)/C\leq R(x),R_n(x)\leq C(1+|x|)$ for some $C>0$ (depending on $\rho$), where
$R_n(x)$ is the local radius of the truncated density $\rho_n$, which converges locally to $R$. The uniform bounds on $R$ and $R_n$ imply that we must work under the assumptions that $\int\rho(\log|x|)_+$ is finite (otherwise there is nothing to show). The limit follows from dominated convergence.

For the convenience of the reader, we conclude by quickly recalling the proof of the lower semi-continuity~\eqref{eq:lsc}. We consider an arbitrary sequence $\rho_n$ converging to $\rho$ strongly in $L^1$ and satisfying the bounds in~\eqref{eq:lsc}. It is known that there exists an optimal $\bP_n$ for $F_T[\rho_n]$ (but we could as well use a quasi-minimizer). From the upper bound we have $F_T[\rho_n]\leq C$ for some constant $C$ and therefore
	\begin{align*}
		C\geq{}& F_T[\rho_n]
		={} \mathcal{F}_T(\bP_n)\\
		={}& \int_{(\R^d)^N} \sum_{1\leq j<k\leq N}w(x_j-x_k) \, \bP_n+T\int\bP_n\log(N! \, \bP_n)\\
		={}& \int_{(\R^d)^N} \myp[\bigg]{ \sum_{1\leq j<k\leq N}w(x_j-x_k)+\kappa N} \bP_n+T\int\bP_n\log\left(\frac{\bP_n}{(\rho_n/N)^{\otimes N}}\right)\\
		&-\kappa N+T\int\rho_n\log\rho_n+T\log\frac{N!}{N^N}.
	\end{align*}
	The first term is non-negative from the stability property of $w$ and the second is a relative entropy, hence is also non-negative. We have thus proved that
	$$T\int\bP_n\log\bP_n\leq C(\rho,N,T)$$
	where the constant can depend on $\rho,N,T$ but not on $n$.
	On the other hand, we know that the sequence $(\bP_n)$ is tight, that is,
	$$\int_{\max|x_j|\geq R}\,\rd\bP_n\leq \int \sum_{j=1}^N\1(|x_j|\geq R)\,\rd\bP_n=\int_{|x|\geq R}\rho_n$$
	where the right side is small due to the strong convergence in $L^1$. After extraction of a subsequence, this implies $\int F\,\rd\bP_n\to\int F\,\rd\bP$ for every $F\in C^0_b$. Taking $F(x_1,...,x_N)=\sum_{j=1}^Nf(x_j)$ with $f\in C^0_b$, we find that $\int f\rho_{\bP_n}\to \int f\rho_{\bP}$, that is, $\rho_{\bP}=\rho$. In addition, we have (by convexity)
	\begin{equation}
		T\int\bP\log\bP\leq T\liminf_{n\to\ii}\int\bP_n\log\bP_n\leq C.
	\label{eq:entropy_wlsc}
	\end{equation}
	Hence $\bP$ is admissible for $F_T[\rho]$, and absolutely continuous with respect to the Lebesgue measure if $T>0$. We thus have
	$$\liminf_{n\to\ii}\int F\,\rd\bP_n\geq\int F\,\rd\bP$$
	for every measurable function $F\geq0$ if $T>0$ (using the absolute continuity of $\bP$) and for every lower semi-continuous function $F\geq0$ if $T=0$. This is satisfied for our interaction $w$ by Assumption~\ref{de:shortrangenew} and therefore we obtain as we wanted
	\begin{align*}
		\MoveEqLeft[8] \liminf_{n\to\ii}\int_{(\R^d)^N} \myp[\bigg]{ \sum_{1\leq j<k\leq N}w(x_j-x_k)+\kappa N }\rd\bP_n \\
		\geq{}& \int_{(\R^d)^N} \myp[\bigg]{ \sum_{1\leq j<k\leq N}w(x_j-x_k)+\kappa N } \rd\bP.
	\end{align*}
	Together with the entropy bound~\eqref{eq:entropy_wlsc} when $T>0$, this proves that
	$$\liminf_{n\to\ii}\left(F_T[\rho_n]+\kappa N\right)\geq F_T[\rho]+\kappa N$$
	which is the claimed lower semi-continuity~\eqref{eq:lsc}.
	\end{proof}

\section{Proof of Theorem~\ref{thm:hard-core2} in the hard-core case}\label{sec:proof_hard-core}

In this section we prove Theorem \ref{thm:hard-core2} concerning densities which are uniformly bounded in terms of the tight packing density $\rho_c(d)$. We start by constructing a trial state with constant density by averaging a periodic tight packing over translations. Such a uniform average of a periodic lattice is often called a ``floating crystal''~\cite{LewLieSei-19b,LewLieSei-19_ppt} in Physics and Chemistry. Finally, we estimate the entropic cost of ``geometrically localizing''~\cite{Lewin-11} this state to enforce the desired density.

\medskip

\noindent \textbf{Step 1. Constant density.}
We have assumed $\rho\leq (1-\eps)^dr_0^{-d}\rho_c(d)$. Let $\eta>0$ be a fixed small number which will later be chosen in terms of $\eps$. From the definition of $\rho_c(d)$ we can find a large cube $C_\ell=(-\ell/2,\ell/2)^d$ and $n=(1+2\eta)^{-d}r_0^{-d}\rho_c(d)\ell^d\in\N$ points $x_1^0,...,x_n^0\in C_\ell$ satisfying $|x^0_j-x^0_k|\geq r_0(1+\eta)$ for all $j\neq k$. We can also assume that no point is at a distance less than $r_0$ to the boundary of $C_\ell$. We are using here that the tight packing density for $r_0(1+\eta)$ is $(1+\eta)^{-d}r_0^{-d}\rho_c(d)>(1+2\eta)^{-d}r_0^{-d}\rho_c(d)$ and that the limit~\eqref{eq:sphere_packing} is the same for cubes and for balls.

Now, we replace each point $x_j^0$ by a smeared measure
$$\chi_j^0(x)=\frac{2^d}{(r_0\eta)^d}\chi \myp[\bigg]{ 2\frac{x-x_j^0}{r_0\eta} }$$
where $\chi=|B_1|^{-1}\1_{B_1}$. The smearing radius $\eta r_0/2$ has been chosen so that the supports of the $\chi^0_j$ remain at distance at least $r_0$.

Finally, we consider $(2K+1)^d$ copies of our system ($K\in\N$), repeated in a periodic fashion so as to form a very large cube $C_L=(-L/2,L/2)^d$ of side length $L=(2K+1)\ell$. In other words, we define the $N:=(2K+1)^dn$ points $x_j^k:=x_j^0+kL$ with $k\in\{-K,...,K\}^d$. The smeared measures $\chi_j^k$ are defined similarly. The state
$$\bP=\Pi_s\bigotimes_{\substack{j\in\{1,...,n\}\\ k\in\{-K,...K\}^d}} \chi_j^k$$
has the density $\rho=\sum_{j,k}\chi_j^k$ and the finite entropy
$$\int_{(\R^d)^N}\bP\log(N!\,\bP)=N\int_{\R^d}\chi\log\chi=N\log\left(\frac{2^d}{|B_1|r_0^d\eta^d}\right)$$
(recall $\Pi_s$ is the symmetrization operator  in~\eqref{eq:def_Pi_s}). Finally, we average over translations of the big cube and define the trial state
$$\tilde \bP=\frac1{\ell^d}\int_{C_\ell}\bP(\cdot+\tau)\,\rd\tau,$$
which has the density
$$\tilde\rho=\frac1{\ell^d}\sum_{j}\chi_j^0\ast\1_{C_L}.$$
The latter is constant, equal to $n/\ell^d=(1+2\eta)^{-d}r_0^{-d}\rho_c(d)$ well inside the large cube. Note that, by concavity, the entropy of $\tilde\bP$ can be estimated by that of $\bP$.

\medskip

\noindent \textbf{Step 2. Geometric localization.}
We assume for the rest of the proof that $\rho$ has a compact support and we choose $K$ large enough so that $\tilde\rho$ is constant on the support of $\rho$. Our estimates will not depend on $K$. One can then deduce the bound for general densities by adapting the proof of Theorem~\ref{thm:non_compact}, or by passing to the limit $K\to\ii$ in the formulas~\eqref{eq:trial_state_hard_core}--\eqref{eq:geom_localization_tensor} of the trial state.

We pick $\eta$ so that $(1-\eps)^d=(1+2\eta)^{-d}$, that is,
$$\eta=\frac\eps{2(1-\eps)}.$$
Then we have $\rho\leq \tilde \rho$ a.e. This enables us to consider the localization function
$$\theta:=\frac{\rho}{\tilde\rho}=\frac{\rho}{(1+2\eta)^{-d}r_0^{-d}\rho_c(d)}\leq1$$
and the $\theta$--localized state $\tilde\bP_{|\theta}$, which has the desired density $\theta\rho_{\tilde\bP}=\rho$.

We recall that the $\theta$--localization $\bQ_{|\theta}$ of a state $\bQ$ (with $0\leq\theta\leq1$) is the unique state which has the correlation functions $\rho^{(k)}=\rho^{(k)}_{\bQ}\theta^{\otimes k}$ for all $k$, see~\cite{HaiLewSol_2-09,Lewin-11,FouLewSol-18}. In our case we only need the definition for a tensor product since we  have by linearity
\begin{equation}
\tilde\bP_{|\theta}=\frac1{\ell^d}\int_{C_\ell}\bP(\cdot+\tau)_{|\theta}\,\rd\tau.
\label{eq:trial_state_hard_core}
\end{equation}
For a symmetric tensor product $\bQ=\Pi_s(q_1\otimes\cdots\otimes q_N)$ with probabilities $q_j$ of disjoint support, the $\theta$-localized state can be expressed as
\begin{multline}
\bQ_{|\theta}=\bigoplus_{n=0}^N\binom{N}{n} \frac{1}{N!}\sum_{\sigma\in\gS_N} (\theta q_{\sigma(1)})\otimes\cdots\otimes (\theta q_{\sigma(n)})\times\\
\times \left(1-\int\theta q_{\sigma(n+1)}\right)\cdots \left(1-\int\theta q_{\sigma(N)}\right).
\label{eq:geom_localization_tensor}
\end{multline}
We will need the following.

\begin{lemma}[Entropy of localization of tensor products]
Let $\bQ=\Pi_s(q_1\otimes\cdots\otimes q_N)$ be a symmetric tensor product, with $q_1,...,q_N$ probability measures of disjoint supports. For any $0\leq \theta\leq1$, we have
\begin{equation}
\cS(\bQ_{|\theta})=-\sum_j \int_{\R^d} (\theta q_{j})\log(\theta q_{j})-\sum_j \left(1-\int_{\R^d}\theta q_{j}\right)\log\left(1-\int_{\R^d}\theta q_{j}\right).
\label{eq:entropy_loc}
\end{equation}
In particular, we deduce
\begin{equation}
-\cS(\bQ_{|\theta})\leq\sum_j \int_{\R^d} (\theta q_{j})\log(\theta q_{j}).
\label{eq:entropy_loc_bd}
\end{equation}
\end{lemma}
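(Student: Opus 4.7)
The plan is to exploit the disjointness of the supports of the $q_j$ to reduce the entropy of $\bQ_{|\theta}$ to an explicit closed-form expression. Set $a_j := \int_{\R^d}\theta q_j \in [0,1]$ and, whenever $a_j>0$, $\tilde q_j := \theta q_j /a_j$, a probability measure supported inside $\supp q_j$.

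First, I would regroup the permutations in \eqref{eq:geom_localization_tensor} according to the set $S=\{\sigma(1),\ldots,\sigma(n)\}\subseteq \{1,\ldots,N\}$ of occupied indices. For each such $S$ there are exactly $n!(N-n)!$ permutations producing it; the $n!$ orderings of $S$ assemble into $n!\,\Pi_s(\bigotimes_{j\in S}\tilde q_j)$ after normalization, while the ``inactive'' factors collapse to $\prod_{j\notin S}(1-a_j)$. The combinatorial prefactor $\binom{N}{n}\,n!(N-n)!/N!=1$ then gives
$$\bQ_{|\theta} = \bigoplus_{n=0}^N \sum_{\substack{S\subseteq\{1,\ldots,N\}\\ |S|=n}} p_S\,\Pi_s\!\left(\bigotimes_{j\in S}\tilde q_j\right),\qquad p_S:=\prod_{j\in S}a_j\prod_{j\notin S}(1-a_j).$$
This is a genuine probability distribution on subsets since $\sum_S p_S=\prod_j(a_j+(1-a_j))=1$.

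Next, I would use disjointness of the $\supp q_j$ twice. For $S\neq S'$ the measures $\Pi_s(\bigotimes_{j\in S}\tilde q_j)$ and $\Pi_s(\bigotimes_{j\in S'}\tilde q_j)$ have disjoint supports in $(\R^d)^{|S|}$, as any index in the symmetric difference $S\triangle S'$ labels which sector the corresponding coordinates must lie in. Within a single $S$ of cardinality $n$, the $n!$ permuted tensor products making up $\Pi_s(\bigotimes_{j\in S}\tilde q_j)$ themselves have pairwise disjoint supports, and on each such piece one has pointwise $n!\,\Pi_s(\bigotimes_{j\in S}\tilde q_j)=\prod_{j\in S}\tilde q_j$ after the appropriate relabelling. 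This yields at once
$$\int_{(\R^d)^n} \Pi_s\!\left(\bigotimes_{j\in S}\tilde q_j\right)\log\!\left(n!\,\Pi_s\!\left(\bigotimes_{j\in S}\tilde q_j\right)\right)=\sum_{j\in S}\int_{\R^d}\tilde q_j\log \tilde q_j,$$
so the grand-canonical entropy \eqref{eq:gcentropy} of $\bQ_{|\theta}$ decouples fully across the subset labels.

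Finally, I would swap the sums over $S$ and $j$, using the marginal identities $\sum_{S\ni j}p_S=a_j$ and $\sum_{S\not\ni j}p_S=1-a_j$, to obtain
$$-\cS(\bQ_{|\theta})=\sum_{j=1}^N\bigl[a_j\log a_j+(1-a_j)\log(1-a_j)\bigr]+\sum_{j=1}^N a_j\int_{\R^d}\tilde q_j\log \tilde q_j.$$
Expanding $a_j\int \tilde q_j\log \tilde q_j = \int(\theta q_j)\log(\theta q_j) - a_j\log a_j$ causes the $a_j\log a_j$ terms to cancel and reproduces exactly \eqref{eq:entropy_loc}. The bound \eqref{eq:entropy_loc_bd} then follows from $(1-a_j)\log(1-a_j)\leq 0$ for $a_j\in[0,1]$. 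The only real obstacle is the bookkeeping: checking carefully that the different permutation and subset contributions are pairwise disjoint, and that the $n!$ inside the grand-canonical entropy is precisely what is needed to collapse the symmetrization into an ordinary tensor product on each such piece. Once this is nailed down, the remainder is purely algebraic.
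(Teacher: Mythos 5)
Your proof is correct and follows essentially the same route as the paper: both rewrite $\bQ_{|\theta}$ as a mixture of mutually singular pieces indexed by which particles survive the localization (the paper uses ordered tuples $\tfrac{1}{n!}\sum_{j_1\neq\cdots\neq j_n}$, you use subsets $S$ with Bernoulli weights $p_S$, which is the same decomposition), observe that the $n!$ in the grand-canonical entropy cancels against the symmetrization, and expand the logarithm using disjointness of the supports. Your explicit verification that the cross terms vanish and the marginal identities $\sum_{S\ni j}p_S=a_j$ make the bookkeeping slightly more transparent, but there is no substantive difference.
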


As a side remark we also note also that~\eqref{eq:entropy_loc} provides
\begin{multline*}
\cS(\bQ_{|\theta})+\cS(\bQ_{|1-\theta})\\
=\cS(\bQ)-\sum_j \left[\left(1-\int \theta q_{j}\right)\log\left(1-\int \theta q_{j}\right)+\left(\int \theta q_{j}\right)\log\left(\int \theta q_{j}\right)\right]\\
-\int\rho \Big(\theta\log\theta +(1-\theta)\log(1-\theta)\Big).
\end{multline*}
The additional terms are positive and therefore we recover the subadditivity of the entropy $\cS(\bQ)\leq \cS(\bQ_{|\theta})+\cS(\bQ_{|1-\theta})$~\cite[Appendix A]{HaiLewSol_2-09}.

\begin{proof}
Each tensor product $(\theta q_{\sigma(1)})\otimes\cdots\otimes (\theta q_{\sigma(n)})$ appears exactly $(N-n)!$ times with the same weight in~\eqref{eq:geom_localization_tensor}. We can thus write it in the better form
$$\bQ_{|\theta}=\bigoplus_{n=0}^N\frac1{n!}\sum_{j_1\neq\cdots \neq j_n}(\theta q_{j_1})\otimes\cdots\otimes (\theta q_{j_n})\prod_{k\notin\{j_1,...,j_n\}}\left(1-\int\theta q_{k}\right)$$
where now the terms all have disjoint supports. We obtain that the entropy equals
\begin{multline*}
\cS(\bQ_{|\theta})=-\sum_{n=0}^N\frac1{n!}\int_{\R^{dn}}\sum_{j_1\neq\cdots \neq j_n}(\theta q_{j_1})\otimes\cdots\otimes (\theta q_{j_n})\prod_{k\notin\{j_1,...,j_n\}}\left(1-\int\theta q_{k}\right)\times\\
\times \log \myt[\bigg]{ (\theta q_{j_1})\otimes\cdots\otimes (\theta q_{j_n})\prod_{k\notin\{i_1,...,i_n\}}\left(1-\int\theta q_{k}\right) }.
\end{multline*}
Note that the $n!$ in the logarithm simplifies with the $1/n!$. Expanding the logarithm and collecting the terms we obtain the claimed formula.
\end{proof}

In our case, we deduce by concavity that
\begin{align*}
-\cS(\tilde\bP_{|\theta})&\leq-\frac1{\ell^d}\int_{C_\ell}\cS\big(\bP(\cdot-\tau)_{|\theta}\big)\,\rd\tau\\
&\leq\frac1{\ell^d}\sum_{j,k}\int_{C_\ell}\int_{\R^d}\theta(x)\chi_j^k(x-\tau)\log\big(\theta(x)\chi_j^k(x-\tau) \big)\,\rd\tau\,\rd x\\
&=\frac1{\ell^d}\sum_{j,k}\int_{C_\ell}\int_{\R^d}\theta(x)\chi_j^k(x-\tau)\log\frac{\rho(x)\chi_j^k(x-\tau)}{(1+2\eta)^{-d}r_0^{-d}\rho_c(d)}\,\rd\tau\,\rd x.
\end{align*}
We estimate $\chi_j^k$ in the logarithm by its supremum $\|\chi_j^k\|_\ii=\frac{2^d}{(r_0\eta)^d|B_1|}$ and use that
$$\frac{\theta(x)}{\ell^d}\sum_{j,k}\int_{C_\ell}\chi_j^k(x-\tau)\,\rd\tau=\theta(x)\tilde \rho(x)=\rho(x).$$
We obtain
\begin{align*}
-\cS(\tilde\bP_{|\theta})&\leq\log\left(\frac{(1+2\eta)^{d}}{\eta^dv_c(d)}\right)\int\rho+\int\rho\log\rho\\
&=\log\left(\frac{2^d}{\eps^dv_c(d)}\right)\int\rho+\int\rho\log\rho.
\end{align*}
On the other hand, the energy bound~\eqref{eq:simple_hard_core_T0} applies since we still have $|x_j-x_k|\geq r_0$ on the support of the localized state $\tilde\bP_{|\theta}$. This concludes the proof of Theorem~\ref{thm:hard-core2}.\qed

\bigskip

\paragraph*{\textbf{Data availability.}} Data sharing not applicable to this article as no datasets were generated or analysed during the current study.

\medskip

\paragraph*{\textbf{Conflict of interest.}} The authors declare that they have no conflict of interest.

\newcommand{\etalchar}[1]{$^{#1}$}


\end{document}